\newcommand\textaltcolon{\ensurestackMath{\stackon[0.5ex]{\circ}{\circ}}}
\newcommand\altcolon{\savestack\Tmp{\raisebox{-2.7pt}{$\textaltcolon$}}%
  \dp\Tmpcontent=\dimexpr\dp\Tmpcontent-2.7pt\relax%
  \mathrel{\scalerel*{\Tmp}{:}}}
\theoremstyle{plain}
\newtheorem{theorem}{Theorem}[section]
\theoremstyle{plain}
\newtheorem{proposition}[theorem]{Proposition}
\newtheorem{remark}[theorem]{Remark}
\newtheorem{example}[theorem]{Example}
\newtheorem{lemma}[theorem]{Lemma}
\theoremstyle{definition}
\newtheorem{definition}[theorem]{Definition}
\numberwithin{equation}{section}
\let\frontmatter\relax
\def\mainmatter{\def\baselinestretch{1.1}\normalfont}
\renewcommand{\section}{\@startsection
{section}
{1}
{\z@}
{-\baselineskip}
{0.8\baselineskip}
{\centering\scshape\large}} 
\renewcommand{\subsection}{\@startsection
{subsection}
{2}
{\z@}
{-0.8\baselineskip}
{0.5\baselineskip}
{\normalfont \bf \normalsize}} 
\renewcommand{\subsubsection}{\@startsection
{subsubsection}
{3}
{\z@}
{-0.8\baselineskip}
{0.5\baselineskip}
{\normalfont \it \normalsize}} 
\let\emptyset\varnothing
\newcommand{\eprint}[1]{\href{http://arxiv.org/abs/#1}{\texttt{arXiv\string:\allowbreak#1}}}
\def\baselinestretch{1.1}
\def\ket#1{\left|\,#1\,\right\rangle}
\def\baselinestretch{1.1}
\def\bea{\begin{eqnarray*}}
\def\eea{\end{eqnarray*}}
\definecolor{darkgreen}{rgb}{0.1, 0.8, 0.1}
\begin{document}
\frontmatter

\begin{flushright}
CALT-2019-025
\end{flushright}

\bigskip

\title{Super Quantum Airy Structures}

\author{Vincent Bouchard}
\address{Department of Mathematical \& Statistical Sciences,
University of Alberta, 632 CAB\\
Edmonton, Alberta, Canada T6G 2G1}
\email{vincent.bouchard@ualberta.ca}

\author{Pawe\l  \ Ciosmak}
\address{Faculty of Mathematics, Informatics and Mechanics, University of Warsaw, ul. Banacha
2, 02-097 Warsaw, Poland}
\email{p.ciosmak@mimuw.edu.pl}

\author{Leszek Hadasz}
\address{M. Smoluchowski Institute of Physics, Jagiellonian University, ul. \L ojasiewicza 11,
30-348 Krak\'ow, Poland}
\email{leszek.hadasz@uj.edu.pl}

\author{Kento Osuga}
\address{School of Mathematics and Statistics, University of Sheffield, The Hicks Building,
Hounsfield Road, Sheffield, S3 7RH, United Kingdom}
\email{K.Osuga@sheffield.ac.uk}

\author{B\l a\.zej Ruba}
\address{M. Smoluchowski Institute of Physics, Jagiellonian University, ul. \L ojasiewicza 11,
30-348 Krak\'ow, Poland}
\email{blazej.ruba@doctoral.uj.edu.pl}

\author{Piotr Su\l kowski}
\address{Faculty of Physics, University of Warsaw, ul. Pasteura 5, 02-093 Warsaw, Poland \&
Walter Burke Institute for Theoretical Physics, California Institute of Technology, Pasadena, CA 91125, USA}
\email{psulkows@fuw.edu.pl}



\begin{abstract}

We introduce super quantum Airy structures, which provide a supersymmetric generalization of quantum Airy structures. We prove that to a given super quantum Airy structure one can assign a unique set of free energies, which satisfy a supersymmetric generalization of the topological recursion. We reveal and discuss various properties of these supersymmetric structures, in particular their gauge transformations, classical limit, peculiar role of fermionic variables, and graphical representation of recursion relations. Furthermore, we present various examples of super quantum Airy structures, both finite-dimensional -- which include well known superalgebras and super Frobenius algebras, and whose classification scheme we also discuss -- as well as infinite-dimensional, that arise in the realm of vertex operator super algebras. 

\end{abstract}

\maketitle
\tableofcontents
\mainmatter

\section{Introduction}

In recent years we have learnt that solutions to a plethora of problems in physics and mathematics that involve some form of quantization arise from a universal system of recursive equations, referred to as the Chekhov-Eynard-Orantin topological recursion \cite{CE,EO,EO2}. The topological recursion was originally discovered in the realm of matrix models as a way of solving loop equations, which enables computation of the free energy to all orders in the large $N$ expansion, based on the information encoded in the spectral curve of the matrix model \cite{CE}. Soon after, Eynard and Orantin realized that the topological recursion can be formulated independently of matrix models, as a tool that assigns symplectic invariants to a large class of algebraic curves (which play the role of spectral curves in the context of matrix models) \cite{EO,EO2}. More recently, a more general and abstract reformulation of the topological recursion was provided in the form of quantum Airy structures \cite{KS,ABCD,Hadasz:2019vak,Ruba:2019} (see also the lecture notes \cite{Notes by Gaetan}). 

To date, the study of the topological recursion proceeded along two (of course interrelated) main lines. First, various generalizations thereof have been discovered, such as those mentioned above, as well as its $\beta$-deformed version \cite{Chekhov:2009mm}, the formulation for curves with higher ramifications and the global version \cite{BHLMR,BE}, the blobbed version \cite{Borot:2015hna}, the geometric recursion \cite{GeometricRecursion}, etc. Second, a lot of effort has been put in identifying various systems and problems whose solutions are captured by the topological recursion. This led to the simplification or better understanding of the structure of previous solutions, or to new solutions of those problems. The topological recursion turned out to play such a role in areas such as topological strings and Gromov-Witten theory \cite{Bouchard:2007ys,Bouchard:2011ya,DBOSS,EO3,FLZ}, the theory of quantum curves \cite{Gukov:2011qp,Mulase:2012tm,Bouchard:2016obz}, knot theory \cite{Dijkgraaf:2010ur,Borot:2012cw}, Hurwitz theory \cite{Bouchard:2007hi,BSLM,Eynard:2009xe,Do:2012udk,Alexandrov:2018ncq}, just to name a few.

The main aim of this paper is to follow the first line of research mentioned above, and to introduce a supersymmetric generalization of quantum Airy structures, which we call super quantum Airy structures. After defining a super quantum Airy structure, as in the original (non-supersymmetric) case we assign to it free energies, and prove their existence and uniqueness. We write down recursion relations satisfied by these free energies: these recursion relations generalize the original (non-supersymmetric) topological recursion. Among various interesting features of super quantum Airy structures, we reveal a peculiar role played by fermionic variables. We provide several finite- and infinite-dimensional examples of super quantum Airy structures, and hint how their classification could be conducted. 
Furthermore, we expect that super quantum Airy structures should have interesting applications in various contexts, which we briefly mention in what follows\footnote{For the subsequent progress in this context see \cite{STR}}. 

In view of the matrix model origin of the (non-supersymmetric) topological recursion and its reformulation in terms of quantum Airy structures, one may expect similar connections in the supersymmetric context. However, such relations are obscure, even though corresponding supersymmetric structures in matrix models are known. Indeed, supersymmetric generalizations of matrix models, referred to supereigenvalue models, have been introduced and discussed some time ago \cite{AlvarezGaume:1991jd,Becker:1992rk}, and also more recently \cite{Bouchard:2018anp,Ciosmak:2017ofd,Ciosmak:2016wpx,Ciosmak:2017omd,Osuga:2019uqc}. By construction, loop equations for such supereigenvalue models can be rewritten in the form of super-Virasoro constraints. This generalizes the reformulation of loop equations in terms of Virasoro constraints in the non-supersymmetric case, and thus one might hope that super-Virasoro constraints for supereigenvalue models lead immediately to supersymmetric topological recursion. However, in \cite{Bouchard:2018anp,Osuga:2019uqc} it was shown that such a generalization is not automatic: one can indeed write down a recursive system that determines the partition function of a supereigenvalue model, but it is augmented by an auxiliary equation, which does not have a simple interpretation. In a sense, this makes the super quantum Airy structures that we introduce here even more interesting, and revealing their meaning in the context of supereigenvalue models is an important task. 

A similar situation arises in the context of quantum curves. On the one hand, for large classes of spectral curves it was shown that the topological recursion can be used to reconstruct the wave-function associated to a given classical curve, and at the same time the corresponding quantum curve \cite{Gukov:2011qp,Mulase:2012tm,Bouchard:2016obz}. On the other hand, various types of supersymmetric quantum curves (also called super quantum curves) have been constructed recently in the formalism of supereigenvalue models, and reinterpreted from the conformal field theory point of view \cite{Ciosmak:2017ofd,Ciosmak:2016wpx,Ciosmak:2017omd}. Therefore one might expect that super quantum Airy structures provide a more general framework to construct super quantum curves, and to develop their theory further. We hope to address this issue in future work. 

Super quantum Airy structures may also have interesting connections with various problems in enumerative geometry: we postpone this analysis for future work. They may be related to enumerative problems involving odd cohomology classes, such as Gromov-Witten theory of non-singular target curves. For instance, in \cite{Okounkov:2003rf} Okounkov and Pandharipande show that these invariants are encoded in a set of bosonic and fermionic operators forming a representation of a super-Virasoro algebra. These operators could potentially be related to super quantum Airy structures. 

Super quantum Airy structures could also play a role in the theory surrounding Mirzakhani's recursion relations for the Weil-Petersson volumes of the moduli spaces of Riemann surfaces \cite{Mirzakhani}. In the non-supersymmetric case, it was shown that Mirzakhani's recursion relations can be transformed into the form of the topological recursion \cite{Eynard:2007fi,Mulase:2006baa}, which plays a fundamental role in the connection between Jackiw-Teitelboim (JT) gravity and matrix models \cite{SSS}. Very recently, Stanford and Witten generalized this fascinating story to the supersymmetric realm \cite{Stanford:2019vob}. In the process, they found a generalization of Mirzakhani's recursion relations to the volumes of the moduli spaces of super Riemann surfaces. Those new recursion relations could presumably be related to the supersymmetric topological recursion that we present here, and the corresponding super quantum Airy structures. 

We discuss many more open problems and potential applications of super quantum Airy structures in section \ref{sec-conclusion}, which we believe deserve further investigation.

\subsection{Outline}

The plan of this paper is as follows. In section \ref{sec-sqas} we define super quantum Airy structures, and prove the existence and uniqueness of the corresponding free energies. We also show that these free energies satisfy a recursion relation, which is a supersymmetric version of the bosonic topological recursion, and we provide its graphical interpretation. We also introduce and discuss gauge transformations and the classical limit of super quantum Airy structures. 

In section \ref{sec-finite} we present several finite-dimensional examples of super quantum Airy structures. We also discuss a classification scheme for such finite-dimensional structures. We illustrate this discussion by presenting the $\mathfrak{osp}(1|2)$ example, and conclude this section by constructing super quantum Airy structures from super Frobenius algebras. 

In section \ref{sec:VOSA} we construct examples of infinite-dimensional super quantum Airy structures. The construction follows along the lines of \cite{BBCCN,Milanov}. We construct our examples as untwisted and $\mathbb{Z}_2$-twisted representations for the free boson-fermion vertex operator super algebra  (VOSA). In the process, we also generalize slightly the bosonic construction of \cite{BBCCN}, by considering larger families of subalgebras of the algebra of modes to construct our quantum Airy structures (from the point of view of vertex operator algebras, this should be related to the construction of Whittaker modules for the (super-)Virasoro algebra). As a byproduct, we obtain a realization of the ``topological recursion without branched covers'' of \cite{ABCD} in terms of untwisted representations for the free boson vertex operator algebra. 

We conclude the paper with section \ref{sec-conclusion}, where we list several open problems and potential applications of super quantum Airy structures. Finally, for completeness we provide a computational proof of existence of the free energies associated to a super quantum Airy structure in appendix \ref{proof of existence}.


\subsection*{Acknowledgements}

We thank Ga\"etan Borot, Nitin Chidambaram, Thomas Creutzig, and Motohico Mulase for inspiring discussions. This work was supported by the ERC Starting Grant no. 335739 ``Quantum fields and knot homologies'' funded by the European Research Council under the European Union's Seventh Framework Programme, and the TEAM programme of the Foundation for Polish Science co-financed by the European Union under the European Regional Development Fund (POIR.04.04.00-00-5C55/17-00). V.B. and K.O. acknowledge the support of the Natural Sciences and Engineering Research Council of Canada. The work of KO is also supported in part by the Engineering and Physical Sciences Research Council under grant agreement ref. EP/S003657/1. The work of P.C. is also supported by the NCN Preludium grant no.  
2016/23/N/ST1/01250 ``Quantum curves and Schr\"odinger equations in matrix models'' funded by National Science Centre in Poland. The work of B.R. was supported by the Faculty of Physics, Astronomy and Applied Computer Science grant MSN 2019 (N17/MNS/000040) for young scientists and PhD students.


\section{Super Quantum Airy Structures}     \label{sec-sqas}

In this section we define super quantum Airy structures. We associate to them a unique free energy, which can be calculated recursively. We study their classical limits, and explain how super quantum Airy structures can be obtained as quantizations of super classical Airy structures. We also provide a graphical interpretation for the recursive computation of the free energy associated to a super quantum Airy structure.

\subsection{Definition of Super Quantum Airy Structures}

Quantum Airy structures were introduced in \cite{KS,ABCD} (see also \cite{Notes by Gaetan}) as an abstract framework underlying the Chekhov-Eynard-Orantin topological recursion \cite{CE,EO,EO2}. Just as the Chekhov-Eynard-Orantin topological recursion can be generalized to spectral curves with arbitrary ramification \cite{BHLMR,BE}, quantum Airy structures admit a natural generalization as higher quantum Airy structures: those were studied in \cite{BBCCN}.\footnote{In the nomenclature used in this paper, \emph{higher quantum Airy structures} would be simply \emph{quantum Airy structures}, while the original \emph{quantum Airy structures} would be particular quantum Airy structures that are \emph{quadratic}.} In this section, we propose a further generalization, by introducing fermionic degrees of freedom.

\subsubsection{Background and Notation}

The starting point of quantum Airy structures is a vector space $V$, with dimension either finite or countably infinite\footnote{In fact this restriction is not essential, but it simplifies the notation a bit. It is satisfied in all interesting examples explored until now.}, over ${\mathbb K} = {\mathbb R}$ or ${\mathbb C}$. We introduce fermionic degrees of freedom by considering instead a super vector space $V$: that is, a $\mathbb{Z}_2$-graded vector space $V$. We denote the even and odd subspaces of $V$  by $V_0$ and $V_1$, and elements of $V_0$ and $V_1$ will be called respectively even and odd. We write  $|v| = \alpha$ for the parity of homogeneous elements $v \in V_{\alpha}$, $\alpha=0,1$.

We will only use bases composed of homogeneous elements. For a basis $\{x^i\}_{i \in I}$ in $V$, we denote by $\{ y_i \}_{i \in I}$ the corresponding dual set in $V^* = {\rm Hom}_{\mathbb K}(V,{\mathbb K})$. Here, $I = \{1, 2, \ldots \}$ is a (possibly countably infinite) index set. We abbreviate $|i| = |x^i|$.

\begin{remark}
In order to distinguish the ${\mathbb Z}_2$-grading on $V$ from a $\mathbb Z$-grading of different structures that will appear in what follows, we denote the latter ones with a superscript. For instance, for any $\mathbb Z$-graded superalgebra $A$ we have:
\begin{subequations}
\begin{gather}
A = \bigoplus_{n \in \mathbb Z} A^n, \\
A^n \cdot A^m  \subseteq A^{n+m}.
\end{gather}
\end{subequations}
We also define $A^{\leq n} = \bigoplus_{m \leq n} A^m$ and $A^{\geq n} = \bigoplus_{m \geq n} A^m$. For any $a \in A$ we write $a = \sum_n a^n$ with $a^n \in A^n$. Moreover we set $a^{\leq n} = \sum_{m \leq n} a^m$ and $a^{\geq n} = \sum_{m \geq n} a^m$.
\end{remark}

We define ${\mathbb K}[V, \hbar]$ as the space of all polynomials in $x^i$ and $\hbar$. Similarly, we let $\mathcal W_{\hbar}(V)$ be the space of differential operators which can be written as a sum of finitely many terms of the form
\begin{equation}
\hbar^{m+k} x^{i_1} \ldots x^{i_n} c_{i_1\ldots i_n}^{j_1 \ldots j_m} \partial_{j_1}\ldots\partial_{j_m},
\label{eq:Weyl_terms}
\end{equation}
where for each fixed $j_1,...,j_m$ there exist finitely many $i_1,...,i_n$ such that $c_{i_n...i_1}^{j_m ... j_1} \neq 0$.\footnote{Here and henceforth we use the Einstein summation convention, in which repeated indices are summed over.}  This condition does not depend on the choice of basis. It is equivalent to the statement that the set of coefficients $c$ represents an element of $\mathrm{Hom}(V^{\otimes m}, V^{\otimes n})$.

We introduce a $\mathbb Z$-grading on $\mathcal W_{\hbar}(V)$ by declaring an expression of the form (\ref{eq:Weyl_terms}) to be homogeneous of degree $n+m+2k$. In particular we have
\begin{equation}
\text{deg}(x^i) = 1, \qquad \text{deg}(\hbar \partial_i) = 1, \qquad \text{deg}(\hbar) = 2.\label{degree}
\end{equation}
The same rule defines a grading on the space ${\mathbb K}[V, \hbar]$.

So far we have defined the space $\mathcal W_{\hbar}(V)$ of differential operators acting on ${\mathbb K}[V, \hbar]$. An element of $\mathcal W_{\hbar}(V)$ may be written as a sum of finitely many terms of the form \eqref{eq:Weyl_terms}. We will also need to consider formal series in variables $\hbar, x^i$, and act on these with operators which are infinite sums of terms of the form \eqref{eq:Weyl_terms}. In order to define this precisely, we will introduce a~topology on ${\mathbb K}[V, \hbar]$.

Let $R$ be a ring and $I \subseteq R$ an ideal. We define on $R$ the $I$-adic topology  by declaring the collection $\{ I^n \}_{n=0}^{\infty}$ to be the neighbourhood basis of zero. With this topology $R$ is a topological ring\footnote{This means that $R$ is equipped with a topology such that addition and multiplication, regarded as maps $R \times R \to R$ (with $R \times R$ given the product topology), are continuous.}. It satisfies the Hausdorff axiom if and only if $\bigcap_{n=0}^{\infty} I^n = \{ 0 \}$.

Using this notion we equip ${\mathbb K}[V,\hbar]$ with a ${\mathbb K}[V,\hbar]^{\geq 1}$-adic topology and denote its completion with respect to this topology by ${\mathbb K}[[V, \hbar]]$.

\begin{example}
To see explicitly what this means, let's take the simplest example: $V \cong \mathbb K$. Consider the algebra ${\mathbb K}[V]$ of polynomials in $x$ together with the ideal $I = {\mathbb K}[V]^{\geq 1}$ of polynomials with vanishing constant term. The difference of two polynomials $p,q \in {\mathbb K}[V]$ belongs to $I^{n+1}$ if $p(x)$ and $q(x)$ agree up to the term of the order $x^{n}$. The space ${\mathbb K}[[V]]$
obtained by completing ${\mathbb K}[V]$ with respect to the $I$-adic topology contains, besides polynomials, also formal power series in $x$.
\end{example}

$\mathcal W_{\hbar}(V)$ can now be regarded as a space of operators on ${\mathbb K}[[V, \hbar]]$ and endowed with the topology of pointwise convergence. This means that a generalized sequence $T_{\alpha}$ in $\mathcal W_{\hbar}(V)$ converges to an element $T$ if and only if for every $f \in {\mathbb K}[[V, \hbar]]$ we have $T_{\alpha} f \to Tf$ in $\mathbb K[[V, \hbar]]$. We claim that with this topology, multiplication on $\mathcal W_{\hbar}(V)$ is continuous. Indeed, let $T_{\alpha}$ and $S_{\alpha}$ be two generalized sequences in $\mathcal W_{\hbar}(V)$, with $T_{\alpha} \to T$ and $S_{\alpha} \to S$. Choose $f \in \mathbb K[[V, \hbar]]$. We have
\begin{equation}
(S_{\alpha} T_{\alpha} - ST) f = S_{\alpha}(T - T_{\alpha}) f + (S_{\alpha} - S) Tf.
\end{equation}
By definition, $(S_{\alpha} -S) T f \to 0$. Now given an $n \in \mathbb N$, there exists $\alpha_0$ such that $(T-T_{\alpha})f$ belongs to $\mathbb K[[V, \hbar]]^{\geq n}$ for $\alpha \geq \alpha_0$. Then also $S_{\alpha}(T-T_{\alpha})f \in \mathbb K[[V, \hbar]]^{\geq n}$ for $\alpha \geq \alpha_0$, so $(S_{\alpha} T_{\alpha} - ST) f \to 0$. Since $f$ was arbitrary, we obtain $S_{\alpha} T_{\alpha} \to S T$. Continuity of multiplication on $\mathcal W_{\hbar}(V)$ guarantees that it extends uniquely to a continuous multiplication on the completion, which will be denoted by $\widehat{\mathcal W_{\hbar}}(V)$. Explicitly, one has $TS = \lim_{n \to \infty} T^{\leq n} S^{\leq n} = \sum_{n=0}^{\infty} \sum_{k=0}^n T^k S^{n-k}$ for $T, S \in \widehat{\mathcal W_{\hbar}}(V)$.

\subsubsection{Super Quantum Airy Structures}

We are now ready to define super quantum Airy structures. To this end, let us introduce a little more notation. Let $V$ be a super vector space. As above, we choose a basis $\{x^i \}_{i \in I}$ in $V$, with $I = \{1,2,\ldots \}$, and the corresponding dual set $\{y_i\}_{i \in I}$ in $V^*$. Let $\widetilde V = V \oplus {\mathbb K}^{0|1}$. We let $x^0$ be a~basis for ${\mathbb K}^{0|1}$. In other words, $\widetilde V$ has one more fermionic dimension than $V$.

To clearly distinguish between $V$ and $\widetilde V$, we denote indices which take value in the set $\{0,1,2,\ldots\}$ by small letters from the beginning of the alphabet,
i.e.\ $a,b,c,d = 0,1,2,\ldots$, while the indices denoted by letters from the ``middle'' of the alphabet, $i,j,k,$  take values in the set $I=\{1,2,\ldots\}.$ Consequently,
we denote by $\{x^a\}_{a \geq 0}$ a~basis in $\tilde V = {\mathbb K}^{0|1} \oplus V$.

\begin{definition}\label{d:SQAS}
Let $V$ be a super vector space, and $\widetilde V = V \oplus {\mathbb K}^{0|1}$.  A \emph{super quantum Airy structure} is a pair $(V,L)$, with $L: \ V^* \to \widehat{\mathcal W_{\hbar}}(\widetilde V)$ an even (that is, grade-preserving) continuous linear operator such that:
\begin{enumerate}
\item The left ideal $\mathfrak L \subseteq \widehat{\mathcal W_{\hbar}}(V)$ generated by $L(V^*)$ is involutive, i.e. $[\mathfrak L, \mathfrak L] \subseteq \hbar \cdot \mathfrak L$,
\item $L(y_i)^{\leq 1} = \hbar \partial_i$.
\end{enumerate}
Here $[\cdot, \cdot]$ is the super-commutator, which for two homogeneous elements $v, v' \in V^*$ is given by
\begin{equation}
[L(v), L(v')] = L(v) L(v') - (-1)^{|v| |v'|} L(v) L(v').
\end{equation}
\end{definition}

It is convenient to abbreviate $L_i =  L(y_i)$. The condition (1) guarantees that
\begin{equation}
[L_i, L_j] = \hbar f_{ij}^k L_k
\end{equation}
for some elements $f^k_{ij} \in \widehat{\mathcal W_{\hbar}}(V)$. Continuity of $L$ implies that for fixed $k$ there are only finitely many $i,j$ such that $f_{ij}^k \neq 0$. It also follows from continuity of $L$ that all $L(\phi)$ are determined by $L_i$
\begin{equation}
L(\phi) = \lim_{J \in \mathcal P_{\mathrm{fin}}(I)} \sum_{j \in J} \phi(x^j) L_j,
\label{eq:L_continuity}
\end{equation}
where $\mathcal P_{\mathrm{fin}}(I)$ is the set of all finite subsets of $I$, ordered by inclusion.

\begin{remark}
We note that in the definition of super quantum Airy structures, the domain is the super vector space $V^*$, while the image consists of differential operators acting on the space ${\mathbb K}[[{\widetilde V}, \hbar]]$, where $\widetilde V = V \oplus {\mathbb K}^{0|1}$. In other words, the linear operator $L$ acts on a space with one more fermionic dimension than $V^*$. That is, the operators $\{L_i \}_{i \geq 1}$ can depend on the extra fermionic variable $x^0$, but there is no operator $L_0$.
This possibility is a peculiar feature of super quantum Airy structures, which is not present for traditional quantum Airy structures. It turns out to be crucial in many interesting examples of super quantum Airy structures.
\end{remark}

\begin{remark}
We remark that in the particular case where $L(V^*) \subset  \widehat{\mathcal W_{\hbar}}(V) \subset  \widehat{\mathcal W_{\hbar}}(\widetilde V)$ (i.e. the $L_i$ do not depend on the extra fermionic variable $x^0$ in $\widetilde V$), and $V = V_0$ ($V$ is an even vector space with no fermionic dimension), Definition \ref{d:SQAS} becomes the standard definition of higher Airy structures of \cite{BBCCN}. We will call such super quantum Airy structures \emph{bosonic}.
\end{remark}

The definition \ref{d:SQAS} is rather general, as it allows operators that are infinite sums of terms of the form \eqref{eq:Weyl_terms}, and also differential operators of infinite order. Nonetheless, in practice we will mostly consider finite order differential operators.

\begin{definition}
Let $(V,L)$ be a super quantum Airy structure. If $L(\phi) \subseteq \widehat{ \mathcal W_{\hbar}}(\widetilde V)^{\leq n}$ for some positive integer $n$, we call the smallest such $n$ the \emph{order} of $L$. If $n=2$, we say that the super quantum Airy structure is \emph{quadratic}. If there is no such $n$, we say that the super quantum Airy structure has \emph{infinite order}.
\end{definition}

We will also be interested in the particular case of super quantum Airy structures where the stronger requirement that the $L_i$ themselves span a Lie superalgebra is satisfied, instead of the milder constraint (1) in Definition \ref{d:SQAS}.

\begin{definition}
We say that a super quantum Airy structure $(V,L)$ is \emph{subalgebraic} if the stronger requirement that $[L(V^*), L(V^*)] \subseteq \hbar \cdot  L(V^*)$ is satisfied. In other words, the $L_i$ span a Lie superalgebra.
\end{definition}

Condition $[L(V^*), L(V^*)] \subseteq \hbar \cdot  L(V^*)$ is equivalent to existence of $f_{ij}^k \in \mathbb K$ such that
\begin{equation}
[L_i,L_j] = \hbar f_{ij}^k L_k.
\end{equation}
Continuity of $L$ implies that for fixed $k$ there are finitely many $i, j$, such that $f_{ij}^k \neq 0$. This means that if we endow $V$ with the discrete topology, and $V^*$ with the topology of pointwise convergence, then the expression $[y_i,y_j] = f_{ij}^k y_k$ extends uniquely to a continuous Lie bracket on $V^*$.

In some cases we will also be interested in restricting to super quantum Airy structures such that the $L_i$ can be written as finite sums of terms of the form \eqref{eq:Weyl_terms}. That is, we want to consider the particular case where the $L_i$ are in $\mathcal W_{\hbar}(\widetilde V)$.

\begin{lemma} \label{lem:order}
Let $(V,L)$ be a super quantum Airy structure with $L(V^*) \subseteq \mathcal W_{\hbar}(\widetilde{V})$. Then there exists $n \in \mathbb N$ such that $L(V^*) \subseteq \mathcal W_{\hbar}(\widetilde V)^{\leq n}$ for some $n$.
\end{lemma}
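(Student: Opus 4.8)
The plan is to deduce the uniform degree bound from the Baire category theorem, exploiting continuity of $L$ together with completeness of $V^*$. If $\dim V < \infty$ the index set $I$ is finite, so $L(V^*)$ is the span of the finitely many operators $L_i$, each a finite sum of terms of the form \eqref{eq:Weyl_terms} and hence of some finite degree $d_i$; then $n = \max_i d_i$ works. So I may assume $I = \{1,2,\ldots\}$ is countably infinite. In this case the topology of pointwise convergence on $V^* = \mathrm{Hom}_{\mathbb K}(V, {\mathbb K})$ coincides with the product topology on ${\mathbb K}^I$, which is a Fr\'echet space and, in particular, a Baire space.

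Next I would introduce, for each $n \in \mathbb N$, the subspace
\begin{equation}
W_n = L^{-1}\big( \widehat{\mathcal W_{\hbar}}(\widetilde V)^{\leq n} \big) \subseteq V^*,
\end{equation}
where $\widehat{\mathcal W_{\hbar}}(\widetilde V)^{\leq n}$ denotes the operators all of whose homogeneous components have degree $\le n$. Two facts are needed. First, $V^* = \bigcup_{n} W_n$: indeed by hypothesis every $L(\phi)$ lies in $\mathcal W_{\hbar}(\widetilde V)$, i.e.\ is a \emph{finite} sum of terms \eqref{eq:Weyl_terms}, hence has a finite top degree and so lies in some $\widehat{\mathcal W_{\hbar}}(\widetilde V)^{\leq n}$. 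Second, each $W_n$ is a closed linear subspace: it is closed because $L$ is continuous and, as I explain below, $\widehat{\mathcal W_{\hbar}}(\widetilde V)^{\leq n}$ is closed in the topology of pointwise convergence.

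With these in hand the conclusion is immediate. Since $V^*$ is Baire and $V^* = \bigcup_n W_n$ is a countable union of the increasing closed subspaces $W_n$, some $W_N$ must have nonempty interior. Translating by an interior point shows that $W_N$ then contains a neighbourhood of $0$; such a neighbourhood is absorbing, and since $W_N$ is a linear subspace this forces $W_N = V^*$. This gives $L(V^*) \subseteq \widehat{\mathcal W_{\hbar}}(\widetilde V)^{\leq N}$, and combining it with the standing hypothesis $L(V^*) \subseteq \mathcal W_{\hbar}(\widetilde V)$ yields $L(V^*) \subseteq \mathcal W_{\hbar}(\widetilde V)^{\leq N}$, as desired.

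The step I expect to require the most care is the closedness of $\widehat{\mathcal W_{\hbar}}(\widetilde V)^{\leq n}$. Here I would use that for homogeneous $f \in {\mathbb K}[[\widetilde V, \hbar]]$ of degree $e$ the component $T^d f$ is exactly the degree-$(d+e)$ part of $Tf$, and that extracting a fixed-degree part is continuous in the ${\mathbb K}[V,\hbar]^{\geq 1}$-adic topology (it is eventually constant along convergent nets). Hence, if $T_\alpha \to T$ pointwise with each $T_\alpha$ of degree $\le n$, then $T^d f = \lim_\alpha T_\alpha^d f = 0$ for every $d > n$ and every homogeneous $f$, so $T$ has degree $\le n$. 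Note that neither the involutivity condition (1) nor the normalisation (2) of Definition~\ref{d:SQAS} enters: the lemma is really a statement about continuous linear maps into $\widehat{\mathcal W_{\hbar}}(\widetilde V)$ whose image consists of finite-degree operators.
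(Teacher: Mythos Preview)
Your proof is correct and takes a genuinely different route from the paper. The paper fixes the single functional $\phi$ with $\phi(x^i)=1$ for all $i$ and asserts that, if the top degrees $n_i$ of the $L_i$ are unbounded, the limit in \eqref{eq:L_continuity} is ``manifestly divergent''; it then uses that $\mathcal W_\hbar(\widetilde V)^{\le n}$ is closed to pass from the $L_i$ to all of $L(V^*)$. Your Baire-category argument replaces the first step, which as stated is somewhat delicate: the partial sums $\sum_{j\in J}L_j$ can converge in $\widehat{\mathcal W_\hbar}(\widetilde V)$ to an element of finite degree even when the $n_i$ are unbounded, through telescoping cancellations (e.g.\ $L_i=\hbar\partial_i+A_i-A_{i-1}$ with $A_0=0$ and $A_i$ homogeneous of strictly increasing degree, giving $\lim_J\sum_{j\in J}L_j=\sum_i\hbar\partial_i\in\mathcal W_\hbar(\widetilde V)^{\le 1}$). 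Such an $L$ will of course violate $L(V^*)\subseteq\mathcal W_\hbar(\widetilde V)$ for some \emph{other} $\phi$, so the lemma holds; but this shows that a single test functional does not suffice, and your argument exploiting the Fr\'echet (hence Baire) structure of $V^*\cong\mathbb K^I$ is the natural way to make the step rigorous. Both proofs share the same closing observation (closedness of the degree-$\le n$ operators), and as you note, neither uses conditions (1)--(2) of Definition~\ref{d:SQAS}.
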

\begin{proof}
Each $L_i$ is an element of $\mathcal W_{\hbar}(\widetilde V)$, so it is actually in $\mathcal W_{\hbar}(\widetilde V)^{\leq n_i}$ for some $n_i$. Suppose that $n_i$ are not uniformly bounded.
Then for $\phi \in V^*$ such that $\phi(x^i)=1$ for all $i$ the right-hand side of (\ref{eq:L_continuity}) is manifestly divergent. Contradiction. Hence there is an $n \in \mathbb N$ such that $L_i \in \mathcal W_{\hbar}(\widetilde V)^{\leq n}$ for each $i$. Since $\mathcal W_{\hbar}(\widetilde V)^{\leq n}$ is closed in $\mathcal W_{\hbar}(\widetilde V)$, the result follows from the formula (\ref{eq:L_continuity}).
\end{proof}

\begin{remark}
With all these definitions, we recover the original definition of quantum Airy structures in \cite{KS,ABCD} as the particular case of a super quantum Airy structure that is bosonic (there is no fermionic variable), quadratic ($L_i  \in \mathcal W_{\hbar}(V)^{\leq 2}$), and subalgebraic ($[L(V^*), L(V^*)] \subseteq \hbar \cdot  L(V^*)$) . Note that we do not require that $V$ is finite-dimensional.
\end{remark}

\subsubsection{Free Energy and Partition Function}
\label{s:free_energy}

Perhaps the most important reason to study quantum Airy structures is that one can assign to them a unique free energy, which for special choice of $L_i$ turns out to be a generating function for some interesting enumerative invariants \cite{KS,ABCD,Notes by Gaetan, BBCCN}. In this section we generalize this construction to super quantum Airy structures.

To every super quantum Airy structure we assign a partition function $Z(x) = {\rm e}^{\frac{F(x)}{\hbar}}$, with free energy $F(x) \in \left( \widetilde V \cdot {\mathbb K}[[\widetilde V, \hbar]] \right)_0^{\geq 3}$, which is defined, loosely speaking, as a solution to the equation
\begin{equation}
L_i \cdot Z(x) =0.
\label{eq:partition_diff_eq}
\end{equation}
The requirement that $F(x) \in \left( \widetilde V \cdot {\mathbb K}[[\widetilde V, \hbar]] \right)_0^{\geq 3}$ means that:
\begin{itemize}
\item $F(x)$ is a formal power series in the variables $x^a$ and $\hbar$ with no term independent of the $x^a$;
\item $F(x)$ is even with respect to the $\mathbb{Z}_2$-grading;
\item $F(x)$ only has terms of degree $\geq 3$ with respect to the $\mathbb{Z}$-grading.
\end{itemize}
Following \cite{KS} we will demonstrate that there is a unique such $Z(x)$.
To this end it is useful to reformulate (\ref{eq:partition_diff_eq}) slightly.

Every $F \in \left( \widetilde V \cdot {\mathbb K}[[\widetilde V, \hbar]] \right)_0^{\geq 3}$ induces a continuous automorphism
\begin{equation}
\widehat{\mathcal W_{\hbar}}(\widetilde V) \ni D \mapsto  \psi_F(D) =\sum_{n=0}^{\infty} \frac{1}{n!} \left( - \frac{1}{\hbar} [F,\,\cdot\,] \right)^n(D) \in \widehat{\mathcal W_{\hbar}}(\widetilde V).
\end{equation}
In particular acting on topological generators\footnote{For a topological ring $R$, subset $S \subseteq R$ is said to be a set of topological generators
if the smallest closed subring of $R$ containing $S$ coincides with $R$.} of $\widehat{\mathcal W_{\hbar}}(\widetilde V)$ one gets:
\begin{subequations}
\begin{gather}
\psi_F(\hbar) = \hbar, \\
\psi_F(x^a) = x^a, \\
\psi_F(\hbar \partial_a) = \hbar \partial_a + \partial_a F.
\end{gather}
\end{subequations}
Since $\left( - \frac{1}{\hbar} [F,\,\cdot\,] \right)^n(D) \in \widehat{\mathcal  W_{\hbar}}(\widetilde V)^{\geq n}$, the series in the definition of $\psi_F(D)$ converges.
Linearity of $\psi_F$ is obvious. Continuity follows from the fact that the subspaces $\widehat{\mathcal W_{\hbar}}(\widetilde V)^{\geq n}$ are $\psi_F$-invariant.
$\psi_F(D_1 D_2)= \psi_F(D_1) \psi_F(D_2)$ is a consequence of $- \frac{1}{\hbar} [F,\,\cdot\,]$ being a derivation and $\psi_F^{-1}$ is given explicitly as $\psi_{-F},$
what proves that $\psi_F$ is indeed an automorphism. We thus say that $F$ is the free energy associated to the super quantum Airy structure $(V,L)$ if
\begin{equation}
\forall \phi \in V^*, \qquad \psi_F(L(\phi)) \cdot 1 =0.\label{psi_F}
\end{equation}

We can now formulate the main result of this section:

\begin{theorem} \label{thm:free_energy}
Every super quantum Airy structure admits a unique free energy.
\end{theorem}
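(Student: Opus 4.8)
The plan is to follow the strategy announced before the statement: reformulate the defining condition \eqref{psi_F} and solve it recursively in the $\mathbb{Z}$-grading. By continuity of $L$ and \eqref{eq:L_continuity}, it suffices to produce a unique $F \in \left(\widetilde V \cdot \mathbb{K}[[\widetilde V,\hbar]]\right)_0^{\geq 3}$ with $\psi_F(L_i)\cdot 1 = 0$ for every $i \geq 1$. I would write $F = \sum_{m \geq 3} F_m$ with $F_m$ the homogeneous component of $\mathbb{Z}$-degree $m$, set $F_{\leq d} = \sum_{m=3}^{d} F_m$, and for $X \in \mathbb{K}[[\widetilde V,\hbar]]$ let $X^{(d)}$ denote its degree-$d$ part. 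Using $L_i = \hbar\partial_i + L_i^{\geq 2}$ together with $\psi_F(\hbar\partial_i) = \hbar\partial_i + \partial_i F$, one gets $\psi_F(L_i)\cdot 1 = \partial_i F + \psi_F(L_i^{\geq 2})\cdot 1$. The first point to check is that the recursion is triangular: since the subspaces $\widehat{\mathcal W_\hbar}(\widetilde V)^{\geq n}$ are $\psi_F$-invariant, $\deg(\hbar\partial_a)=1$, and $\partial_a F_{d+1}$ has degree $d$, the unknown $F_{d+1}$ enters the degree-$d$ part of $\psi_F(L_i)\cdot 1$ only through the linear term $\partial_i F_{d+1}$; any occurrence of $F_{d+1}$ inside $\psi_F(L_i^{\geq 2})\cdot 1$ replaces a degree-$1$ constituent $\hbar\partial_a$ by the degree-$d$ factor $\partial_a F_{d+1}$ and so carries degree $\geq d+1$. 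Hence, assuming inductively that $F_3,\dots,F_d$ have been fixed so that $\Phi_i := \psi_{F_{\leq d}}(L_i)\cdot 1 \in \mathbb{K}[[\widetilde V,\hbar]]^{\geq d}$, the degree-$d$ equation reads $\partial_i F_{d+1} = -\Phi_i^{(d)} =: G_i$, and adjoining $F_{d+1}$ does not disturb the already-vanishing lower-degree parts.

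The heart of the argument is the integrability of the system $\{\partial_i F_{d+1} = G_i\}_{i\geq 1}$, i.e.\ the super-symmetry $\partial_j G_i = (-1)^{|i||j|}\partial_i G_j$, which I would extract from condition~(1). Since $\psi_F$ is an automorphism, applying it to $[L_i,L_j] = \hbar f_{ij}^k L_k$ and acting on $1$ gives
\begin{equation*}
\psi_F(L_i)\,\Phi_j - (-1)^{|i||j|}\,\psi_F(L_j)\,\Phi_i = \hbar\,\psi_F(f_{ij}^k)\,\Phi_k .
\end{equation*}
I would then isolate the component of $\mathbb{Z}$-degree $d+1$. On the right, the factor $\hbar$ (degree $2$) multiplying $\Phi_k \in \mathbb{K}[[\widetilde V,\hbar]]^{\geq d}$ pushes everything to degree $\geq d+2$, so that side contributes nothing. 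On the left, writing $\psi_F(L_i) = \hbar\partial_i + \partial_i F + \psi_F(L_i^{\geq 2})$, only the operator $\hbar\partial_i$ (degree $1$) acting on $\Phi_j^{(d)}$ survives at degree $d+1$, the remaining pieces raising the degree by at least $2$. Thus the degree-$(d+1)$ identity collapses to $\hbar\big(\partial_i \Phi_j^{(d)} - (-1)^{|i||j|}\partial_j \Phi_i^{(d)}\big)=0$, and cancelling the non-zero-divisor $\hbar$ yields exactly $\partial_j G_i = (-1)^{|i||j|}\partial_i G_j$. This is the step I expect to be the main obstacle: arranging the super-signs and the degree count so that involutivity reduces precisely to the closedness of the ``one-form'' $\sum_{i} G_i\,dx^i$.

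Given this integrability, a particular solution $F_{d+1}$ of $\partial_i F_{d+1} = G_i$ ($i\geq 1$) exists by the graded Poincaré lemma — concretely, via the homotopy operator obtained by rescaling the variables $x^i$, $i\geq 1$, while holding $x^0$ and $\hbar$ fixed — and homogeneity pins its $\mathbb{Z}$-degree to $d+1$. For uniqueness I would analyze the remaining freedom: two solutions differ by a term annihilated by all $\partial_i$, $i\geq 1$, hence by a function $c(x^0,\hbar) = c_0(\hbar) + x^0 c_1(\hbar)$ of the remaining variables (using $(x^0)^2 = 0$). The summand $c_0(\hbar)$ is independent of every $x^a$ and is excluded by the requirement $F \in \left(\widetilde V\cdot\mathbb{K}[[\widetilde V,\hbar]]\right)_0^{\geq 3}$, while $x^0 c_1(\hbar)$ is odd and is excluded by $|F|=0$. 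This is exactly where the peculiar role of the extra fermionic variable is reconciled with uniqueness: there is no operator $L_0$, so $\partial_0 F$ is never constrained directly, yet parity together with the no-constant condition removes precisely the freedom left by the missing equation. Hence $F_{d+1}$ is uniquely determined.

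Finally I would record the base case $d=2$: with $F_{\leq 2}=0$ one has $\psi_0(L_i)\cdot 1 = L_i\cdot 1 = L_i^{\geq 2}\cdot 1 \in \mathbb{K}[[\widetilde V,\hbar]]^{\geq 2}$, since $\hbar\partial_i\cdot 1 = 0$, so the induction starts. Its conclusion assembles a unique $F = \sum_{m\geq 3} F_m$ with $\psi_F(L_i)\cdot 1 = 0$ for all $i$, which by the reformulation is the unique free energy. Beyond the integrability step, the only genuinely new feature relative to the bosonic proof is the bookkeeping of the odd variable $x^0$ in the final paragraph; everything else is a graded transcription of the inductive argument of \cite{KS}.
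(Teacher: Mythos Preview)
Your proof is correct and follows essentially the same inductive strategy as the paper's: both reduce the degree-$d$ equation to $\partial_i F_{d+1} = -H_i^d$, extract the super-closedness $\partial_j H_i^d = (-1)^{|i||j|}\partial_i H_j^d$ from involutivity of the ideal (the paper states this in one line, you spell out the degree count), integrate via the Euler/homotopy operator, and eliminate the residual $c_0(\hbar)+x^0 c_1(\hbar)$ ambiguity using evenness and the no-constant condition. The only cosmetic difference is that the paper writes the explicit inverse $-(1+x^i\partial_i)^{-1}x^j H_j^n$ where you invoke the graded Poincar\'e lemma.
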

\begin{proof}
Pick some $F \in \left( \widetilde V \cdot {\mathbb K}[[\widetilde V, \hbar]] \right)_0^{\geq 3}$ and let $E(\phi) = \psi_F(L(\phi))) \cdot 1$. Then $E(\phi)^{\leq 1}=0$. We make an inductive hypothesis that $F^{\leq n}$ may be chosen in a unique way so that $E(\phi)^{\leq n-1}=0$. Acting with $\psi_F(L_i) = \hbar \partial_i + \partial_i F + \psi_F(L_i^{\geq 2})$ on $1$ we obtain
\begin{equation}
\left( E(y_i) \right)^n = \partial_i \left( F^{n+1} \right) + H_i^n,
\end{equation}
where $H_i^n$ is a function of $F^{\leq n}$. Continuity of $L$ and $\psi_F$ guarantees that for fixed $n$ there are only finitely many $i$ such that $H_i^n \neq 0$. Evaluating $\psi_F \left([L_i,L_j]- \hbar f_{ij}^k L_k \right) \cdot 1$ we get
\begin{equation}
\partial_i H_j^n - (-1)^{|i||j|} \partial_j H_i^n=0,
\end{equation}
so the equation $\left( E(y_i) \right)^n=0$ can be solved for $F^{n+1}$ as
\begin{equation}
F^{n+1} = - (1+x^i \partial_i )^{-1} x^j H_j^n,
\label{eq:free_energy_solution}
\end{equation}
up to the addition of an arbitrary integration constant in the extra fermionic variable $x^0$ in $\widetilde V$. But $F$ is required to be even, and hence this integration constant must vanish. Therefore, the solution is unique, and hence $F^{n+1}$ is uniquely determined by the condition $E(\phi)^{\leq n}=0$.
\end{proof}

\begin{remark}
One could ask whether we cannot enlarge $\widetilde V$ further. For instance, one could consider $L: \ V^* \to \widehat{ \mathcal W_{\hbar}}(V \oplus X)$ and $F \in {\mathbb K}[[V \oplus X ,\hbar]]$, for more general $X$. In general, the proof of Theorem \ref{thm:free_energy} would then guarantee existence of the free energy, but not uniqueness. Existence and uniqueness are obtained only when $\dim X = 0|1$, since in this case the requirement that $F$ is even is sufficient to guarantee that $\partial_i F^{n+1}+H_i^n=0$ has a unique solution.

In particular, as is clear from this argument, there is no such freedom of enlarging $V$ for bosonic quantum Airy structures. The possibility of having an extra coordinate $x^0$ is a purely fermionic phenomenon.
\end{remark}

\subsubsection{A Little More Structure}

\label{s:ringR}

In the previous section we showed that we can associate a unique free energy $F \in \left( \widetilde V \cdot {\mathbb K}[[\widetilde V, \hbar]] \right)_0^{\geq 3}$ to every super quantum Airy structure, by requiring that $\psi_F(L(\phi)) \cdot 1 =0 \label{psi_F}$ for all $ \phi \in V^*$. However, we started the section by saying that the free energy was defined such that the partition function $Z(x) = {\rm e}^{\frac{F(x)}{\hbar}}$ is a solution to the system of equations $L_i \cdot Z(x) = 0$. Let us now explore the connection between the two statements more precisely. To this end, we now define a convenient ring of series.

\begin{definition}
Let $R$ be the super ${\mathbb K}$-vector space of formal series of the form
\begin{equation}
f = \sum_{a = - \infty}^{\infty} \sum_{b=0}^{\infty} \hbar^a f_{a,b}(x),
\end{equation}
where $f_{a,b}(x)$ is a polynomial of degree $b$ in the variables $x^i$ and $f_{a,b}=0$ if $3a+b<0$. Note that we allow both positive and negative powers of $\hbar$ here. Each term $\hbar^a f_{a,b}(x)$ is declared to be homogeneous of degree $2a+b$, as is consistent with our $\mathbb{Z}$-grading.\footnote{This is not a typo: we use both combinations $2a+b$ and $3a+b$ in our considerations.}
\end{definition}

It is easy to check that the condition $3a+b \geq 0$ combined with $b \geq 0$ entail that $2a +b \geq 0$, so the degree of each term is always non-negative. For fixed $d=2a+b$ we have inequalities $2a \leq d$ and $a \geq -d$, so $f$ may be rewritten as
\begin{equation}
f = \sum_{d=0}^{\infty} \sum_{a = -d}^{2a \leq d} \hbar^a f_{a,d-2a}(x) = \sum_{d=0}^{\infty} f^d,
\end{equation}
where we introduced the homogeneous components $f^d$ of $f$. Notice that each $f^d$ is a polynomial in $x, \hbar$ and $\hbar^{-1}$ with $f^0$ being a constant. Therefore:

\begin{lemma}
$R$ is a supercommutative ${\mathbb K}$-algebra, with a product given by
\begin{equation}
fg = \sum_{d=0}^{\infty} \sum_{k=0}^d f^k g^{d-k}.
\end{equation}
\end{lemma}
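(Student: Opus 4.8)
The plan is to realize $R$ as the degreewise completion of a genuine $\mathbb{Z}_{\geq 0}$-graded supercommutative algebra, so that every algebra axiom reduces to an elementary fact about polynomials and the only real content is a closure (finiteness) check.

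First I would record the structural fact already extracted just before the statement: every $f \in R$ decomposes uniquely as $f = \sum_{d \geq 0} f^d$ with $f^d$ homogeneous of $\mathbb{Z}$-degree $d$, and, crucially, each $f^d$ is a \emph{finite} polynomial in the $x^i$, $\hbar$ and $\hbar^{-1}$. Indeed, for a term $\hbar^a f_{a,b}$ of degree $d = 2a + b$ the two constraints $b \geq 0$ and $3a + b \geq 0$ force $-d \leq a \leq d/2$, so only finitely many powers of $\hbar$ appear in $f^d$, each multiplying a homogeneous polynomial of fixed $x$-degree $b = d - 2a$. This identifies $R$ with $\prod_{d \geq 0} B^d$, where $B = \bigoplus_{d \geq 0} B^d$ is the $\mathbb{Z}_{\geq 0}$-graded subalgebra of the supercommutative Laurent algebra $\mathbb{K}[x^i]_{i \in I}[\hbar, \hbar^{-1}]$ (with $\deg x^i = 1$, $\deg \hbar = 2$, $|\hbar| = 0$) cut out by the per-monomial condition $3a + b \geq 0$.

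Next I would verify that the stated product is well defined, i.e. that it lands back in $R$. For $f, g \in R$ the proposed component $(fg)^d = \sum_{k=0}^d f^k g^{d-k}$ is a finite sum of products of finite polynomials, hence itself a finite polynomial in $x^i, \hbar, \hbar^{-1}$; there is no convergence issue, since the grading is bounded below by $0$ and only the $d+1$ pairs $(k, d-k)$ contribute. It then remains to check closure, namely that each resulting monomial still satisfies $3a + b \geq 0$. This is the one genuine computation, and the step I expect to be the main (indeed essentially the only) obstacle: a monomial $\hbar^{\alpha}$ times an $x$-monomial of degree $\beta$ coming from $f^k$ has $3\alpha + \beta \geq 0$, one coming from $g^{d-k}$ has $3\alpha' + \beta' \geq 0$, and their product is $\hbar^{\alpha + \alpha'}$ times an $x$-monomial of degree $\beta + \beta'$ with $3(\alpha + \alpha') + (\beta + \beta') = (3\alpha + \beta) + (3\alpha' + \beta') \geq 0$. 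Hence $(fg)^d \in B^d$ and $fg = \sum_{d \geq 0} (fg)^d \in R$, so the product is a well-defined map $R \times R \to R$.

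Finally I would deduce the algebra axioms, all of which are now formal. Bilinearity over $\mathbb{K}$ is immediate from the convolution formula. For associativity, the degree-$d$ part of both $(fg)h$ and $f(gh)$ equals $\sum_{k + l + m = d} f^k g^l h^m$ after rearranging finite sums, using associativity of multiplication in $\mathbb{K}[x^i][\hbar, \hbar^{-1}]$. The constant $1$ lies in $B^0$ and satisfies $(1 \cdot f)^d = f^d = (f \cdot 1)^d$, so it is a two-sided unit. Supercommutativity likewise reduces to the homogeneous pieces: since $\hbar$ is even and the generators obey $x^i x^j = (-1)^{|i||j|} x^j x^i$, the algebra $\mathbb{K}[x^i][\hbar, \hbar^{-1}]$ is supercommutative, so for $\mathbb{Z}_2$-homogeneous $f, g$ one has $f^k g^{d-k} = (-1)^{|f||g|} g^{d-k} f^k$; summing over $k$ gives $fg = (-1)^{|f||g|} gf$, and the general case follows by bilinearity. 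This establishes that $R$ is a supercommutative $\mathbb{K}$-algebra with the asserted product.
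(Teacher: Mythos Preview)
Your proof is correct and follows essentially the same approach as the paper: the paper simply observes that each homogeneous component $f^d$ is a polynomial in $x, \hbar, \hbar^{-1}$ and states the lemma as an immediate consequence, while you spell out in detail the closure check (additivity of the constraint $3a+b\geq 0$) and the verification of the algebra axioms that the paper leaves implicit.
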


\begin{remark}
Since we are considering series which include both positive and negative powers of $\hbar,$ the condition $f_{a,b}=0$ for $3a+b <0$ is needed to make the  multiplication in $R$ well defined.
Let us remark that a weaker condition $f_{a,b}=0$ for $2a +b <0$ would also work here, but it is insufficient to make sense of gauge transformations which will be discussed in Section \ref{ss:cl:q}.
\end{remark}

We can go further:

\begin{lemma}
$R$ is a local superring with maximal ideal $\mathfrak m = \{ f \in R | \ f^0=0 \}$.
\end{lemma}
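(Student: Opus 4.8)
The plan is to show that $R$ is local by exhibiting a single maximal ideal $\mathfrak m = \{f \in R \mid f^0 = 0\}$ and proving that everything outside $\mathfrak m$ is a unit. The standard characterization I would use is: a ring is local with maximal ideal $\mathfrak m$ if and only if $\mathfrak m$ is a two-sided ideal and every element of $R \setminus \mathfrak m$ is invertible. Since $R$ is supercommutative (by the previous lemma), two-sidedness is automatic, so the real content is the two claims that $\mathfrak m$ is an ideal and that its complement consists exactly of units.

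First I would verify that $\mathfrak m$ is an ideal. Using the homogeneous decomposition $f = \sum_{d=0}^\infty f^d$ with $f^0 \in \mathbb{K}$ a constant, the map $f \mapsto f^0$ is a $\mathbb{K}$-algebra homomorphism $R \to \mathbb{K}$: indeed, from the product formula $fg = \sum_d \sum_k f^k g^{d-k}$ the degree-zero component of $fg$ is $f^0 g^0$. Hence $\mathfrak m$ is precisely the kernel of this homomorphism, which makes it an ideal immediately, and the quotient $R/\mathfrak m \cong \mathbb{K}$ is a field, confirming that $\mathfrak m$ is maximal.

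The heart of the argument, and the step I expect to be the main obstacle, is showing that any $f \notin \mathfrak m$ — i.e. with $f^0 = c \neq 0$ — is invertible in $R$. The natural approach is a geometric-series / Neumann-series construction: write $f = c(1 - g)$ where $g = 1 - f/c$ satisfies $g^0 = 0$, and propose the inverse $f^{-1} = c^{-1}\sum_{n=0}^\infty g^n$. The key point to check is that this sum is well defined as an element of $R$: since $g^0 = 0$, each power $g^n$ contributes only to homogeneous degrees $\geq n$, so for every fixed degree $d$ only finitely many terms ($n \leq d$) contribute, and the sum is a legitimate degree-by-degree convergent series in $R$. Here I must be careful to confirm that the partial sums actually land in $R$ — in particular that the defining constraint $f_{a,b} = 0$ for $3a+b < 0$ is preserved under these products. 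Because each $g^n \in R$ and $R$ is closed under the product (by the preceding lemma) and under this kind of degreewise-finite summation, the constraint survives. A short formal computation then gives $f \cdot f^{-1} = c^{-1} f \sum_n g^n = (1-g)\sum_n g^n = 1$ via telescoping of the graded pieces.

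Finally, I would assemble these pieces: $\mathfrak m$ is a proper ideal (it does not contain $1$, whose degree-zero part is $1 \neq 0$), every element outside $\mathfrak m$ is a unit, and any proper ideal must avoid units and hence be contained in $\mathfrak m$, so $\mathfrak m$ is the unique maximal ideal. This establishes that $R$ is a local superring with the stated maximal ideal. The only genuinely delicate point throughout is the convergence/well-definedness of the inverse series within the specific grading conventions of $R$, so I would present that verification in full while treating the ideal and homomorphism checks as routine.
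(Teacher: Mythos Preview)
Your proposal is correct and follows essentially the same route as the paper: both invert an element $f$ with $f^0 = c \neq 0$ by the geometric series $c^{-1}\sum_n g^n$ with $g \in \mathfrak m$, using that $g^n$ contributes only to degrees $\geq n$ so the sum is degreewise finite. Your proof is slightly more explicit in checking that $\mathfrak m$ is an ideal (via the homomorphism $f \mapsto f^0$), while the paper additionally notes that no element of $\mathfrak m$ can be a unit since degrees are non-negative; otherwise the arguments coincide.
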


\begin{proof}
Since the degree is non-negative, elements of $\mathfrak m$ cannot be invertible.
On the other hand, for any $\epsilon \in \mathfrak m$ and $u \in {\mathbb K}^{\times}$, the inverse of the element $u+ \epsilon$ is given explicitly by
\begin{equation}
\frac{1}{u+ \epsilon} = u^{-1} \sum_{k=0}^{\infty} (- u^{-1}\epsilon)^k.
\end{equation}
The infinite sum on the right hand side is well-defined because only the first $d$ terms contribute to the homogeneous component of degree $d$. The identity $(u+ \epsilon) \cdot u^{-1}  \sum_{k=0}^{\infty} (- u^{-1} \epsilon)^k =1$ is then quite obvious, and we conclude that all elements of $R \setminus \mathfrak m$ are invertible.
\end{proof}

It is easy to see that $R$ equipped with the $\mathfrak m$-adic topology is a complete Hausdorff space. In particular, for any $f \in \mathfrak m$ and $g \in {\mathbb K}[[t]]$ we have an element $g(f) \in R$.
The most important for us (perhaps except for polynomials) examples of this construction are
\begin{subequations}
\begin{gather}
\exp (f ) = \sum_{k=0}^{\infty} \frac{f^k}{k!}, \\
\log (1+f) = - \sum_{k=1}^{\infty} \frac{(-f)^k}{k}.
\end{gather}
\end{subequations}
We note that $\log ( \exp (f) ) = f$ and $\exp (\log(1+f))=1+f$.

Since elements of each degree in $R$ are polynomials, we can act on them by elements of the completed Weyl algebra $\widehat{\mathcal W_{\hbar}}(\widetilde V)$. After projecting to homogeneous terms, this reduces to the computation of a finite sums of terms in which a differential operator of finite degree acts on a polynomial. We conclude that:
\begin{lemma}
 $R$, equipped with the $\mathfrak m$-adic topology,  is a topological $\widehat{\mathcal W_{\hbar}}(\widetilde V)$-module.
 \end{lemma}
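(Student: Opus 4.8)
The plan is to define the action of $\widehat{\mathcal W_{\hbar}}(\widetilde V)$ on $R$ degree by degree, to check that it returns an element of $R$, and then to read off the module axioms and continuity; essentially all of the content sits in the well-definedness check. Decomposing $T = \sum_{e \ge 0} T^e$ and $f = \sum_{d' \ge 0} f^{d'}$ into $\mathbb Z$-homogeneous components, I would define
\[
(T f)^d \;=\; \sum_{e + d' = d} T^e f^{d'} .
\]
Because $e, d' \ge 0$, for each fixed $d$ this is a finite sum, depending only on $T^{\le d}$. Each summand $T^e f^{d'}$ is a genuinely finite computation: $f^{d'}$ is a polynomial in $x, \hbar, \hbar^{-1}$, hence involves only finitely many variables, and the finiteness condition built into the definition of $\mathcal W_{\hbar}(\widetilde V)$ (for fixed lower indices only finitely many upper indices occur, i.e.\ the coefficients define an element of $\mathrm{Hom}(V^{\otimes m}, V^{\otimes n})$) ensures that only finitely many terms of the finite-order operator $T^e$ act nontrivially, so $T^e f^{d'}$ is again a polynomial. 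This is exactly the reduction anticipated in the paragraph preceding the lemma.

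The one substantive point is that $T^e f^{d'}$ stays inside $R$ and carries the right degree. For a single term $\hbar^{m+k} x^{i_1}\cdots x^{i_n}\partial_{j_1}\cdots\partial_{j_m}$ of $T^e$ (so $e = n + m + 2k$) acting on a monomial $\hbar^a \mu$, where $\mu$ has $x$-degree $b$ and $3a + b \ge 0$, the output has $\hbar$-power $a' = a + m + k$ and $x$-degree $b' = b - m + n \ge 0$ (the derivative kills monomials of degree $< m$). A one-line bookkeeping then gives $3a' + b' = (3a + b) + (n + 2m + 3k) \ge 0$ and $2a' + b' = (2a + b) + e$: the first shows $T^e f^{d'} \in R$, the second that it is homogeneous of degree $d' + e$. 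Hence $(Tf)^d \in R^d$ and $Tf = \sum_d (Tf)^d \in R$. This degree bookkeeping, together with the finiteness above, is the heart of the lemma, and is the step I expect to require the most care.

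Granting this, the module axioms are routine. $\mathbb K$-bilinearity and $1 \cdot f = f$ are immediate, and associativity $(TS) f = T(S f)$ reduces, after expanding into homogeneous components and reindexing the finite sums, to the identity $(T^k S^{\ell}) f^{d'} = T^k\!\big(S^{\ell} f^{d'}\big)$; this is just associativity of the honest action of composable finite-order differential operators on a polynomial, combined with the definition $TS = \sum_n \sum_k T^k S^{n-k}$ of the product on $\widehat{\mathcal W_{\hbar}}(\widetilde V)$. Thus $R$ is a left $\widehat{\mathcal W_{\hbar}}(\widetilde V)$-module, and the computation above refines to $\widehat{\mathcal W_{\hbar}}(\widetilde V)^{\ge e} \cdot R^{\ge N} \subseteq R^{\ge N + e}$.

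Finally, for continuity I would first record that the $\mathfrak m$-adic topology coincides with the filtration topology having $\{R^{\ge N}\}$ as a neighbourhood basis of $0$: the inclusion $\mathfrak m^N \subseteq R^{\ge N}$ is immediate since degrees add under the product, while the converse cofinality $R^{\ge 2N} \subseteq \mathfrak m^N$ follows by checking that every monomial of degree $\ge 2N$ lies in $\mathfrak m^N$ and invoking closedness of $\mathfrak m^N$ together with completeness of $R$. Continuity of the action is then immediate from degree preservation: for $f \in R^{\ge N}$ one has $f^{d'} = 0$ for $d' < N$, so $(Tf)^d = \sum_{e + d' = d} T^e f^{d'} = 0$ for $d < N$, i.e.\ $T(R^{\ge N}) \subseteq R^{\ge N}$, and each $T$ acts continuously; joint continuity in the topology of pointwise convergence on $\widehat{\mathcal W_{\hbar}}(\widetilde V)$ then follows by controlling each truncation $(Tf)^{\le D}$, which depends only on $T^{\le D}$ applied to the fixed polynomials $f^{\le D}$, reducing via the centrality of $\hbar$ to pointwise convergence in $\mathbb K[[\widetilde V, \hbar]]$ up to a degree shift absorbing the finitely many negative powers of $\hbar$ present. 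The only genuine subtlety beyond the well-definedness check is precisely this last point, namely that the pointwise-convergence topology on $\widehat{\mathcal W_{\hbar}}(\widetilde V)$ is defined through its action on $\mathbb K[[\widetilde V, \hbar]]$, which carries no negative powers of $\hbar$.
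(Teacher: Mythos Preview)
Your proposal is correct and follows the same approach as the paper, which gives only a one-sentence sketch (``elements of each degree in $R$ are polynomials, so after projecting to homogeneous terms this reduces to a finite sum of terms in which a differential operator of finite degree acts on a polynomial''). You have supplied the details the paper omits: the explicit degree bookkeeping $3a'+b' \ge 0$ and $2a'+b' = (2a+b)+e$ that keeps the action inside $R$, the verification that the two filtrations $\{\mathfrak m^N\}$ and $\{R^{\ge N}\}$ are cofinal, and the continuity argument. One small slip: in your parenthetical about the finiteness condition on the coefficients $c$, you have upper and lower indices reversed---the paper's condition is that for fixed \emph{upper} (derivative) indices $j_1,\ldots,j_m$ only finitely many \emph{lower} (position) indices $i_1,\ldots,i_n$ occur---but your actual argument uses the condition in the correct direction (the polynomial $f^{d'}$ fixes finitely many relevant $j$'s, then finitely many $i$'s survive), so the substance is unaffected.
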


Now let us go back to the free energy and partition function associated to a super quantum Airy structure. Following \cite{KS}, we defined the free energy associated to a super quantum Airy structure as an automorphism $\psi_F=\exp \left( \frac{1}{\hbar} [F,\,\cdot\,] \right)$ with $F \in \left( \widetilde V \cdot {\mathbb K}[[\widetilde V, \hbar]] \right)_0^{\geq 3}$, and such that for all $i$ the operator $\psi_F(L_i)$ annihilates $1.$
Observe now that $F$ (and hence also $Z  = {\rm e}^{\frac{F}{\hbar}}$) is a well-defined element of the ring $R$. Moreover $Z^0=1$, so $Z$ is invertible. Simple manipulation with the involved series shows that the identity\footnote{After projecting to terms of given total degree we always get finite sums, so this is a purely combinatorial problem. It boils down to identity $[A,\,\cdot\,]^n(B) = \sum_{k=0}^n \binom{n}{k} A^k B (-A)^{n-k} $ for even $A$.}
\begin{equation}
Z^{-1} \left( L_i \cdot Z \right) = \psi_F(L_i) \cdot 1
\end{equation}
indeed holds, as expected.
Therefore $Z \in R$ is annihilated by all $L_i$, as claimed originally.

An explicit formula for the partition function in terms of the free energy reads
\begin{equation}
Z = 1 + \sum_{k=1}^{\infty} \sum_{a=-k}^{2a \leq k-1} \hbar^a \sum_{\{ (g_{\alpha}, n_{\alpha}) \}_{\alpha \in A}} \prod_{\alpha \in A} F_{g_{\alpha},n_{\alpha}},
\end{equation}
where the last sum is taken over all finite sets of pairs $(g,n) \in \mathbb N^2$ such that $n \geq 1$, $\sum_{\alpha \in A} (g_{\alpha}-1)=a$, and $\sum_{\alpha \in A} [2(g_{\alpha}-1)+n_{\alpha}]=k$. This sum is always finite.

Two remarks are in order.
\begin{remark}
In the definition of the ring $R$, we allow both positive and negative powers of $\hbar$. One may then revisit the proof of existence and uniqueness of the free energy (Theorem \ref{thm:free_energy}) with slightly more general assumptions. Instead of requiring from the start that $F  \in \left( \widetilde V \cdot {\mathbb K}[[\widetilde V, \hbar]] \right)_0^{\geq 3}$, one could consider $F  \in R^{\geq 3}$, with the requirement that $F=0$ for $x=0$. The difference here is that we allow terms with negative powers of $\hbar$, as long as they are accompanied with sufficiently many powers of $x^a$ so that the degree of each term is $\geq 3$. Then, following the same steps as in the proof of Theorem \ref{thm:free_energy}, one sees that existence and uniqueness of the free energy associated to a super quantum Airy structure is still true, and hence it must belong to $\left( \widetilde V \cdot {\mathbb K}[[\widetilde V, \hbar]] \right)_0^{\geq 3}$. In other words, the lack of negative powers of $\hbar$ in $F$ is a result, rather than an assumption.
\end{remark}

\begin{remark}
One may ask why we did not prove existence and uniqueness of the partition function $Z$ directly by solving the system of equations $L_i Z = 0$ subject to the condition that $Z$ evaluated at $x=0$ is equal to $1$. This is certainly possible: the proof technique is exactly the same. Then one may define $F = \hbar \log Z$. In this approach however it is not clear to us how to prove directly that the free energy does not contain negative powers of $\hbar$. (Of course, this must still be true, since it is the same unique free energy as the one obtained above.)
\end{remark}


\subsection{Recursive System}\label{sec:computational}

For simplicity, in this section we focus on super quantum Airy structures that are quadratic and subalgebraic, and thus can be regarded as supersymmetric analogs of the original quantum Airy structures of \cite{KS,ABCD}. In the spirit of \cite{ABCD}, we derive explicit conditions for the coefficients of the operators $L_i$ such that the $L_i$ form a super quantum Airy structure. We also compute a recursive system for the coefficients of the free energy uniquely associated to a super quantum Airy structure.

\subsubsection{Constraints on the Coefficients}

As in the previous section, we choose a basis $\{x^i\}_{i \geq 1}$ for the super vector space $V$ and denote by $x^0$ a basis vector for ${\mathbb K}^{0|1}.$ We let $\widetilde V = {\mathbb K}^{0|1} \oplus V $, and denote by $\{x^a\}_{a \geq 0}$ a basis in $\tilde V$. We denote indices which take value in the set $\{0,1,2,\ldots\}$ by small letters from the beginning of the alphabet,
i.e.\ $a,b,c,d = 0,1,2,\ldots$, while indices denoted by the letters from the ``middle'' of the alphabet, $i,j,k,$  take values in the set $\{1,2,\ldots\}.$

Super quantum Airy structures were introduced in Definition \ref{d:SQAS}. To construct a quadratic super quantum Airy structure, we need to find an even continuous linear operator $L:V^*\rightarrow \widehat{\mathcal{W}_{\hbar}}(\widetilde{V})^{\leq 2}$  such that:
\begin{itemize}
\item for each $i \geq 1$ we have $L_i = L(y_i) = \hbar \partial_i + L_i^2$ where $L_i^2 \in  \widehat{\mathcal{W}_{\hbar}}(\widetilde{V})^2$;
\item $[L_i, L_j] = \hbar f_{ij}^k L_k$ for some structure constants $f_{ij}^k \in \mathbb{K}$.
\end{itemize}
Explicitly, we can write
\begin{align}
\label{quadratic:Ls}
L_i
&= \hbar\partial_i - \frac12 A_{iab}x^ax^b - \hbar B_{ia}^b x^a\partial_b - \frac12\hbar^2 C_i^{ab}\partial_a\partial_b - \hbar D_i \\
&=: L_i^{\leq 1} + L_i^{A} + L_i^{B} + L_i^{C} + L_i^{D},
\end{align}
with the coefficients $A_{iab}, B^b_{ia}, C_i^{ab}, D_i \in \mathbb{K}$.
As we have already remarked, continuity of $L$ imposes that,
if $\widetilde V$ is infinite-dimensional, for fixed $i$ only finitely many $A_{iab}$ are non-zero, and for fixed $i$ and $b$ only finitely many $B_{ia}^b$ are non vanishing.
Clearly, we may assume the symmetry conditions
\begin{equation}
A_{iab} = (-1)^{|a||b|} A_{iba}, \hskip 10 mm C_i^{ab} = (-1)^{|a| |b|} C_i^{ba}.
\label{eq:tensor_symmetries}
\end{equation}

Since $L:V^*\rightarrow \widehat{\mathcal{W}_{\hbar}}(\widetilde{V})^{\leq 2}$ is assumed to be even, we can think of the coefficients $A_{iab}, B^b_{ia}, C_i^{ab}$, and  $D_i $ as components of even tensors
\begin{equation}
A\in V \otimes  \widetilde V \otimes \widetilde{V}, \qquad B\in\text{Hom}(\widetilde{V},V \otimes \widetilde{V}),\qquad
C\in\text{Hom}(\widetilde{V}\otimes\widetilde{V},V),\qquad
D\in V.\label{tensorABCD}
\end{equation}

In the spirit of \cite{ABCD}, we can reformulate the Lie superalgebra requirement $[L_i, L_j ] = \hbar f_{ij}^k L_k$, as a~set of constraints on the tensors $A,B,C,D$. This is to be compared with Lemma 2.2 of \cite{ABCD}.

\begin{lemma}\label{lem:SAS}
The differential operators $L_i$ in \eqref{quadratic:Ls} form a super quantum Airy structure if and only if the following conditions are satisfied
\begin{subequations}
\begin{align}
A_{jia}=&(-1)^{|i||j|}A_{ija},\label{A} \\
f_{ij}^k=&(-1)^{|i||j|}B_{ij}^k-B_{ji}^k,\\
0=&(-1)^{|i||j|}B_{ij}^0-B_{ji}^0,\label{f}
\end{align}
\end{subequations}
and
\begin{subequations}
\begin{align}
B_{ia}^cA_{jcb}+(-1)^{|a||b|}B_{ib}^cA_{jca}+(-1)^{|i||j|}B_{ij}^kA_{kab}&=(-1)^{|i||j|}(i\leftrightarrow j),\label{BA}\\
B_{ia}^cB_{jc}^b+(-1)^{|a||b|}C_i^{bc}A_{jca}+(-1)^{|i||j|}B_{ij}^kB_{ka}^b&=(-1)^{|i||j|}(i\leftrightarrow j),\label{BB-CA}\\
C_i^{ac}B_{jc}^b+(-1)^{|a||b|}C_i^{bc}B_{jc}^a+(-1)^{|i||j|}B_{ij}^kC_k^{ab}&=(-1)^{|i||j|}(i\leftrightarrow j),\label{CB}\\
\frac{1}{2}C_i^{ba}A_{jab}+(-1)^{|i||j|}B_{ij}^kD_k&=(-1)^{|i||j|}(i\leftrightarrow j).\label{CA-BD}
\end{align}
\end{subequations}
\end{lemma}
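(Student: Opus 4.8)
The plan is to reduce the entire statement to a single direct computation of the graded commutator $[L_i,L_j]$, read off degree by degree. Observe first that in the present quadratic setting conditions (1)--(2) of Definition \ref{d:SQAS} are equivalent to the single requirement $[L_i,L_j]=\hbar f_{ij}^k L_k$ with $f_{ij}^k\in\mathbb K$: condition (2) holds automatically since $L_i^{\leq1}=\hbar\partial_i$ by construction, and $[L_i,L_j]=\hbar f_{ij}^kL_k$ with constant structure constants immediately gives $[L(V^*),L(V^*)]\subseteq\hbar\,L(V^*)$, hence involutivity. Both directions of the ``if and only if'' will therefore follow at once from computing $[L_i,L_j]$ explicitly and comparing it with $\hbar f_{ij}^kL_k$.

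The key simplification is that the graded commutator is homogeneous of additive degree for the $\mathbb Z$-grading of \eqref{degree}: normal ordering $\partial_a x^b=(-1)^{|a||b|}x^b\partial_a+\delta_a^b$ preserves the degree (since $\deg(\hbar\partial_a x^b)=\deg(\hbar\delta_a^b)=2$), so the commutator of homogeneous elements of degrees $d_1,d_2$ is homogeneous of degree $d_1+d_2$ or vanishes. Writing $L_i=\hbar\partial_i+L_i^2$ with $L_i^2=L_i^A+L_i^B+L_i^C+L_i^D$ the degree-$2$ part, and using $[\hbar\partial_i,\hbar\partial_j]=0$, the commutator $[L_i,L_j]$ thus has only a degree-$3$ piece $[\hbar\partial_i,L_j^2]-(-1)^{|i||j|}[\hbar\partial_j,L_i^2]$ and a degree-$4$ piece $[L_i^2,L_j^2]$. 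On the other side, $\hbar f_{ij}^kL_k=\hbar^2 f_{ij}^k\partial_k+\hbar f_{ij}^kL_k^2$ splits into a degree-$3$ part $\hbar^2 f_{ij}^k\partial_k$ and a degree-$4$ part $\hbar f_{ij}^kL_k^2$. I would match the two degrees separately.

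For degree $3$, I would compute $[\hbar\partial_i,L_j^2]$ type by type: the $C$ and $D$ parts commute with $\hbar\partial_i$, the $A$ part produces a multiplication operator $-\hbar A_{jia}x^a$ (after using $A_{jab}=(-1)^{|a||b|}A_{jba}$ to combine the two contractions), and the $B$ part produces $-\hbar^2 B_{ji}^b\partial_b$. Antisymmetrizing in $i,j$ and matching coefficients then yields three conditions: the $x^a$ coefficient gives \eqref{A}; the $\partial_k$ coefficient for $k\geq1$ \emph{defines} $f_{ij}^k$ and gives the second relation of the first block; and since there is no operator $L_0$ (no $\partial_0$ on the right-hand side), the $\partial_0$ coefficient must vanish, giving \eqref{f}. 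This last relation is the genuinely new, fermionic constraint, arising precisely from the extra coordinate $x^0$ of $\widetilde V$.

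For degree $4$, I would expand all cross-commutators among $L^A,L^B,L^C,L^D$ in $[L_i^2,L_j^2]$ and sort the outcome by operator type --- multiplication $x^ax^b$, mixed $x^a\partial_b$, second order $\partial_a\partial_b$, and scalar. Matching each type against the corresponding summand of $\hbar f_{ij}^kL_k^2$ produces exactly one of the four remaining equations: the $x^ax^b$ type gives \eqref{BA}, the $x^a\partial_b$ type gives \eqref{BB-CA}, the $\partial_a\partial_b$ type gives \eqref{CB}, and the scalar type gives \eqref{CA-BD}. I expect the only real obstacle to be sign bookkeeping: each time an odd $\partial$ is moved past an odd $x$ a Koszul sign $(-1)^{|\cdot||\cdot|}$ is generated, and one must repeatedly use the symmetries $A_{iab}=(-1)^{|a||b|}A_{iba}$ and $C_i^{ab}=(-1)^{|a||b|}C_i^{ba}$ to symmetrize the free indices into the stated form. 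Once the signs are tracked consistently, each operator type contributes precisely one listed identity, so the computation establishes both implications at once.
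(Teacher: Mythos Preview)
Your proposal is correct and takes essentially the same approach as the paper: expand the super-commutator $[L_i,L_j]$, collect terms by operator type, and compare with $\hbar f_{ij}^k L_k$. The paper's proof is terse (``straightforward computation''), while you have spelled out the organization via the $\mathbb Z$-grading and by operator type, and correctly isolated the origin of the extra constraint \eqref{f} from the absence of an $L_0$; but the underlying method is identical.
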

\begin{proof}
These conditions are very similar to those of Lemma 2.2 in \cite{ABCD}, with appropriate signs, and range of indices to take into account the extra fermionic variable. The proof is also a straightforward computation. We simply expand the super-commutator $[L_i, L_j]$ and collect terms with respect to $x^a$ and $\partial_a$. Then by comparing with $\hbar f_{ij}^k L_k$, we obtain the set of constraints.
\end{proof}

\subsubsection{Topological Recursion}

In Section \ref{s:free_energy} we associated a unique free energy $F  \in \left( \widetilde V \cdot {\mathbb K}[[\widetilde V, \hbar]] \right)_0^{\geq 3}$ to every super quantum Airy structure. In this section, we show how its coefficients can be calculated recursively. In the spirit of \cite{ABCD}, and for the sake of completeness, we also show how existence and uniqueness of the free energy can be proven computationally from the recursive structure.

\begin{theorem}\label{thm:SAS}
Let $(V,L)$ be a quadratic super quantum Airy structure, and let $F \in \left( \widetilde V \cdot {\mathbb K}[[\widetilde V, \hbar]] \right)_0^{\geq 3}$ be its associated free energy. We can expand $F$ in the basis $\{ x^a \}_{a \geq 0}$ for $\widetilde V$ as:
\begin{equation}
F=\sum_{g\geq0}\sum_{n\geq1}\sum_{a_1,\cdots,a_n \geq 0}\frac{\hbar^g}{n!}F_{g,n}[a_1,\cdots,a_n]x^{a_1}\cdots x^{a_n},\label{def:F}
\end{equation}
where the coefficients $F_{g,n}[a_1,\cdots,a_n] \in \mathbb{K}$ are $\mathbb{Z}_2$-symmetric (with signs) under permutations of the indices $\{a_1, \ldots, a_n\}$.
Then the coefficients $F_{g,n}[a_1,\cdots,a_n]$ satisfy the recursive system:
\begin{align}
F_{g,n+1}[i,\Phi]=&\;A_{ia_1a_2}\delta_{g,0}\delta_{n,2}+ D_i\delta_{n,0}\delta_{g,1}\nonumber\\
&+\sum_{k=1}^n\sigma_{a_k\subset\Phi}\sum_{b\geq 0}B_{ia_k}^bF_{g,n}[b,\Phi\backslash a_k]+\frac{1}{2}\sum_{b,c\geq 0}C_i^{bc}F_{g-1,n+2}[c,b,\Phi]\nonumber\\
&+\frac{1}{2}\sum_{b,c\geq }\sum_{g_1+g_2=g}\sum_{\Phi_1\cup \Phi_2=\Phi}\sigma_{\Phi_1\subset\Phi}C_i^{bc}F_{g_1,n_1+1}(b,\Phi_1)F_{g_2,n_2+1}[c,\Phi_2],\label{F(i)}
\end{align}
with the auxiliary equation
\begin{equation}
F_{g,n+1}[0,a_1,a_2, \ldots, a_n]=(-1)^{|a_1|}F_{g,n+1}[a_1,0,a_2, \ldots, a_n]\label{F(0)}.
\end{equation}
Here, $\Phi=\{a_1,\cdots,a_n\}$ is an ordered set, and $\sigma_{\Phi_1\subset \Phi}$ denotes the sign of the permutation from $\Phi$ to $\Phi_1 \cup (\Phi\backslash \Phi_1)$.
\end{theorem}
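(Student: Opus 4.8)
The plan is to start from the characterization of $F$ established at the end of Section \ref{s:ringR}, namely that the partition function $Z = \mathrm e^{F/\hbar} \in R$ is annihilated by every $L_i$, equivalently $Z^{-1}(L_i\cdot Z) = \psi_F(L_i)\cdot 1 = 0$. Conjugating $L_i$ by the invertible element $Z$ converts the system of differential equations into a single master identity for $F$ alone. Concretely, using $\partial_a Z = \hbar^{-1}(\partial_a F)\,Z$ and $\partial_a\partial_b Z = \hbar^{-1}(\partial_a\partial_b F)Z + \hbar^{-2}(\partial_a F)(\partial_b F)Z$—the latter obtained once the Koszul sign $(-1)^{|a||b|}$ produced by commuting $\partial_a$ past $\partial_b F$ is absorbed into the symmetry of the product—and inserting the explicit quadratic form \eqref{quadratic:Ls}, the equation $Z^{-1}(L_i\cdot Z)=0$ collapses to
\begin{equation}
\partial_i F = \tfrac12 A_{iab}x^a x^b + B_{ia}^b\, x^a\,\partial_b F + \tfrac12\hbar\, C_i^{ab}\,\partial_a\partial_b F + \tfrac12 C_i^{ab}(\partial_a F)(\partial_b F) + \hbar D_i .
\end{equation}
The powers of $\hbar$ carried by the $B$-, $C$- and $D$-terms of $L_i$ cancel exactly against those produced by differentiating $Z$, and the product $(\partial_a F)(\partial_b F)$ is precisely the source of the bilinear ``gluing'' contribution.

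Next I would substitute the expansion \eqref{def:F} and extract the coefficient of $\frac{\hbar^g}{n!}x^{a_1}\cdots x^{a_n}$ on both sides. On the left, because the $F_{g,n}[\,\cdot\,]$ are $\mathbb Z_2$-symmetric with signs, each of the slots of an $(n+1)$-fold term contributes equally after relabeling, so the symmetry factor cancels and the coefficient of $\frac{\hbar^g}{n!}x^{a_1}\cdots x^{a_n}$ in $\partial_i F$ is exactly $F_{g,n+1}[i,\Phi]$, reproducing the left-hand side of \eqref{F(i)}. On the right, the $A$-term lives only at $(g,n)=(0,2)$, giving $A_{ia_1a_2}\delta_{g,0}\delta_{n,2}$, and the $D$-term only at $(g,n)=(1,0)$, giving $D_i\delta_{n,0}\delta_{g,1}$.

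For the remaining terms the combinatorics proceeds as in \cite{ABCD}. The linear piece $B_{ia}^b\,x^a\,\partial_b F$ distributes the external index over the $n$ entries of $\Phi$; collecting the contributions yields $\sum_{k}\sigma_{a_k\subset\Phi}\sum_b B_{ia_k}^b F_{g,n}[b,\Phi\setminus a_k]$, the sign $\sigma_{a_k\subset\Phi}$ being exactly the cost of moving $x^{a_k}$ to the front past the intervening odd variables. The two $C$-terms separate by their $\hbar$-weight: the $\hbar\,\partial_a\partial_b F$ piece lowers the genus and inserts two internal indices, producing $\tfrac12 C_i^{bc}F_{g-1,n+2}[c,b,\Phi]$, while $(\partial_a F)(\partial_b F)$ factorizes into a product of two $F$-coefficients which, after splitting $\Phi = \Phi_1\sqcup\Phi_2$ with sign $\sigma_{\Phi_1\subset\Phi}$ and summing over $g_1+g_2=g$, gives the quadratic convolution term of \eqref{F(i)}. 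Finally, the auxiliary equation \eqref{F(0)} is \emph{not} produced by the recursion—there is no operator $L_0$—but is simply the transposition symmetry of the coefficients applied to the first two entries: since the extra coordinate $x^0$ is odd, $|0|=1$, and swapping $0$ with $a_1$ costs the Koszul sign $(-1)^{|0||a_1|}=(-1)^{|a_1|}$, which is \eqref{F(0)}. This is precisely the relation that fixes the coefficients whose leading index is $0$, unreachable by \eqref{F(i)}.

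The main obstacle is the systematic bookkeeping of super (Koszul) signs rather than any conceptual difficulty: one must fix a left-derivative convention, track the signs in $\partial_a\partial_b Z$ and in the symmetrization of $\partial_i F$, and check that $\sigma_{a_k\subset\Phi}$ and $\sigma_{\Phi_1\subset\Phi}$ arise correctly from permuting odd variables. The numerical combinatorics—the factors $\tfrac12$, the $n!$'s, and the $g_1+g_2=g$, $\Phi_1\sqcup\Phi_2=\Phi$ splittings—is then routine, matching the bosonic computation of \cite{ABCD} term by term, with consistency of the recursion (symmetry of $F_{g,n+1}[i,\Phi]$ in all its indices) guaranteed by the constraints of Lemma \ref{lem:SAS}.
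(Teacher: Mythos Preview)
Your proposal is correct and follows essentially the same route as the paper: both compute $\psi_F(L_i)\cdot 1 = Z^{-1}(L_i\cdot Z)=0$ explicitly for the quadratic $L_i$ of \eqref{quadratic:Ls} and then extract the coefficient of $\hbar^g x^{a_1}\cdots x^{a_n}$, the paper by applying $\partial_{a_n}\cdots\partial_{a_1}(\,\cdot\,)\big|_{x=0}$ term by term to each piece $L_i^{\leq 1},L_i^{(A)},L_i^{(B)},L_i^{(C)},L_i^{(D)}$, you by first writing the resulting master PDE for $F$ and then expanding. The only cosmetic difference is that the paper never isolates your intermediate equation $\partial_i F = \tfrac12 A_{iab}x^ax^b + B_{ia}^b x^a\partial_b F + \tfrac12\hbar C_i^{ab}\partial_a\partial_b F + \tfrac12 C_i^{ab}(\partial_a F)(\partial_b F) + \hbar D_i$ as a separate step; your closing remark about symmetry via Lemma~\ref{lem:SAS} is really the content of the existence proof in Appendix~\ref{proof of existence} rather than part of this theorem.
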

We remark that the recursive formula makes sense for infinite-dimensional $V$, since by induction one can show that for any $g$ and $n$ only finitely many $F_{g,n}[a_1, \ldots, a_n]$ are non-vanishing, and hence the sums on the right-hand-side are all finite.

\begin{proof}
By Theorem \ref{thm:free_energy}, there exists a unique $F \in \left( \widetilde V \cdot {\mathbb K}[[\widetilde V, \hbar]] \right)_0^{\geq 3}$ such that $\psi_F(L_i)\cdot1=0$. We~now show that this implies the recursive system \eqref{F(i)} and the auxiliary equation \eqref{F(0)}.

To derive \eqref{F(i)} from $\psi_F(L_i)\cdot1=0$, we consider for $n\geq0$
\begin{equation}
\partial_{a_n}\cdots\partial_{a_1}\cdot\Bigl(\psi_F(L_i)\cdot1\Bigr)\Bigr|_{x=0}=0.
\end{equation}
Note that the order of the derivatives is important to have the correct sign. As $\psi_F(L_i)$ is linear, we list the computational results for each term in $L_i=L_i^{\leq 1} + L_i^{A} + L_i^{B} + L_i^{C} + L_i^{D}$ for completeness:
\begin{align}
\partial_{a_n}\cdots\partial_{a_1}\cdot\Bigl(\psi_F\left(L_i^{\leq1}\right)\cdot1\Bigr)\Bigr|_{x=0}&=\sum_{g\geq0}\hbar^{g}F_{g,n+1}[i,a_1,\cdots,a_n]\nonumber\\
\partial_{a_n}\cdots\partial_{a_1}\cdot\left(\psi_F\left(L_i^{(A)}\right)\cdot1\right)\Bigr|_{x=0}&=-\sum_{g\geq0}\hbar^g\left(A_{ia_1a_2}\delta_{n,2}\delta_{g,0}\right)\nonumber\\
\partial_{a_n}\cdots\partial_{a_1}\cdot\Bigl(\psi_F\left(L_i^{(B)}\right)\cdot1\Bigr)\Bigr|_{x=0}&=-\sum_{g\geq0}\hbar^g\sum_{k=1}^n\sum_{b\geq 0}\sigma_{a_k\subset \Phi}B_{ia_k}^bF_{g,n}[b,\Phi\backslash a_k]\nonumber\\
\partial_{a_n}\cdots\partial_{a_1}\cdot\left(\psi_F\left(L_i^{(C)}\right)\cdot1\right)\Bigr|_{x=0}&=-\sum_{g\geq0}\frac{\hbar^g}{2}\sum_{p,q \geq 0}C_i^{bc}\Biggl(F_{g-1,n+2}[c,b,\Phi]\nonumber\\
&\;\;\;\;+\sum_{g_1+g_2=g}\sum_{\Phi_1\cup \Phi_2=\Phi}\sigma_{\Phi_1\subset \Phi}F_{g_1,n_1+1}[b,\Phi_1]F_{g_2,n_2+1}[c,\Phi_2]\Biggr)\nonumber\\
\partial_{a_n}\cdots\partial_{a_1}\cdot\Bigl(\psi_F\left(L_i^{(D)}\right)\cdot1\Bigr)\Bigr|_{x=0}&=-\sum_{g\geq0}\hbar^g\left(D_i\delta_{n,0}\delta_{g,1}\right).
\end{align}
Collecting terms order by order in $\hbar$, we obtain \eqref{F(i)}.

As for the auxiliary equation, it is necessary because $F_{g,n+1}[0, a_1, a_2, \ldots, a_n]$ is not fixed by the recursive system, since there is no $L_0$. However, $F_{g,n+1}[a_1, 0, a_2, \ldots, a_n]$ is fixed, and hence $F_{g,n+1}[0, a_1, a_2, \ldots, a_n]$ is uniquely fixed by symmetry as in \eqref{F(0)}.

\end{proof}

\begin{remark}
We can in fact prove existence and uniqueness of the free energy from this computational point of view, in the spirit of \cite{ABCD}. This provides an alternative proof of Theorem \ref{thm:free_energy}. The~proof proceeds in three steps:
\begin{enumerate}
\item
We first show that $\psi_F(L_i)\cdot1=0$ implies the recursive system \eqref{F(i)} and \eqref{F(0)}, as in Theorem \ref{thm:SAS}.
\item Assuming existence of a free energy $F\in \left( \widetilde V \cdot {\mathbb K}[[\widetilde V, \hbar]] \right)_0^{\geq 3}$, with expansion given by \eqref{def:F}, we show that \eqref{F(i)} and \eqref{F(0)} uniquely reconstructs it. This is clear, since \eqref{F(i)}  is a~recursive system on $2g+n$ that reconstructs (in conjunction with \eqref{F(0)}) all coefficients $F_{g,n}[a_1,\ldots,a_n]$ from the initial conditions
\begin{equation}
F_{0,3}(i,a,b)=A_{iab},\;\;\;\;F_{0,3}(0,i,a)=(-1)^{|i|}A_{i0a},\;\;\;\;F_{1,1}(i)=D_i,
\label{eq:ic}
\end{equation}
where we used \eqref{F(0)} for the second condition. 
\item What remains to be proved is that the free energy $F$ actually exists, which is the difficult part. As in \cite{ABCD}, the idea is to start from the recursive system \eqref{F(i)} and \eqref{F(0)}, and show that, while it is not manifestly symmetric, the coefficients $F_{g,n}[a_1, \ldots, a_n]$ that it constructs indeed are $\mathbb{Z}_2$-symmetric. Therefore, the recursive system does reconstruct a free energy $F\in \left( \widetilde V \cdot {\mathbb K}[[\widetilde V, \hbar]] \right)_0^{\geq 3}$ through its expansion \eqref{def:F}. By (1), it is a solution to $\psi_F(L_i)\cdot1=0$, and by (2), it is unique, and hence we have proven existence and uniqueness. The essence of the proof of existence thus consists in showing that the recursive system \eqref{F(i)} and \eqref{F(0)} reconstructs $\mathbb{Z}_2$-symmetric coefficients. An interesting aspect of this computational proof is that it highlights the importance of the Lie superalgebra requirement. For completeness, we present this computational proof of existence in Appendix~\ref{proof of existence}.
\end{enumerate}

\end{remark}

\subsubsection{Graphical Interpretation}\label{sec:graph}

The recursive system \eqref{F(i)} has a nice interpretation in terms of sums of trivalent graphs, in parallel to the graphical interpretation presented for bosonic quantum Airy structures in \cite{ABCD,Notes by Gaetan}. For completeness, we present this graphical interpretation in this section, focusing on quadratic super quantum Airy structures without the extra fermionic variable. We follow very closely the presentation in \cite{Notes by Gaetan}. It remains to be seen whether the graphical interpretation can be extended to super quantum Airy structures with an extra fermionic variable.

The graphical interpretation is very similar to the one presented in Section 1.3 of \cite{Notes by Gaetan}. In fact, the graphs are the same, but we need to change slightly the assignment of weights to the graphs to take into account the signs arising from $\mathbb{Z}_2$-symmetry.

Let us first define the set $\mathbb{G}_{g,n+1}$, which is the same as Definition 1.4 of \cite{Notes by Gaetan}.\begin{definition}\label{def:sgraph} [Definition 1.4 of \cite{Notes by Gaetan}]
For $g\geq0$ and $n \geq 0$ such that $\chi_{g,n}:=2g+n-2\geq0$, we form the set $\mathbb{G}_{g,n+1}$ consisting of pairs $\Gamma=(G,T)$ where:
\begin{itemize}
\item $G$ is a connected trivalent graph with $2g-1+n$ trivalent vertices, $n+1$ ordered one-valent leaves, and first Betti number $b_1(G) = g$. We single out a leave and call it the root of $G$. 
\item $T \subseteq G$ is a spanning tree that includes the root of $G$, but none of the leaves.  
\item The edges $e = \{v,v'\}$ of $G$ which are not in the spanning tree $T$ connect parent vertices, i.e. the common ancestor of $v$ and $v'$ in the rooted spanning tree $T$ is either $v$ or $v'$.
\end{itemize}
We denote the ordered leaves by $\ell, \ell_1, \ell_2, \ldots, \ell_n$ in counterclockwise order\footnote{In \cite{Notes by Gaetan} the counterclockwise requirement is not specified, because the assigned weights are symmetric, whereas in our case we need to be a little more careful because the weights are only $\mathbb{Z}_2$-symmetric.}, with $\ell$ being the root. We denote by $E'(\Gamma)$ the set consisting of leaves (including the root) of $G$ and edges of $G$ that are not loops.
\end{definition}

\begin{definition}
An \emph{automorphism} of $\Gamma = (G,T) \in \mathbb{G}_{g,n+1}$ is a permutation of the edges in $G$ that preserves the graph structure. We denote by $\text{Aut}(\Gamma)$ the set of automorphisms of $\Gamma=(G,T)$.
\end{definition}

By convention, we set $\mathbb{G}_{0,1} =\mathbb{G}_{0,2} = \emptyset$. $\mathbb{G}_{0,3}$ and $\mathbb{G}_{1,1}$ both contain only one element, which are shown in Section 1.3 of \cite{Notes by Gaetan}.

Furthermore, as explained in \cite{Notes by Gaetan}, $\mathbb{G}_{g,n+1}$ has a recursive structure on $\chi_{g,n}$. If we remove from a given graph $\Gamma\in\mathbb{G}_{g,n+1}$ the vertex incident to the root $\ell$ with two more edges/leaves $\{e_1,e_2\}$, the resulting graph falls into one of the following three cases:
\begin{itemize}
\item[\textbf{I}] a graph $\Gamma'\in\mathbb{G}_{g,n}$, if one of $e_1$ or $e_2$ (we call it $e_2$ without loss of generality) is a leaf of $\Gamma$. We let $e_1$ be the root of $\Gamma'$.
\item[\textbf{I$'$}] a graph $\Gamma'\in\mathbb{G}_{g-1,n+1}$. We let $e_1$ be the root of $\Gamma'$, and $e_2$ its first leaf. Note that we need to specify which of $e_1$ and $e_2$ is the root and the first leaf here, unlike \cite{Notes by Gaetan}, because of $\mathbb{Z}_2$-symmetry.
\item[\textbf{II}] a non-ordered disjoint union of $\Gamma'_1\cup\Gamma'_2$ where $\Gamma'_i\in\mathbb{G}_{g_i,|\Phi_i|+1}$ is a graph with root $e_i$ and leaves $L_i$ such that $g_1+g_2=g$ and $L_1\cup L_2$ are leaves of $\Gamma$ distinct from the root $\ell$.
\end{itemize}

We would like to assign a weight to each graph $\Gamma = (G,T) \in  \mathbb{G}_{g,n+1}$. To do this, we need to equip $\Gamma$ with a colouring.

\begin{definition}
A \emph{colouring} of $\Gamma = (G,T) \in \mathbb{G}_{g,n+1}$ is a map $\gamma: E'(\Gamma) \to I$, where $I$ is the index set $I = \{1,2,3,\ldots \}$. In other words, it assigns a positive integer to all leaves of $G$ and edges of $G$ that are not loops.
\end{definition}

To each graph $\Gamma\in\mathbb{G}_{g,n+1}$ with colouring $\gamma$, with $\chi_{g,n} \geq0$, we assign a weight $\omega(\Gamma,\gamma)$ as follows, using the recursive decomposition (\textbf{I}, \textbf{I}$'$, \textbf{II}). First of all, we define the cases with $\chi_{g,n}=0$ as
\begin{equation}
\omega(\Gamma_{0,3},\gamma)=A_{\gamma(l_1)\gamma(l_2)\gamma(l_3)},\;\;\;\;\omega(\Gamma_{1,1},\gamma)=D_{\gamma(l_1)}.
\end{equation}
For graphs with $\chi_{g,n}\geq1$, we recursively determine the weight in terms of the decomposition (\textbf{I}, \textbf{I}$'$, \textbf{II}):
\begin{align}
 \textbf{I}\;\;&\;\;\omega(\Gamma,\gamma)=\sigma_{\gamma(e_1)\subset\gamma(L)}\,B_{\gamma(\ell_1)\gamma(e_1)}^{\gamma(e_2)}\,\omega(\Gamma',\gamma'),\\
 \textbf{I}'\;\;&\;\;\omega(\Gamma,\gamma)=C_{\gamma(\ell_1)}^{\gamma(e_2)\gamma(e_1)}\;\omega(\Gamma',\gamma'),\\
 \textbf{II}\;\;&\;\;\omega(\Gamma,\gamma)=\sigma_{\gamma(L_1)\subset\gamma(L)}C_{\gamma(\ell_1)}^{\gamma(e_1)\gamma(e_2)}\;\omega(\Gamma_1,\gamma_1)\,\omega(\Gamma_2,\gamma_2),
\end{align}
where $L=(\ell_1,...,\ell_n)$ and $\gamma(L)=(\gamma(\ell_1),...,\gamma(\ell_n))$ is the colouring of leaves. The sign factor $\sigma$ is defined in Theorem~\ref{thm:SAS}. $\gamma'$ denotes the colouring for the corresponding decomposed graph, that is, the restriction of $\gamma$ to $E'(\Gamma')$. $L_i,\gamma_i$ are similarly defined. Let us emphasize that the order of edges/leaves is important, because of the sign factors. 

Since this weight assignment precisely captures the recursive equation \eqref{F(i)}, we obtain the following lemma:

\begin{lemma}
For any $g,n\geq0$, and $i,i_1,...,i_n\in I$, we have
\begin{equation}
F_{g,n+1}[i,i_1,...,i_n]=\sum_{\Gamma\in\mathbb{G}_{g,n+1}}\sum_{\substack{\gamma\in I^{E'(\gamma)}\\\gamma(\ell_j)=i_j}}\frac{\omega(\Gamma,\gamma)}{|{\rm Aut}(\Gamma)|}.
\end{equation}
\end{lemma}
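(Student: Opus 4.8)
The plan is to prove the identity by induction on the Euler characteristic $\chi_{g,n}=2g+n-2$, using the fact that the weight $\omega(\Gamma,\gamma)$ of Section \ref{sec:graph} was defined precisely so as to mirror, term by term, the recursion \eqref{F(i)}. Since this section works without the extra fermionic variable, all indices run over $I=\{1,2,\ldots\}$ and the auxiliary equation \eqref{F(0)} plays no role. For the base case $\chi_{g,n}=0$ there are exactly two pairs, $(g,n)=(0,2)$ and $(g,n)=(1,0)$: the sets $\mathbb{G}_{0,3}$ and $\mathbb{G}_{1,1}$ each contain a single graph with trivial automorphism group, and by definition $\omega(\Gamma_{0,3},\gamma)=A_{i i_1 i_2}$ and $\omega(\Gamma_{1,1},\gamma)=D_i$. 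These match the initial conditions $F_{0,3}[i,i_1,i_2]=A_{i i_1 i_2}$ and $F_{1,1}[i]=D_i$ read off from the $\delta$-terms of \eqref{F(i)}, so the base of the induction holds.

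For the inductive step, fix $(g,n)$ with $\chi_{g,n}\geq 1$ and assume the lemma for all smaller values of $\chi$. The key device is the decomposition of graphs in $\mathbb{G}_{g,n+1}$ obtained by deleting the trivalent vertex incident to the root $\ell$ together with its two other incident edges/leaves $\{e_1,e_2\}$; this produces precisely the three cases \textbf{I}, \textbf{I}$'$, \textbf{II} recalled above. I would organize the sum $\sum_{\Gamma}\sum_{\gamma}\omega(\Gamma,\gamma)/|\mathrm{Aut}(\Gamma)|$ according to these three cases and check that they reproduce, respectively, the $B$-term $\sum_{k}\sigma_{a_k\subset\Phi}\sum_{b}B_{ia_k}^{b}F_{g,n}[b,\Phi\backslash a_k]$, the nonseparating $C$-term $\tfrac12\sum_{b,c}C_i^{bc}F_{g-1,n+2}[c,b,\Phi]$, and the separating $C$-term $\tfrac12\sum_{b,c}\sum_{g_1+g_2=g}\sum_{\Phi_1\cup\Phi_2=\Phi}\sigma_{\Phi_1\subset\Phi}C_i^{bc}F_{g_1,n_1+1}(b,\Phi_1)F_{g_2,n_2+1}[c,\Phi_2]$ of \eqref{F(i)}. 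In each case the recursive weight rule peels off exactly one tensor factor, $B_{i\,\gamma(e_1)}^{\gamma(e_2)}$ in case \textbf{I} and a factor $C_i^{\cdots}$ in cases \textbf{I}$'$,\textbf{II}, whose distinguished lower index equals the root colour $i=\gamma(\ell)$, matching the tensor appearing in the corresponding recursion term. The leftover weight $\omega(\Gamma',\gamma')$ (or $\omega(\Gamma_1,\gamma_1)\,\omega(\Gamma_2,\gamma_2)$) is then evaluated by the inductive hypothesis, after summing over the colours $b,c$ of the internal edges $e_1,e_2$, which become the distinguished first indices of the smaller free energies.

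The main obstacle, and the place where this argument genuinely differs from the bosonic treatment of \cite{Notes by Gaetan}, is the simultaneous bookkeeping of the $\mathbb{Z}_2$-signs $\sigma$ and the automorphism factors $1/|\mathrm{Aut}(\Gamma)|$. Because $A,B,C,D$ are only $\mathbb{Z}_2$-symmetric rather than fully symmetric, one cannot treat leaves and edges as unordered; I would fix the counterclockwise ordering of Definition \ref{def:sgraph} throughout and track how reattaching the root vertex reorders the leaf set. The signs $\sigma_{a_k\subset\Phi}$ in case \textbf{I} and $\sigma_{\Phi_1\subset\Phi}$ in case \textbf{II} are exactly those produced by moving the chosen leaf (resp.\ the leaf block $\Phi_1$) past the others into canonical position, so the claim reduces to verifying that the sign incurred by the graph surgery coincides with the $\sigma$ in \eqref{F(i)}. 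For the automorphism counting I would use the comparison $|\mathrm{Aut}(\Gamma)|=|\mathrm{Aut}(\Gamma')|$ in case \textbf{I}, while in cases \textbf{I}$'$ and \textbf{II} the factor $\tfrac12$ is accounted for by the order-two symmetry swapping the two edges $e_1\leftrightarrow e_2$ of the deleted handle (resp.\ the two components $\Gamma_1\leftrightarrow\Gamma_2$): summing freely over the ordered labels $b,c$ and the ordered decompositions $\Phi_1\cup\Phi_2=\Phi$ double-counts each graph orbit by this symmetry, and the overcount is cancelled by the $\tfrac12$.

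The delicate point to verify, which I expect to be the heart of the argument, is that the $\mathbb{Z}_2$-sign generated by this swap is compatible with the symmetry $C_i^{bc}=(-1)^{|b||c|}C_i^{cb}$ from \eqref{eq:tensor_symmetries}, so that the two halves of each overcounted orbit carry equal weight and the $\tfrac12$ is legitimate. Granting this, collecting the three cases reproduces exactly the right-hand side of \eqref{F(i)}, and since by Theorem \ref{thm:SAS} that recursion together with the initial conditions \eqref{eq:ic} determines $F_{g,n+1}$ uniquely, the induction closes and the graphical formula follows.
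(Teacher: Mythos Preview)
Your proposal is correct and follows the same approach as the paper, which in fact offers no proof beyond the single sentence ``Since this weight assignment precisely captures the recursive equation \eqref{F(i)}, we obtain the following lemma.'' You have simply written out the induction on $\chi_{g,n}$ that this sentence elides, matching the three decomposition types \textbf{I}, \textbf{I}$'$, \textbf{II} to the $B$-, nonseparating $C$-, and separating $C$-terms of \eqref{F(i)} respectively, exactly as intended.

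One small caution on the automorphism bookkeeping: your statement that in cases \textbf{I}$'$ and \textbf{II} the factor $\tfrac{1}{2}$ is accounted for by the order-two swap $e_1\leftrightarrow e_2$ is the right idea, but the relation between $|\mathrm{Aut}(\Gamma)|$ and $|\mathrm{Aut}(\Gamma')|$ (or $|\mathrm{Aut}(\Gamma_1)|\cdot|\mathrm{Aut}(\Gamma_2)|$) is not simply a uniform factor of $2$; rather, summing over the two ordered choices of $(e_1,e_2)$ (respectively over ordered splittings $(\Gamma_1,\Gamma_2)$) overcounts each unordered configuration, and this overcount combines with the comparison of automorphism groups to produce the $\tfrac{1}{2}$. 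This is the standard mechanism from \cite{ABCD,Notes by Gaetan} and your outline is compatible with it, but if you write this out in full you should phrase it as a bijection between decorated decompositions rather than as a bare equality of automorphism orders.
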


We refer the reader to \cite{Notes by Gaetan} and \cite{ABCD} for pretty pictures of the graphical representation of topological recursion.


\subsection{Gauge Transformations}

In this section we study gauge transformations of super quantum Airy structures.

\label{ss:cl:q}

Recall that $\widetilde V = V \oplus \mathbb{K}^{0|1}$. We consider the symplectic super vector space $W=\widetilde V \oplus \widetilde V^*$ equipped with the product topology and the symplectic form given by $\omega(\phi,v) = \phi(v)$ for $\phi \in \widetilde V^*$ and $v \in \widetilde V$.

Topology on $W$ was chosen in such a way that the following properties hold:
\begin{itemize}
\item if $\{ a_n \}_{n=0}^{\infty}$ is a sequence of distinct indices, $\displaystyle \lim_{n \to \infty} y_{a_n} =0$,
\item if $\{ v_n \}_{n=0}^{\infty} \subseteq V$ is a convergent sequence, then $v_n$ are eventually constant.
\end{itemize}

\subsubsection{Linear Gauge Transformations}

With this under our belt, we can define gauge transformations for super quantum Airy structures. We proceed as follows. Let $T$ be a linear operator $W \to W$ represented by matrices
\begin{subequations}\label{eq:W_matrix}
\begin{align}
T x^a =\; &  t^{a}_b x^b + b^{ab} y_b, \\
T y_a =\; &  c_{ab} x^b + d^b_a y_b.
\end{align}
\end{subequations}
Then for each fixed $a$ there are finitely many $b$ such that $t^a_b \neq 0$ or $c_{ab} \neq 0$. Moreover $T$ is continuous if and only if the following conditions are satisfied:
\begin{itemize}
\item there are finitely many pairs $a,b$ such that\footnote{In other words $c_{ab} x^a \otimes x^b \in \widetilde V \otimes  \widetilde V$.} $c_{ab} \neq 0$,
\item for each fixed $b$ there are finitely many $a$ such that\footnote{This means that $d^b_a$ represents the transpose of some operator $V \to V$.} $d^b_a \neq 0$.
\end{itemize}

If $T: \ W \to W$ is an even, linear and continuous symplectomorphism ($T \in \mathrm{Aut(W)}$ for short), there exists a unique continuous automorphism $\widetilde T$ of $\mathcal{W}_{\hbar}(\widetilde V)$ which acts on generators as in formulas (\ref{eq:W_matrix}) with $y_a$ replaced by $\hbar \partial_a$ and $\widetilde T(\hbar)=\hbar$. It may be extended uniquely to a continuous automorphism of $\widehat{\mathcal W_{\hbar}}(\widetilde V)$. Indeed, it is obvious that proposed transformation preserves algebraic relations in $\mathcal W_{\hbar}(\widetilde V)$. By the preceding discussion, also finiteness conditions in the definition of $\mathcal W_{\hbar}(\widetilde V)$ are preserved. Continuity of $\widetilde T$ is easy to see.

Now consider an even, linear and continuous symplectomorphism $T_s \in \mathrm{Aut}(W)$ that acts trivially on $\widetilde V^*$. In terms of generators, it takes the form
\begin{subequations}
\begin{gather}
T_s (x^a) = x^a + s^{ab} y_b, \\
T_s(y_a) = y_a,\label{eq:trans2}
\end{gather}
\end{subequations}
where $s$ is even and $\mathbb{Z}_2$-symmetric (that is $s^{ab} = (-1)^{|a||b|} s^{ba}$). We think of $s$ as $s \in \mathrm{Hom}(\widetilde V,\widetilde V^*)^{S_2}_0$.

Let $(V,L)$ be a super quantum Airy structure, and let $\widetilde T_s$ be the continuous automorphism of $\widehat{\mathcal W_{\hbar}}(\widetilde V)$ uniquely induced by $T_s$. Then $(V, \widetilde T_s \circ L)$ is also a super quantum Airy structure.

\begin{definition}
We say that the super quantum Airy structure $(V,\widetilde T_s \circ L)$ is \emph{gauge equivalent} to $(V,L)$, and we call $\widetilde T_s$ a \emph{linear gauge transformation}.
\end{definition}

\begin{remark}
We note here that linear gauge transformations do not change the order of a super quantum Airy structure, since it preserves the $\mathbb{Z}$-grading on $\widehat{\mathcal W_{\hbar}}(\widetilde V)$.
\end{remark}

Let us now explain the geometric meaning of gauge transformations. For any subspace $L \subseteq W$ we define its symplectic complement $L^{\perp} = \{ w \in W | \ \omega(l,w)=0 \text{ for all } l \in L\}$. We say that $L$ is Lagrangian if $L = L^{\perp}$. If $L$ is a Lagrangian subspace, every Lagrangian subspace $C$ such that $W = L \oplus C$ is called a Lagrangian complement of $L$. There is a bijection between $\mathrm{Hom}(\widetilde V,\widetilde V^*)^{S_2}$ and the set of Lagrangian complements of $\widetilde V^*$, given by $s \mapsto T_s(\widetilde V)$.

\subsubsection{Non-linear Gauge Transformations}

There are more general gauge transformations that one can consider. Since the $L_i$ can have arbitrary order, we do not need to insist that gauge transformations preserve the order of a super quantum Airy structure. Thus we can also consider non-linear gauge transformations.

Consider a formal series in $y$ of the form
\begin{equation}
s = \sum_{k=2}^{\infty} \frac{1}{k} s^{a_1...a_k} y_{a_1}... y_{a_k}.
\label{eq:general_s}
\end{equation}
We require that each term in (\ref{eq:general_s}) is even and that each $s^{a_1...a_k}$ is $\mathbb{Z}_2$-symmetric. The largest $k$ such that $s^{a_1...a_k} \neq 0$ is called the order of $s$. We also allow $s$ of infinite order, since the $L_i$ can have infinite order. Then the polynomial transformation
\begin{subequations}
\begin{gather}
T_s(x^a) = x^a + \frac{\partial s}{\partial y_a}, \\
T_s(y_a) = y_a,
\end{gather}
\end{subequations}
induces (by replacing $y_a$ with $\hbar \partial_a$) an automorphism $\widetilde T_s$ of $\widehat{\mathcal W_{\hbar}}(\widetilde V)$, which leaves the defining properties of super quantum Airy structures invariant.

\begin{definition}
We say that the super quantum Airy structure $(V, \widetilde T_s \circ L)$ is \emph{gauge equivalent} to $(V,L)$, and if the order of $s$ is greater than two, we call $\widetilde T_s$ a \emph{non-linear gauge transformation}.
\end{definition}

\subsubsection{The Partition Functions of Gauge Equivalent Super Quantum Airy Structures}

Given two gauge equivalent super quantum Airy structures, what is the relation between their partition functions? To answer this question we need to understand gauge transformations as conjugations of the differential operators $L_i$ of a super quantum Airy structure. Let $D_s \in \widehat{\mathcal W_{\hbar}}(\widetilde V)^{\geq 2}$ be of the form
\begin{equation}
D_s = \sum_{k=2}^{\infty} \hbar^k s^{a_1...a_k} \partial_{a_1}... \partial_{a_k},
\end{equation}
where the $s$ are even $\mathbb{Z}_2$-symmetric tensors. We claim that the $L_i'$ of the super quantum Airy structure $(V, \widetilde T_s \circ L)$ gauge equivalent to $(V,L)$ are given by the conjugated differential operators
\begin{equation}\label{eq:Li'}
L_i' =  \exp \left( \frac{D_s}{\hbar} \right) L_i \exp \left( - \frac{D_s}{\hbar} \right).
\end{equation}
But for this we need to make sense of this expression.

Clearly $\frac{D_s}{\hbar}$, regarded as an operator on the ring $R$ introduced in the Section \ref{s:ringR}, satisfies:
\begin{equation}
\forall n \in \mathbb N, \ \exists n_0 \in \mathbb N, \ n' \geq n_0 \implies  \left( \frac{D_s}{\hbar} \right)^{n'} (R) \subseteq \mathfrak m^n.
\end{equation}
Therefore $\exp \left( \frac{D_s}{\hbar} \right)$ makes sense as an operator on $R$. More precisely, for every $f \in R$ let
\begin{equation}
\exp \left( \frac{D_s}{\hbar} \right)f = \sum_{n=0}^{\infty} \frac{1}{n!} \left( \frac{D_s}{\hbar} \right)^n f.
\end{equation}
After projecting the right-hand-side onto terms of fixed degree in $\hbar$ and $x$, only finitely many terms are nonzero, so this is well defined. We have $\exp \left( \frac{D_s}{\hbar} \right) \exp \left( \frac{D_t}{\hbar} \right)=\exp \left( \frac{D_{s+t}}{\hbar} \right)$. Moreover, for any $f \in R$ we have the identity
\begin{equation}
\exp \left( \frac{[D_s,\,\cdot\,]}{\hbar} \right)(L_i) \cdot f = \exp \left( \frac{D_s}{\hbar} \right) L_i \exp \left( - \frac{D_s}{\hbar} \right) f.
\end{equation}
Therefore, \eqref{eq:Li'} makes sense, and indeed reconstructs the differential operators $L_i'$ of the gauge equivalent Airy structure $(V, \widetilde T_s \circ L)$.

Now let $Z$ be the partition function associated to the super quantum Airy structure $L_i$, that is, $Z$ is annihilated by the $L_i$, and $Z$ evaluated at $x=0$ is equal to $1$. Then $\exp \left( \frac{D_s}{\hbar} \right) Z $ is annihilated by the gauge transformed operators $L_i'$. Let $\mathcal N$ be equal to $\exp \left( \frac{D_s}{\hbar} \right) Z $ evaluated at $x=0$. It is easy to check that $\mathcal N = 1 + O (\hbar)$. Therefore $\mathcal N^{-1} \in R$ exists and
\begin{equation}\label{eq:gaugeZ}
Z'= \mathcal N^{-1} \exp \left( \frac{D_s}{\hbar} \right) Z \in R
\end{equation}
 is a solution of equation $L_i' Z' =0$ such that $Z'$ evaluated at $x=0$ is equal to $1$. By the uniqueness of partition function, $Z'$ coincides with the partition function associated to the Airy structure $L_i'$. This gives us the relation between the partition function $Z'$ of the gauge transformed super quantum Airy structure $(V, \widetilde T_s \circ L)$ to the partition function $Z$ of the original super quantum Airy structure $(V,L)$.

\begin{remark}
In \cite{ABCD} an alternative formula for $Z'$ in terms of formal gaussian integrals was given. A similar formula also works also in the supersymmetric case. However, it is only valid for linear gauge transformations. We are not aware of its generalization for non-linear gauge transformations.
\end{remark}


\subsection{Classical vs Quantum}

Quantum Airy structures were originally defined in \cite{KS} in terms of quantizations of classical Airy structures. In this section we explore the parallel story for super Airy structures.

\subsubsection{Classical Limit}

We now define the classical limit of super quantum Airy structures. We introduce the algebras ${\mathbb K}[W] = \frac{\mathcal W_{\hbar}(\widetilde V)}{\hbar \cdot \mathcal W_{\hbar}(\widetilde V)}$ and ${\mathbb K}[[W]] = \frac{\widehat{\mathcal W_{\hbar}}(\widetilde V)}{\hbar \cdot \widehat{\mathcal W_{\hbar}}(\widetilde V)}$.

\begin{definition}
We denote the quotient map by $\mathrm{Cl}$, and call it the \emph{classical limit}.
\end{definition}

Taking the classical limit thus amounts to replacing all $\hbar \partial_a$ by $y_a$, and setting all terms of higher order in $\hbar$ to zero. We interpret ${\mathbb K}[W]$ and ${\mathbb K}[[W]]$ as the superalgebras of polynomials and formal series respectively on some classical phase space with linear coordinate system $x^a, y_a$. The natural symplectic structure on $W$ induces a Poisson bracket on ${\mathbb K}[W]$ and ${\mathbb K}[[W]]$, which may also be computed as
\begin{equation}
\{ \mathrm{Cl}(f), \mathrm{Cl}(g) \} = \mathrm{Cl} \left( \frac{1}{\hbar} [f,g] \right).
\end{equation}

Now let $(V,L)$ be a super quantum Airy structure. Define $L_i^{\mathrm{cl}}(x,y)=\mathrm{Cl}(L_i)$ as being the classical limit of the operators $L_i$. The equation $L_i^{\mathrm{cl}}=0$ may then be thought of as defining a ``characteristic variety'' $\Sigma$ in phase space, even if this is slightly artificial in the presence of odd coordinates. Nevertheless, some geometric notions may still be defined, e.g.\ the Zariski cotangent space. It~may~be identified naturally with $V \oplus {\mathbb K}^{0|2}$, with ${\mathbb K}^{0|2}$ spanned by the additional Grassman variable together with its conjugate momentum. In particular, it is coisotropic (and in fact Lagrangian if there is no extra Grassman variable). The classical limit of the free energy may be interpreted as a parametrisation of $\Sigma$ in terms of the variables $x$ in a formal neighbourhood of zero. In fact, by repeating the steps performed in the proof of existence and uniqueness of the partition function associated to a super quantum Airy structure, one may show the following fact:

\begin{theorem}
There exists a unique $F_{\mathrm{cl}} \in {\mathbb K}[[\widetilde V]]^{\geq 3}$ such that $L_i^{\mathrm{cl}}(x, d F_{\mathrm{cl}})=0$. It coincides with the classical limit of the free energy associated to $L_i$.
\end{theorem}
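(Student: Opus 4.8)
The plan is to prove the two assertions so that existence and the identification with $\mathrm{Cl}(F)$ are extracted from Theorem~\ref{thm:free_energy}, while uniqueness is deduced directly from the shape of the equation. First I would record that, because $L_i^{\leq 1} = \hbar\partial_i$ and $\mathrm{Cl}(\hbar\partial_a) = y_a$ while $\mathrm{Cl}$ sends $L_i^{\geq 2}$ to an expression of $\mathbb Z$-degree $\geq 2$, one has $L_i^{\mathrm{cl}}(x,y) = y_i + R_i(x,y)$, where the only degree-one part of $L_i^{\mathrm{cl}}$ is $y_i$. Substituting $y_a = \partial_a F_{\mathrm{cl}}$ turns $L_i^{\mathrm{cl}}(x,dF_{\mathrm{cl}}) = 0$ into $\partial_i F_{\mathrm{cl}} = -R_i(x,\partial F_{\mathrm{cl}})$. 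Since $F_{\mathrm{cl}} \in {\mathbb K}[[\widetilde V]]^{\geq 3}$, every factor $\partial_b F_{\mathrm{cl}}$ has degree $\geq 2$, and a short count on a monomial $x^{a_1}\cdots x^{a_p} y_{b_1}\cdots y_{b_q}$ of $R_i$ (with $p+q\geq 2$) shows that the degree-$d$ part $G_i^{(d)} := [R_i(x,\partial F_{\mathrm{cl}})]^{(d)}$ depends only on the homogeneous components $F_{\mathrm{cl}}^{(\leq d)}$. In particular it does not involve $F_{\mathrm{cl}}^{(d+1)}$, so the equation at degree $d$ reads $\partial_i F_{\mathrm{cl}}^{(d+1)} + G_i^{(d)} = 0$, and $F_{\mathrm{cl}}^{(d+1)}$ is constrained only through lower-degree data.

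For uniqueness I would run a minimal-degree argument. If $F_{\mathrm{cl}}$ and $\widehat{F}_{\mathrm{cl}}$ both solve the equation and $d+1\geq 3$ is the lowest degree at which they differ, then $F_{\mathrm{cl}}^{(\leq d)} = \widehat{F}_{\mathrm{cl}}^{(\leq d)}$, so the two degree-$d$ equations share the same $G_i^{(d)}$, and subtracting them gives $\partial_i\bigl(F_{\mathrm{cl}}^{(d+1)} - \widehat{F}_{\mathrm{cl}}^{(d+1)}\bigr) = 0$ for all $i \geq 1$. The difference is therefore independent of every $x^i$ with $i \geq 1$, i.e. a function of the single odd variable $x^0$ alone; since $(x^0)^2 = 0$ it equals $a + b\,x^0$ with $a,b \in {\mathbb K}$, of degree $\leq 1$, and being homogeneous of degree $d+1 \geq 3$ it must vanish, contradicting the choice of $d$. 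I emphasize that this step needs neither evenness of $F_{\mathrm{cl}}$ nor integrability of the system: the degree-$\geq 3$ requirement together with the single extra fermionic coordinate $x^0$ does all the work, which is precisely the classical shadow of the uniqueness mechanism in Theorem~\ref{thm:free_energy}.

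For existence, and simultaneously the coincidence with $\mathrm{Cl}(F)$, I would take the quantum free energy $F$ provided by Theorem~\ref{thm:free_energy} and set $F_{\mathrm{cl}} := \mathrm{Cl}(F) = F|_{\hbar = 0} \in {\mathbb K}[[\widetilde V]]^{\geq 3}$. The key is to take the classical limit of the identity $\psi_F(L_i)\cdot 1 = 0$. Writing a generic monomial of $L_i$ as $\hbar^{k}\,x^{a_1}\cdots x^{a_p}\,c\,(\hbar\partial_{b_1})\cdots(\hbar\partial_{b_m})$ and using $\psi_F(\hbar\partial_a) = \hbar\partial_a + \partial_a F$, one finds that acting on $1$ produces $\hbar^{k}\,x^{a_1}\cdots x^{a_p}\,c\,\prod_l\partial_{b_l}F$ together with terms in which at least one surviving $\hbar\partial$ differentiates a factor $\partial F$, each of which carries an extra power of $\hbar$. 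Expanding $\partial_a F = \partial_a F|_{\hbar = 0} + O(\hbar)$ and collecting the $\hbar^0$ part then leaves exactly $L_i^{\mathrm{cl}}(x,\partial F_{\mathrm{cl}})$. As $\psi_F(L_i)\cdot 1 = 0$, its $\hbar^0$ part vanishes, so $F_{\mathrm{cl}}$ solves the classical equation; combined with uniqueness this proves both statements of the theorem at once.

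The step I expect to be the main obstacle is the $\hbar \to 0$ extraction just described: one must disentangle the two distinct sources of $\hbar$-dependence — the explicit powers $\hbar^{m+k}$ in the Weyl monomials and the implicit $\hbar$-dependence of $F$ itself — and check that every cross term in which an $\hbar\partial$ hits a factor $\partial F$ is genuinely of positive order in $\hbar$, so that the $\hbar^0$ part is precisely the classical symbol evaluated at $y = \partial F_{\mathrm{cl}}$. An alternative that avoids borrowing existence from Theorem~\ref{thm:free_energy} is to rerun its inductive construction verbatim with the super-commutator replaced by the Poisson bracket; there the obstacle migrates to verifying the integrability relation $\partial_i G_j^{(d)} = (-1)^{|i||j|}\partial_j G_i^{(d)}$, which would follow from the classical involutivity $\{L_i^{\mathrm{cl}}, L_j^{\mathrm{cl}}\} = \mathrm{Cl}(f_{ij}^k)\,L_k^{\mathrm{cl}}$ restricted to $\Sigma$.
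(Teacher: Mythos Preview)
Your proposal is correct. The paper itself does not give a detailed proof of this theorem; it simply states that the result follows ``by repeating the steps performed in the proof of existence and uniqueness of the partition function associated to a super quantum Airy structure,'' i.e.\ by rerunning the induction of Theorem~\ref{thm:free_energy} with the Poisson bracket in place of the commutator.

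Your route differs from this in one respect: for existence you do not rerun the classical induction but instead borrow the quantum free energy $F$ from Theorem~\ref{thm:free_energy}, set $F_{\mathrm{cl}}=F|_{\hbar=0}$, and verify directly that the $\hbar^0$-part of $\psi_F(L_i)\cdot 1$ equals $L_i^{\mathrm{cl}}(x,\partial F_{\mathrm{cl}})$. This buys you existence and the identification $F_{\mathrm{cl}}=\mathrm{Cl}(F)$ simultaneously, whereas the paper's suggested approach would first produce a unique classical $F_{\mathrm{cl}}$ and would still need a separate (easy) step matching the classical recursion to the $\hbar=0$ part of the quantum recursion at each degree. Your observation that evenness is not needed for uniqueness in the classical case is also a small sharpening: with $\hbar$ absent, an integration ``constant'' depending only on $x^0$ has degree at most $1$ and is therefore killed by the degree $\geq 3$ requirement alone, whereas in the quantum proof terms like $\hbar^k x^0$ have arbitrarily high degree and one genuinely uses that $F$ is even.
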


\subsubsection{Super Classical Airy Structures and Quantization}

In the previous subsection we defined the classical limit of super quantum Airy structures. In fact, we could have started by defining super classical Airy structures, and think of super quantum Airy structures as quantizations of classical structures, in the spirit of \cite{KS}.

\begin{definition}
Let $V$ be a super vector space, $\widetilde V = V \oplus {\mathbb K}^{0|1}$, and $W = \widetilde V \oplus \widetilde V^*$.  A \emph{super classical Airy structure} is a pair $(V,L^{\mathrm{cl}})$, with $L^{\mathrm{cl}}: \ V^* \to {\mathbb K}[[W]]$ an even continuous linear operator such that:
\begin{enumerate}
\item The left ideal $\mathfrak L \subseteq {\mathbb K}[[W]]$ generated by $L^{\mathrm{cl}}(V^*)$ is involutive under the Poisson bracket, i.e. $\{ \mathfrak L, \mathfrak L\} \subseteq  \mathfrak L$;
\item $L^{\mathrm{cl}}(y_i)^{\leq 1} = y_i$.
\end{enumerate}
\end{definition}

We have seen that every super quantum Airy structure determines uniquely its classical limit. But given a super classical Airy structure, does there exist a quantization that is a super quantum Airy structure? This turns out to be a difficult problem in general.

Nevertheless, it  admits a simple solution in the special case of quadratic $L^{\mathrm{cl}}.$
Define $L$ initially by replacing $y_a \mapsto \hbar \partial_a$, with all derivatives to the right of $x^a$. Then all $L(\phi)$ are well-defined elements of the Weyl algebra, but they need not satisfy correct commutation relations. Simple calculation shows that
\begin{equation}
\zeta(\phi, \psi) =  \frac{1}{\hbar} L([\phi,\psi])- \frac{1}{\hbar^2} [L(\phi),L(\psi)]
\end{equation}
is a number for each $\phi, \psi \in V^*$. It follows from the Jacobi identity that $\zeta$ is a Lie superalgebra cocycle (see \cite{Fuks:cohomology} for the relevant definitions). Clearly $\zeta$ is even and continuous. Upon replacement
\begin{equation}
L(\phi) \mapsto L(\phi) + \phi(v), \quad v \in V_0,
\label{eq:L_normal_ordering_constant}
\end{equation}
$\zeta$ changes by a coboundary. It follows that the cohomology class $[\zeta] \in H^2(V^*,{\mathbb K})_0$ (continuous, even Lie superalgebra cohomology) doesn't depend on the operator ordering prescription. Moreover quantization of $L^{\mathrm{cl}}$ exists if and only if $[\zeta]=0$. In the finite-dimensional case it is guaranteed that $\zeta$ is a coboundary, because Weyl quantization is always possible. In this ordering scheme mixed terms $x^a y_b$ are replaced by $\frac{\hbar}{2} \left( x^a \partial_b + (-1)^{|a| |b|} \partial_b x^a \right)$.

Now, given a quantization, one can still ask if other quantizations may be obtained by performing transformations of the form (\ref{eq:L_normal_ordering_constant}). This is possible only if $v$ is such that $[\phi, \psi](v)=0$ for any $\phi, \psi$. In other words, $v$ has to be a cocycle. Therefore if a quantization exists, its ambiguity is measured by the cohomology group $H^1(V^*,\mathbb K)_0$.

\subsubsection{Bosonic Classical Airy Structures}

\begin{definition}\label{def:bosonic}
We call a classical Airy structure $L^{\mathrm{cl}}: \ V^* \to {\mathbb K}[[W]]$ \emph{bosonic}, if $V=V_0$ and $L^{\mathrm{cl}}$ does not depend on the extra fermionic variable comming from ${\mathbb K}^{0|1}$.
\end{definition}

Given any super classical Airy structure $L^{\mathrm{cl}}:  \ V^* \to {\mathbb K}[[W]]$, one can always produce a bosonic one based on $V_0$. Indeed, consider a restriction of this map to $L^{\mathrm{cl}}_0: \ (V_0)^*  
\to {\mathbb K}[[W]]$. Since $L^{\mathrm{cl}}$ is even, so is its restriction. Therefore image of $L^{\mathrm{cl}}_0$ consists of even elements. Those elements are linear  
combination of monomials, which can be of two types:  
either have no fermionic variables or have an even number of them.  
Observe that both of these subsets are closed under the  
Poisson bracket. Let $W_0=V_0\oplus V_0^*$ and let $\pi_0: \ {\mathbb K}[[W]] \to {\mathbb K}[[W_0]]$ be the  
projection onto the subspace spanned by those elements which have no  
fermionic variables. We define a bosonic Airy structure as a  
composition $L^{\mathrm{cl},b}=\pi_0\circ L^{\mathrm{cl}}_0:  \ (V_0)^* \to {\mathbb K}[[W_0]]$. The subspace spanned by those monomials in  
${\mathbb K}[[W]]$, which have even and nonzero number  
of fermionic variables, is a left ideal with respect to the Poisson bracket. Therefore the condition  
$\{ L^{\mathrm{cl}}_i, L^{\mathrm{cl}}_j\} = \hbar f_{ij}^k L^{\mathrm{cl}}_k$, for some $f_{ij}^k\in {\mathbb K}[[W]]$, implies that $\{ L^{\mathrm{cl},b}_i, L^{\mathrm{cl},b}_j \} = \hbar  
\pi_0(f_{ij}^k) L^{\mathrm{cl},b}_k$. This proves that $L^{\mathrm{cl},b}$ is a classical Airy structure.

\begin{remark}
The analog of the construction outlined above doesn't work for quantum super Airy structures in general. The reason is that the commutator of two terms with even, nonzero number of fermionic variables may contain terms with no fermionic variables. Instead one may consider the classical limit, remove fermionic variables and generators and then try to quantize again. This is always possible to carry out (but is possibly ambiguous) in the finite-dimensional case.
\end{remark}


\section{Finite-Dimensional Examples}    \label{sec-finite}

In this section we study examples of finite-dimensional quadratic super quantum Airy structures. Moreover we propose a classification scheme for these objects.

\subsection{Low-Dimensional Examples}

Our first step is to consider super vector spaces $V$ of low dimension. The ``purely bosonic" case, with ${\rm dim}\,V_1 = 0$ and no extra fermionic variable, is not the subject of our current studies. The case with ${\rm dim}\, V_1= 0$ but with an extra fermionic variable, i.e. $\widetilde V = V_0 \oplus \mathbb{K}^{0|1}$ is straightforward. Thus we will assume that ${\rm dim}\,V_1 \geq 1$. The ``purely fermionic" case, with ${\rm dim}\,V_0 = 0$, is essentially trivial, since the only allowed super quantum Airy structure has the form
\begin{equation}
L_i = \hbar \partial_{\theta_i},
\end{equation}
with $\theta_i$ odd variables. Thus we may also assume that ${\rm dim}\,V_0 \geq 1$.

\subsubsection{The Superalgebras of Dimension $(1|1)$}

The simplest non-trivial case then consists in the superalgebras of dimension $(1|1)$. In this section we classify all quadratic super quantum Airy structures that can be constructed from these superalgebras, assuming that we have no extra fermionic variables. We leave the case with an extra fermionic variables for future work.

\begin{remark}
We remark that from the point of view of the partition function, all examples based on superalgebras of dimension $(1|1)$ with no extra fermionic variables are rather trivial, since the partition function is purely bosonic (i.e. does not depend on the single Grassmann variable entering in the construction). This follows directly from the requirement that the free energy is $\mathbb{Z}_2$-even. Nevertheless, from an algebraic viewpoint it is interesting to classify which quadratic super quantum Airy structures can be constructed based on superalgebras of dimension $(1|1)$.
\end{remark}

Up to isomorphism, there exist three distinct complex Lie superalgebras of dimension $(1|1)$. We denote the bosonic generator by $L$ and the fermionic generator by $G$.
\begin{enumerate}
\item The abelian superalgebra, with commutation relations:
\begin{equation}
[L,L]=0, \qquad [L,G] = 0, \qquad [G,G]=0.
\end{equation}
\item The algebra of affine automorphisms of $\mathbb C^{0|1}$, with commutation relations:
\begin{equation}
[L,L]=0, \qquad [L,G] = \hbar G, \qquad [G,G] = 0.
\end{equation}
\item The $\mathcal N=1, d=1$ supersymmetry (SUSY) algebra, with commutation relations:
\begin{equation}
[L,L]=0, \qquad [L,G] =  0, \qquad [G,G] = \hbar L.
\end{equation}
\end{enumerate}
Below we provide all quadratic super quantum Airy structures that can be constructed as representations of these algebras, up to changes of bases and gauge transformations. This classification can be proved directly, by brute force calculations. For the sake of brevity, we omit the details.

In the following, $A,B,C,D \in \mathbb{C}$ always stand for arbitrary constants. We denote the bosonic variable by $x$ and the fermionic variable by $\theta$.
\begin{enumerate}
    \item For the abelian superalgebra, there are three families of quadratic super quantum Airy structures. The first one takes the form:
    \begin{subequations}
    \begin{gather}
    G = (1-x) \hbar \partial_{\theta},\\
    L = \hbar \partial_x - \frac{1}{2} A x^2 - \hbar (x \partial_x + \theta \partial_{\theta}) - \hbar D,     \end{gather}
    \end{subequations}
    while the second one is:
    \begin{subequations}
    \begin{gather}
            G = (1 - \hbar \partial_x) \hbar \partial_{\theta},\\
        L = \hbar \partial_x - \frac{\hbar^2}{2} C \partial_x^2 - \hbar D.
    \end{gather}
    \end{subequations}
    The third possibility is
    \begin{subequations}
    \begin{gather}
    G = \hbar \partial_{\theta},\\
    L = L_0,
    \end{gather}
    \end{subequations}
    where $L_0$ is an arbitrary $\theta$-independent bosonic generator.

    \item For the algebra of affine automorphisms of $\mathbb C^{0|1}$, there are also three families of quadratic super quantum Airy structures.The first one is:
    \begin{subequations}
    \begin{gather}
        G = (1-x) \hbar \partial_{\theta}, \\
        L = \hbar \partial_x - \frac{1}{2} A x^2 - \hbar x \partial_x - 2 \hbar \theta \partial_{\theta} - \hbar D.
    \end{gather}
    \end{subequations}
    The second family reads:
    \begin{subequations}
    \begin{gather}
    G = (1 - \hbar \partial_x) \hbar \partial_{\theta}, \\
    L = \hbar \partial_x - \hbar \theta \partial_{\theta} - \frac{\hbar^2}{2} C_x^{xx} \partial_x^2 - \hbar D,
    \end{gather}
    \end{subequations}
    while the third one is:
    \begin{subequations}
    \begin{gather}
        G = \hbar \partial_{\theta}, \\
        L = - \hbar \theta \partial_{\theta} + L_0,
    \end{gather}
    \end{subequations}
    where $L_0$ is an arbitrary $\theta$-independent bosonic generator.

    \item For the $\mathcal N=1, d=1$ SUSY algebra we have $L = \frac{2}{\hbar} G^2$, so it is sufficient to provide the form of $G$. There are again three possibilities:
    \begin{subequations}
    \begin{gather}
        G = \hbar \partial_{\theta} + \frac{\hbar}{2} \left( \theta \partial_x + x \partial_{\theta} \right), \\
        G = \hbar \partial_{\theta} + \frac{\hbar}{2} \theta \partial_x - \hbar^2 \partial_x \partial_{\theta}, \\
        G = \hbar \partial_{\theta} + \frac{\hbar}{2} \theta \partial_x.
    \end{gather}
    \end{subequations}
\end{enumerate}

\subsubsection{Superalgebras of Dimensions $(2|1)$ and $(1|2)$}

The brute force classification of quadratic super quantum Airy structures that can be obtained as representations of superalgebras of dimensions $(2|1)$ and $(1|2)$ is already a little tedious. We will thus only present a few interesting examples here.

\begin{example}
Our first example starts with the $(2|1)$-dimensional $\mathcal N=1, d=1$ SUSY algebra extended by a dilatation operator. We denote the two bosonic generators by $L_1,L_2$ and the fermionic generator by $G$.
It has commutation relations:
\begin{equation}\label{eq:extended}
[L_1,L_2]= \hbar L_2, \qquad [L_1,G] =  \frac{\hbar}{2}  G, \qquad [L_2,G] = 0, \qquad [G,G] = \hbar L_2.
\end{equation}
We construct a quadratic super quantum Airy structure with no extra fermion as a~representation of this algebra. We denote the bosonic variables by $x,y$ and the fermionic variable by $\theta$.

The quadratic super quantum Airy structure reads:
\begin{subequations}
\begin{gather}
    L_1 =   \hbar \partial_x   -  \frac{1}{2} A x^2 - \hbar \left( \frac{1}{2} + B \right) \theta \partial_{\theta} -  \hbar (1+B) y \partial_y - \hbar B x \partial_x - \hbar D, \\
    L_2 = \hbar \partial_y - \hbar  x \partial_y, \\
    G = \hbar \partial_{\theta} + \frac{\hbar}{2} \theta \partial_y - \hbar B x \partial_{\theta},
\end{gather}
\end{subequations}
where $A,B,D \in \mathbb{C}$ are arbitrary constants.

Note that, as for the $(1|1)$ examples, the partition function here does not depend on the fermionic variable $\theta$, since it must be $\mathbb{Z}_2$-even.
\end{example}

Let us now study examples where the partition function depends on fermionic variables.

\begin{example}
For our next example, we start with the same $(2|1)$-dimensional $\mathcal N=1, d=1$ SUSY algebra extended by a dilatation operator, with commutation relations \eqref{eq:extended}, but we construct a quadratic super quantum Airy structures with an extra fermionic variable. We denote the bosonic variables by $x^1,x^2$ and the fermionic variables by $\theta^0, \theta^1$.

The representation reads:
\begin{subequations}
\begin{eqnarray}
L_1 &  = & \hbar \partial_{x^1}  - 2 \hbar x^1\partial_{x^1} - \hbar x^2\partial_{x^2} - \frac{\hbar}{2}\theta^1\partial_{\theta^1} + \frac{3 \hbar}{2} \theta^0\partial_{\theta^0},
\\
L_2 & = & \hbar \partial_{x^2 } - \beta \hbar x^2\partial_{x^1} - \theta^0 \theta^1,
\\
G & = & \hbar\partial_{\theta^1} + \theta^0x^2 + \frac{\hbar}{2} \theta^1\partial_{x^2}- \frac{\hbar^2}{2}\beta\partial_{\theta^0}\partial_{x^1},
\end{eqnarray}
\end{subequations}
with $\beta \in {\mathbb C}$ an arbitrary constant.

The constraints $L_1 Z = L_2 Z = G Z = 0$ uniquely fix the partition function $Z$. The result is
\begin{equation}
Z =  \exp \left( \frac{1}{\hbar} x^2 \theta^0 \theta^1 \right).
\end{equation}
\end{example}

\begin{example}
Our last example in this section is a quadratic super quantum Airy structure with no extra fermionic variable, but such that its partition function depends on fermionic variables. To this end, we start with a superalgebra of dimension $(1|2)$. We denote its bosonic generator by $L$ and its fermionic generators by $G_1, G_2$, and the corresponding bosonic and fermionic variables by $x$ and $\theta_1, \theta_2$ respectively.

We choose the superalgebra with commutation relations:
\begin{equation}
\left[L,G_i\right] = \hbar\, G_i, \hskip 1cm \left[G_i,G_j\right] = 0, \hskip 1cm i,j = 1,2.
\end{equation}
Our representation is:
\begin{subequations}
\begin{eqnarray}
L & = & \hbar\partial_x  - x^2 - \theta^1\theta^2 + \hbar^2 \partial_x^2 + \hbar^2\partial_{\theta^1}\partial_{\theta^2},
\\[4pt]
\label{Airy:structure:simple:1:2}
G_1 & = & \hbar\partial_{\theta^1} - x\theta^2 + \hbar x\partial_{\theta^1} - \hbar\theta^2 \partial_x + \hbar^2\partial_x\partial_{\theta^1},
\\[4pt]
G_2 & = & \hbar\partial_{\theta^2} + x\theta^1 + \hbar x\partial_{\theta^2} + \hbar\theta^1 \partial_x + \hbar^2\partial_x\partial_{\theta^2}.
\end{eqnarray}
\end{subequations}

From the constraints $L Z = G_1 Z = G_2 Z = 0$, it is straightforward to compute the free energy perturbatively. Up to terms of order five in the variables $x,\theta^1,\theta^2$, we get:
\begin{equation}
\label{free:energy:simple:1:2}
F  =  \frac13x^3 - \frac15 x^5 + \left(x-x^3\right)\theta^1\theta^2 - \hbar \left(x^2 + \theta^1\theta^2\right) + \hbar^2 x + \ldots .
\end{equation}
Thanks to the simplicity of the Airy structure (\ref{Airy:structure:simple:1:2}) the classical free energy can be in this case expressed in terms of elementary functions:
\begin{equation}
\label{Airy:structure:simple:Fcl}
F_{\rm cl} =
\frac18\log\left(2x +\sqrt{1+4x^2}\right)+ \frac{x}{4}\sqrt{1+4x^2} - \frac{x}{2}
+
\frac{\sqrt{1+4x^2} + 2x -1}{\sqrt{1+4x^2} + 2x +1}\theta^1\theta^2.
\end{equation}

We will use this example to illustrate the action of the gauge symmetry on the partition function. Notice that under the gauge transformation
\begin{equation}
x \;  \to \; x + \hbar\partial_x,
\hskip 1cm
\theta^1 \; \to \; \theta^1 - \hbar\partial_{\theta^2},
\hskip 1cm
\theta^2 \; \to \; \theta^2 + \hbar\partial_{\theta^1},
\end{equation}
the operators go to
\begin{subequations}
\begin{eqnarray}
L \; \to \; L' & = & \hbar\partial_x - x^2 - 2\hbar x\partial_x - \hbar - \theta^{1} \theta^2 - \hbar\theta^1\partial_{\theta^1} - \hbar \theta^2\partial_{\theta^2},
\\[4pt]
G_1 \; \to \; G_1' & = & \hbar\partial_{\theta^1} - x \theta^2 - 2\hbar\theta^2\partial_x,
\\[4pt]
G_2 \; \to \; G_2' & = & \hbar\partial_{\theta^2} + x \theta^1 + 2\hbar\theta^1\partial_x.
\end{eqnarray}
\end{subequations}
In particular, $L', G_1'$ and $G_2'$ are now all first order differential operators, and we can solve explicitly for the free energy at the quantum level. We get:
\begin{equation}
\label{Airy:structure:simple:Fprim}
F' = -\frac14x(x+1) -\frac18(1+4\hbar)\log(1-2x) + \frac{x+2\hbar}{1-2x}\theta^1\theta^2.
\end{equation}

According to our discussion in Section \ref{ss:cl:q} -- see \eqref{eq:gaugeZ} -- we have an identity
\begin{equation}
\label{gauge:tarnform:explicite}
{\rm e}^{\frac{1}{\hbar}F} = \sum\limits_{k=0}^\infty \frac{(-\hbar)^k}{k!} \left(\left[{\textstyle \frac12}\partial_x^2 + \partial_1\partial_2,\,\cdot\,\right]\right)^k {\rm e}^{\frac{1}{\hbar}F'}.
\end{equation}
Using a simple Mathematica code, validity of (\ref{gauge:tarnform:explicite})
may be checked to any required order (at least for the explicitly known classical part of the l.h.s. of (\ref{gauge:tarnform:explicite})).

\end{example}


\subsection{Classification scheme}
\label{s:classification}

So far we have studied a few examples of low-dimensional quadratic super quantum Airy structures. In this section we present a classification scheme for finite-dimensional super quantum Airy structures whose vector space $V^*$ is a finite-dimensional Lie superalgebra $\mathfrak g$. We focus on finite-dimensional super quantum Airy structures, but we remark here that part of the discussion is also valid for the infinite-dimensional case.

Let $\mathfrak g$ be a finite-dimensional Lie superalgebra. We denote its structure constants by $f$.
The process of finding  all quadratic super quantum Airy structures for $\mathfrak g$ can be divided into three steps:
\begin{enumerate}
\item
Classify at most quadratic classical hamiltonians $L_i^{\mathrm{cl}}$ up to affine automorphism.
\item
Look for points of the zero locus $L_i^{\mathrm{cl}}=0$ at which gradients of $L_i^{\mathrm{cl}}$ are linearly independent in order to rewrite $L_i^{\mathrm{cl}}$ in the form required by the definition of super classical Airy structures.
\item
Quantize the system.
\end{enumerate}
Parts of this procedure are expressed as classical problems in representation theory (classify all representations of a given dimension, find all invariant symplectic forms, compute certain cohomology groups) or algebraic geometry (describe the zero locus of a given set of polynomials), whose solutions are known at least in certain special cases.

\begin{remark}
For clarity in this section we will write $L_i$ for the classical hamiltonians, dropping the superscript $\mathrm{cl}$.
\end{remark}

\subsubsection{Purely Quadratic Hamiltonians}

As a first step one has to classify all representations of $\mathfrak g$ by purely quadratic hamiltonians:
\begin{equation}
\{ L_i^2 , L_j^2 \} = f_{ij}^k L_k^2,
\label{eq:purely_quadratic_Poisson}
\end{equation}
where $\{ \cdot, \cdot \}$ is the Poisson bracket. $L_i^2$ have to be polynomials in $2 \dim \mathfrak g$ variables (or with one extra Grassman variable together with its conjugate momentum) $z_{a}$ subject to the elementary Poisson bracket relations
\begin{equation}
\{ z_a, z_b \} = \omega_{ab},
\label{eq:Poisson_symplectic}
\end{equation}
where $\omega$ is a symplectic form on the super vector space $W$ with basis $\{ z_a \}$. Hence
\begin{equation}
L_i^2 = \frac{1}{2} z_a M_i^{ab} z_{b},
\label{eq:quadratic_L}
\end{equation}
with $M_i^{ab} = (-1)^{|a||b|} M_i^{ba}$. This symmetry condition is equivalent to demanding that the linear operator on $W$ given by $z_a \mapsto z_b M^{bc}_i \omega_{ca}$ is an infinitesimal symplectomorphism. The Poisson bracket relations (\ref{eq:purely_quadratic_Poisson}) are satisfied if and only if the operators $M_i$ furnish a linear representation of $\mathfrak g$ on $W$. Conversely, given any representation of $\mathfrak g$ with an invariant symplectic form $\omega$ we may construct the space ${\mathbb K}[z]$, with Poisson bracket given by (\ref{eq:Poisson_symplectic}), and then we define the $L_i^2$ by (\ref{eq:quadratic_L}).

To summarize, the classification of purely quadratic hamiltonians (depending on a given number of variables) representing the algebra $\mathfrak g$ is equivalent to classifying all symplectic representations\footnote{Two symplectic representations $W_1, W_2$ are regarded as equivalent if there exists an even $\mathfrak g$-intertwiner $W_1 \to W_2$ which is also a symplectomorphism.} of $\mathfrak g$ of dimension $2 \dim \mathfrak g$ or $2 \dim \mathfrak g + 0|2$. 

\subsubsection{Affine Extensions}
\label{s:affine}

Now fix a representation of $\mathfrak g$ by quadratic hamiltonians $L_i^2$. We ask if we can add linear and constant terms $L_i^1, L_i^0$ such that the commutation relations are preserved:
\begin{subequations}
\begin{gather}
L_i = L_i^2 + L_i^1 + L_i^0, \\
\{ L_i, L_j \} = f_{ij}^k L_k. \label{eq:L_PB}
\end{gather}
\end{subequations}
Assuming that equation (\ref{eq:purely_quadratic_Poisson}) holds, (\ref{eq:L_PB}) reduces to the following system of equations:
\begin{subequations}\label{eq:affine_ext_constraints}
\begin{gather}
\{ L_i^2, L_j^1 \} + \{ L_i^1, L_j^2 \} - f_{ij}^k L_k^1 =0, \label{eq:L1_constraint} \\
\{ L_i^1, L_j^1 \} - f_{ij}^k L_k^0 =0. \label{eq:L0_constraint}
\end{gather}
\end{subequations}
We observe that given any solutions of these equations, a new solution may be obtained by shifting even variables:
\begin{align}
L_i'
=& \exp \left( \{\epsilon(z), \cdot \} \right) (L_i) \nonumber\\
=& L_i +  \{\epsilon(z), L_i^1 + L_i^2 \} + \frac{1}{2} \{ \epsilon(z), \{\epsilon(z), L_i^2 \} \},\label{eq:shift}
\end{align}
where $\epsilon(z)$ is an arbitrary linear combination of even $z_a$.
This is equivalent to the replacement:
\begin{subequations}\label{eq:L_shifts}
\begin{gather}
L_i^1 \mapsto L_i^1 +  \{ \epsilon(z) , L_i^2 \}, \\
L_i^0 \mapsto L_i^0 + \{ \epsilon(z), L_i^1 \} + \frac{1}{2} \{\epsilon(z) , \{ \epsilon(z) , L_i^2 \} \}. \label{eq:L0_shifts}
\end{gather}
\end{subequations}
Shifting even variables is an isomorphism of the Poisson algebra. Solutions related by such transformations should be regarded as equivalent at this stage.

\begin{definition}\label{d:equiv}
We say that two solutions of \eqref{eq:L_PB} are \emph{equivalent} if they are related by a shift of even variables as in \eqref{eq:L_shifts}.
\end{definition}

Our plan to classify solutions of equations (\ref{eq:affine_ext_constraints}) up to equivalence is as follows: first solve the constraint (\ref{eq:L1_constraint}), then find solutions of (\ref{eq:L0_constraint}) for a given $L^1_i$.

\begin{lemma}
Given $L_i^2$ that satisfy \eqref{eq:purely_quadratic_Poisson}, the space of all solutions to \eqref{eq:L1_constraint} up to equivalence coincides with the even subspace of the cohomology group $H^1(\mathfrak g, W)$.
\end{lemma}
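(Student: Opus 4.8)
The plan is to identify the linear constraint \eqref{eq:L1_constraint} with the cocycle equation of the Chevalley--Eilenberg complex of $\mathfrak g$ with coefficients in $W$, and the equivalence of Definition \ref{d:equiv} with the corresponding coboundary relation, so that the quotient becomes exactly $H^1(\mathfrak g, W)_0$. First I would fix the module structure. A linear Hamiltonian $L_i^1$ is a linear function on phase space, hence an element of $W^*$; via the symplectic form $\omega$ this is canonically identified with an element $c(i) \in W$, of parity $|c(i)| = |i|$ (recall $L$ is even, so $L_i$ has parity $|i|$). By the observation made above that the $M_i$ furnish a representation of $\mathfrak g$ on $W$ preserving $\omega$, the Poisson action $\phi \mapsto \{ L_i^2, \phi \}$ on linear functions is precisely this $\mathfrak g$-module structure, which I write as $e_i \cdot c(j)$. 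Because $\omega$ is $\mathfrak g$-invariant, the identification $W^* \cong W$ is an isomorphism of $\mathfrak g$-modules, so it is immaterial whether the answer is phrased using $W$ or $W^*$.

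Second, I would rewrite the constraint. Using super-antisymmetry of the Poisson bracket together with $|\epsilon|$-type sign bookkeeping, $\{ L_i^1, L_j^2 \} = -(-1)^{|i||j|} \{ L_j^2, L_i^1 \} = -(-1)^{|i||j|}\, e_j \cdot c(i)$, so that \eqref{eq:L1_constraint} takes the form
\[
e_i \cdot c(j) - (-1)^{|i||j|}\, e_j \cdot c(i) - f_{ij}^k\, c(k) = 0 .
\]
This is exactly the condition that the even linear map $c : \mathfrak g \to W$, $e_i \mapsto c(i)$, be a $1$-cocycle, $dc = 0$, in the super Chevalley--Eilenberg complex (here $[e_i, e_j] = f_{ij}^k e_k$). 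Hence solutions of \eqref{eq:L1_constraint} are in bijection with the even cocycles $Z^1(\mathfrak g, W)_0$.

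Third, I would match the equivalence relation. A shift \eqref{eq:L_shifts} by an even $\epsilon(z) \in W_0$ sends $c(i) \mapsto c(i) + \{ \epsilon, L_i^2 \} = c(i) - e_i \cdot \epsilon$, again by super-antisymmetry and $|\epsilon| = 0$. Since a $0$-cochain is simply an element $v \in W$ with differential $(dv)(e_i) = e_i \cdot v$, and $dv$ is even precisely when $v$ is even, these shifts realize exactly the even coboundaries $B^1(\mathfrak g, W)_0$. Therefore the space of solutions of \eqref{eq:L1_constraint} up to equivalence is $Z^1(\mathfrak g, W)_0 / B^1(\mathfrak g, W)_0 = H^1(\mathfrak g, W)_0$, as claimed.

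The main obstacle I expect is the bookkeeping of super-signs: verifying that the Poisson-bracket antisymmetry reproduces exactly the Koszul signs of the super Chevalley--Eilenberg differential, and confirming that the parity requirement ``$L$ even'' lines up with the restriction to the even subspaces $Z^1_0$ and $B^1_0$ on both the cocycle and the coboundary sides. Once the conventions are pinned down, the remainder is a direct translation, using only the facts already established about the $M_i$ and the invariance of $\omega$.
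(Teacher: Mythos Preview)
Your proposal is correct and follows essentially the same approach as the paper: the paper's proof simply observes that \eqref{eq:L1_constraint} is the $1$-cocycle condition for $\mathfrak g$ with values in $W$, and that the shifts of Definition \ref{d:equiv} are exactly the coboundaries, so the quotient is $H^1(\mathfrak g, W)_0$. Your write-up just fills in the sign bookkeeping and the identification of linear Hamiltonians with elements of $W$ that the paper leaves implicit; note that in the paper's conventions the $z_a$ serve simultaneously as coordinates and as a basis of $W$, so $L_i^1$ already lies in $W$ and the detour through $W^*$ and $\omega$ is harmless but unnecessary.
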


\begin{proof}
Observe that equation (\ref{eq:L1_constraint}) is linear in $L^1$. Moreover it has the form of a $1$-cocycle condition for $\mathfrak g$ valued in the module $W$. Solutions of the form $L_i^1 =  \{ \epsilon(z) , L_i^2 \}$ for some linear combination of even variables $\epsilon(z)$ may be identified with coboundaries. Therefore the space of all solutions up to equivalence coincides with the even subspace of the cohomology group $H^1(\mathfrak g, W)$.
\end{proof}

Let us now turn to \eqref{eq:L0_constraint}. Equation (\ref{eq:L0_constraint}) is quadratic in $L^1$.  If $L^1$ is already chosen such that \eqref{eq:L1_constraint} is satisfied, (\ref{eq:L0_constraint}) is a linear equation for $L^0$. For some $L^1$ it may turn out that there are no solutions at all.

\begin{definition}
We call linear terms $L^1_i$ such that there exists some solution to \eqref{eq:L0_constraint} \emph{admissible}.
\end{definition}

Admissibility depends only on the equivalence class of $L^1$. Indeed, if $L^1$ is such that a consistent $L^0$ can be chosen, then for any $L'^1$ of the form $L'^1_i=L^1_i +  \{ \epsilon(z), L^2_i \}$, a consistent $L'^0$ can be found in the form of (\ref{eq:L0_shifts}).

\begin{lemma}
Given an admissible $L^1_i$ that satisfies \eqref{eq:L1_constraint}, the space of all solutions to \eqref{eq:L0_constraint} is an affine space over $\left( \frac{\mathfrak g}{[\mathfrak g, \mathfrak g]} \right)^*_0$.
\end{lemma}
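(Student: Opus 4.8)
The plan is to exhibit the solution set of \eqref{eq:L0_constraint} as a torsor under a linear space, and then to identify that linear space with $\left( \frac{\mathfrak g}{[\mathfrak g,\mathfrak g]} \right)^*_0$. The starting observation is that, with $L^1_i$ held fixed, equation \eqref{eq:L0_constraint} reads $f_{ij}^k L^0_k = \{ L^1_i, L^1_j \}$, which is an \emph{inhomogeneous linear} equation in the constants $L^0_k$: the right-hand side is a fixed scalar (the Poisson bracket of two linear hamiltonians is constant in $z$), while the left-hand side is linear in $L^0$. Since $L^1_i$ is assumed admissible, by the definition of admissibility the solution set is nonempty; fix one solution. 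Any two solutions $L^0, \widetilde L^0$ then differ by $\delta_k := L^0_k - \widetilde L^0_k$ satisfying the associated homogeneous equation $f_{ij}^k \delta_k = 0$ for all $i,j$, and conversely adding any such $\delta$ to a solution yields another. Hence the solution set is an affine space modelled on the linear space $D := \{ \delta \mid f_{ij}^k \delta_k = 0 \ \forall i,j \}$.

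It remains to identify $D$. First I would record the parity constraint: each $L^0_k$ is a scalar in $\mathbb K$, hence even, so it must vanish whenever $y_k$ is odd (an odd operator $L_k$ cannot carry an even constant term). Thus every element of $D$ is even. Next, regard the tuple $(\delta_k)$ as the linear functional $\delta \in \mathfrak g^*$ determined by $\delta(y_k) = \delta_k$ on the basis $\{ y_k \}$ of $\mathfrak g = V^*$. In this language the homogeneous condition becomes $\delta([y_i,y_j]) = f_{ij}^k \delta(y_k) = 0$ for all $i,j$, i.e. $\delta$ annihilates a spanning set of $[\mathfrak g, \mathfrak g]$, which is exactly the statement that $\delta$ vanishes on $[\mathfrak g, \mathfrak g]$ and therefore descends to a functional on $\mathfrak g / [\mathfrak g, \mathfrak g]$. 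Conversely, any even functional on $\mathfrak g$ killing $[\mathfrak g,\mathfrak g]$ evidently solves $f_{ij}^k \delta_k = 0$. This gives the canonical identification $D \cong \left( \frac{\mathfrak g}{[\mathfrak g,\mathfrak g]} \right)^*_0$, completing the proof.

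The only genuine care needed — and the single place where the super-structure enters — is the parity bookkeeping of the previous paragraph: one must note that $[\mathfrak g,\mathfrak g]$ is a $\mathbb{Z}_2$-graded subspace, and check that imposing $\delta$ even together with $\delta|_{[\mathfrak g,\mathfrak g]} = 0$ is the same as passing to the even part of the quotient dual, so that no odd directions are spuriously gained or lost. Everything else is the standard linear-algebra fact that the solution set of a consistent inhomogeneous linear system is a coset of the kernel of its homogeneous part; I expect this identification of the even part of $\left( \mathfrak g/[\mathfrak g,\mathfrak g] \right)^*$ to be the main (though still routine) point, with no analytic or cohomological obstacle analogous to the $H^1(\mathfrak g, W)$ that appeared in the treatment of \eqref{eq:L1_constraint}.
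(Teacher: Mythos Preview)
Your proposal is correct and follows essentially the same route as the paper: fix one solution (by admissibility), observe that the difference of two solutions is an even functional on $\mathfrak g$ annihilating $[\mathfrak g,\mathfrak g]$, and identify the resulting kernel with $\left(\mathfrak g/[\mathfrak g,\mathfrak g]\right)^*_0$. Your version is in fact a bit more explicit about the parity bookkeeping and the inhomogeneous-linear structure than the paper's terse argument, but the substance is identical.
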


\begin{proof}
Suppose that $L^1$ is admissible and pick one $L^0$ satisfying (\ref{eq:L0_constraint}). Then $L'^0$ is another solution if and only if
\begin{equation}
f_{ij}^k \left( L'^0_k - L^0_k \right)=0.
\end{equation}
This equation means that the even linear functional $L'^0-L^0 \in \mathfrak g_0^*$ vanishes on the commutator ideal $[\mathfrak g, \mathfrak g]$. Therefore it may be regarded as an element of $\left( \frac{\mathfrak g}{[\mathfrak g, \mathfrak g]} \right)_0^*$.
\end{proof}

\subsubsection{Choice of the Origin and a Lagrangian Complement} \label{sec:smooth_point}

Suppose that we have chosen a set of at most quadratic hamiltonians $L_i$ satisfying (\ref{eq:L_PB}). Now we look for hamiltonians in the same equivalence class that are in the form required by the definition of super classical Airy structures. In other words, we want to perform a shift \eqref{eq:shift} to bring the~hamiltonians in a form such that $L^0_i=0$ and the $L^1_i$ are linearly independent. After such shift, $L_i^0$ changes according to \eqref{eq:L0_shifts}. Requiring that the new $L_i^0$ vanishes, we get an equation for $\epsilon(z)$:
\begin{equation}
L_i^0 +  \{ \epsilon(z), L_i^1 \} + \frac{1}{2} \{ \epsilon(z) , \{ \epsilon(z) , L_i^2 \} \}=0.
\end{equation}
Its set of all solutions may be identified with the zero locus $\Sigma = \{ z \in W_0 \ | \ (L_i)_0(z)=0  \}$ (here the~subscript $0$ means that we ignore odd variables and odd generators of $\mathfrak g$). Not all solutions are admissible, because we must ensure that after the shift, the linear terms $L_i^1$ are~linearly independent. This means that we must keep only those points for which the matrix of partial derivatives
\begin{equation}
D^a_i = \left. \frac{\partial}{\partial z_a} L_i \right|_{z=0}
\end{equation}
has rank $\dim(\mathfrak g)$. The set of all elements of $\Sigma$ satisfying this condition will be denoted by $\Sigma_{s}$. It~is~a~Zariski open subset of $\Sigma$. In \cite{KS} elements of $\Sigma_{s}$ were called the smooth points of $\Sigma$. We~note that this is not completely consistent with the standard terminology. For example the zero locus $Z(f)$ of the polynomial $f(x,y) = x^2 \in \mathbb K[x,y]$ is nonsingular, even though we have $\left. df \right|_{Z(f)}=0$.

Once a point of $\Sigma_s$ is chosen and the generators $L_i$ are put in a form with $L^0_i=0$ with linearly independent $L^1_i$, we define
\begin{equation}
y_i = L_i^1.
\end{equation}
Equation (\ref{eq:L0_constraint}) combined with $L^0_i=0$ gives
\begin{equation}
\{ y_i, y_j \}=0.
\end{equation}
Therefore the variables $y_i$ span an isotropic subspace in $W$. If there is no ``extra'' odd variable, that is $\dim W = 2 \dim \mathfrak g$, this subspace is Lagrangian. For simplicity, for the remainder of this section we restrict attention to this special case. The analysis in the situation with an additional fermion has to be slightly adjusted. The linear span of $y_i$ will be denoted by $V^*$.

We may now find elements $x^i$ such that
\begin{subequations}
\begin{gather}
\{ y_i, x^j \} = \delta_i^j, \\
\{ x^i, x^j \} =0.
\end{gather}
\end{subequations}
The set of all solutions to these conditions is in one-to-one correspondence with the set of Lagrangian complements $V$ of the subspace $V^* \subset W$ spanned by $y_i$. Once the $L_i$ are expressed in terms of the $y_i$ and $x^i$, we have a super classical Airy structure.

Let us end this section by specifying how the super classical Airy structures constructed by the procedure outlined above depend on a number of arbitrary choices made along the way.

First of all, there is an ambiguity in the choice of $x^i$, or equivalently, in the choice of the Lagrangian complement $V$ of the subspace $V^* \subseteq W$. As described earlier, this ambiguity is precisely the so-called gauge freedom. We regard Airy structures related by a gauge transformation as equivalent. Indeed, not only gauge-transformed generators are related by an explicit automorphism of the Poisson algebra, but also (after quantization) partition functions are related by a formal gaussian smearing transformation.

Secondly, we have chosen a point of $\Sigma_s$. We can get more such points by exponentiating the~action of $\mathfrak g_0$ on ${\mathbb K}[z]_0$. This is possible, because action of a Lie algebra on a finite-dimensional vector space exponentiates to an action of the corresponding Lie group, and each $\{ L_i, \cdot \}$ preserves a filtration of ${\mathbb K}[z]$ by finitely-dimensional subspaces $\{ {\mathbb K}[z]^{\leq n} \}_{n \in \mathbb N}$ of all polynomials of degree at most $n$. Clearly hamiltonians related by $G$-transformations give rise to isomorphic Airy structures. It may happen that the set $\Sigma_s$ is disconnected. Then it is possible that several non-isomorphic Airy structures may be obtained from the same set of hamiltonians.

Since the dimension of $\Sigma_s$ coincides with the dimension of $\mathfrak g_0$ and the stabilizer of each point of $\Sigma_s$ in $G$ is discrete, $G$ acts locally transitively on $\Sigma_s$. Therefore $G$-orbits are open in $\Sigma_s$ (with respect to the analytic topology). Since set $\Sigma_s$ is semialgebraic, it has finitely many connected components. Therefore its connected components are clopen. Combining these two facts we conclude that the orbits of the $G$-action on $\Sigma_s$ are precisely the connected components of $\Sigma_s$. Notice that this is not necessarily true for the $G$-action on the whole $\Sigma$, since the $G$-orbits in $\Sigma$ are in general not open in $\Sigma$. For example if $L_i$ are homogeneous of degree $2$, then $z=0$ is always a solution which is a fixed point for $G$, hence an orbit. Except for some trivial cases it is not a discrete point of $\Sigma$, hence we have orbits of $G$ in $\Sigma$ which are proper subsets of their connected components. For any orbit $O \subseteq \Sigma_s$ the stabiliser $\Gamma$ of any $p \in O$ is discrete. This means that $O$ may be identified with the homogeneous space $\frac{G}{\Gamma}$ and that $G$ is the universal covering space of $O$. In particular, we conclude that the homotopy groups of $O$ are given by $\pi_1 (O) \cong \Gamma$ and $\pi_k (O) \cong \pi_k(G)$ for $k \geq 2$. Finally, we observe that the hamiltonian vector fields generated by $L_i$ provide a global framing for the tangent bundle of $O$, so $TO \twoheadrightarrow O$ is trivial.

\subsubsection{Quantization}

Suppose that we have constructed a quadratic super classical Airy structure through the procedure outlined above. We then ask whether this classical Airy structure may be lifted to the quantum level. We will see that, in the finite-dimensional case, the answer is always affirmative. (In the infinite-dimensional case, there is a possible cohomological obstruction). Moreover it turns out that the set of all consistent quantizations is an affine space over $\left( \frac{\mathfrak g}{[\mathfrak g, \mathfrak g]} \right)_0^*$.
\begin{lemma}
Let the $L_i$ form a quadratic super classical Airy structure, and define
\begin{equation}
\zeta_{ij} = \frac{1}{\hbar^2} \left( [L_i,L_j] - \hbar f_{ij}^k L_k \right),
\end{equation}
which is a constant, independent of $x_i$, $y_i$ and $\hbar$.
Then the $L_i$ can be lifted to a quadratic super quantum Airy structure if and only if the cohomology class $[\zeta] \in H^2(\mathfrak g, \mathbb K)_0$ vanishes. If this is the case, the space of all consistent quantizations is an affine space over $\left( \frac{\mathfrak g}{[\mathfrak g, \mathfrak g]} \right)_0^*$.
\end{lemma}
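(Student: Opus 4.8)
The plan is to reduce the statement to the general quantization analysis developed earlier for super classical Airy structures, specialized to the case $V^* = \mathfrak g$, and then to identify the relevant Lie superalgebra cohomology groups explicitly.

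First I would fix an ordering prescription (say, all $x^a$ to the left of the $\hbar\partial_a$) that lifts the classical hamiltonians $L_i = L_i^2 + y_i$ to operators in $\mathcal W_{\hbar}(\widetilde V)^{\leq 2}$ with $L_i^{\leq 1} = \hbar\partial_i$, so that $[L_i,L_j]$ and hence $\zeta_{ij}$ are defined. To see that $\zeta_{ij}$ is a scalar I would argue by degree: both $[L_i,L_j]$ and $\hbar f_{ij}^k L_k$ lie in degree $\leq 4$, while $\mathrm{Cl}\big(\tfrac1\hbar[L_i,L_j]\big) = \{L_i^{\mathrm{cl}},L_j^{\mathrm{cl}}\} = f_{ij}^k L_k^{\mathrm{cl}}$ shows that $[L_i,L_j] - \hbar f_{ij}^k L_k$ is divisible by $\hbar^2$; dividing by $\hbar^2$ (degree $-4$) therefore leaves a degree-$0$ element of $\widehat{\mathcal W_{\hbar}}(\widetilde V)$, i.e. an element of $\mathbb K$. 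Being even and $\mathbb K$-valued, $\zeta$ defines an even Chevalley--Eilenberg $2$-cochain on $\mathfrak g$ with trivial coefficients, and the super Jacobi identity for the commutator forces the cocycle condition, so $[\zeta] \in H^2(\mathfrak g,\mathbb K)_0$.

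Next I would check that $[\zeta]$ is independent of the lift and that it is precisely the obstruction to quantization. Having fixed the classical symbol, the only remaining freedom is the addition of a constant term $\hbar c_i$ with $c_i \in \mathbb K$ even (this absorbs both the reordering ambiguity of $L_i^2$ and genuine changes of the $D_i$, cf. the shift \eqref{eq:L_normal_ordering_constant}). A one-line computation gives $\zeta_{ij} \mapsto \zeta_{ij} - f_{ij}^k c_k$, and since $(dc)(y_i,y_j) = -f_{ij}^k c_k$ is exactly the coboundary of the $1$-cochain $c$, the class $[\zeta]$ is unchanged. Consequently a genuine quadratic super quantum Airy structure --- one for which $\zeta = 0$, i.e. $[L_i,L_j] = \hbar f_{ij}^k L_k$ --- exists if and only if $\zeta$ can be removed by such a shift, that is iff $[\zeta]=0$; one also checks the shifted operators still obey Definition \ref{d:SQAS}, since condition (2) is untouched ($\hbar c_i$ has degree $2$) and condition (1) is the subalgebra relation just imposed.

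I would then prove the \emph{always affirmative} claim in finite dimensions by exhibiting Weyl (symmetric) ordering as an anomaly-free lift. The Moyal bracket of two symbols of degree $\leq 2$ truncates to the Poisson bracket, because the higher Moyal corrections involve third-order derivatives that annihilate quadratics; equivalently, the quadratic-plus-linear hamiltonians act through the $\mathfrak{osp}$ oscillator (metaplectic) representation with no central term. Hence for the Weyl lift $[L_i,L_j] = \hbar f_{ij}^k L_k$ identically, so $\zeta = 0$ and a fortiori $[\zeta]=0$; in finite dimensions there is no convergence obstruction to forming these symmetrized operators, so this always works. Finally, to count quantizations I would start from a consistent lift and note that any other is reached by a constant shift $\hbar c_i$ preserving $\zeta=0$, which by the displayed formula requires $f_{ij}^k c_k = 0$ for all $i,j$; thus $c$ ranges over the even $1$-cocycles, and since there are no $1$-coboundaries with trivial coefficients, this is $Z^1(\mathfrak g,\mathbb K)_0 = H^1(\mathfrak g,\mathbb K)_0 \cong \big(\mathfrak g/[\mathfrak g,\mathfrak g]\big)^*_0$, the even functionals vanishing on $[\mathfrak g,\mathfrak g]$. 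Distinct $c$ give distinct operators, so the consistent quantizations form a torsor, i.e. an affine space, over $\big(\mathfrak g/[\mathfrak g,\mathfrak g]\big)^*_0$. The only substantial step --- and the one I expect to be the main obstacle --- is the finite-dimensional vanishing of $[\zeta]$: one must verify that the super-metaplectic Weyl quantization is genuinely anomaly-free on degree-$\leq 2$ super-hamiltonians, with the sign bookkeeping for the odd variables handled correctly; everything else is the standard low-degree Lie superalgebra cohomology together with the degree and classical-limit argument above.
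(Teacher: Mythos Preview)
Your proposal is correct and follows essentially the same route as the paper: lift by a fixed ordering, use the classical-limit relation to see $\zeta$ is a scalar, invoke Jacobi for the cocycle condition, observe that shifts $L_i \mapsto L_i + \hbar c_i$ change $\zeta$ by the coboundary $-f_{ij}^k c_k$, and identify the consistent quantizations with $\{c : f_{ij}^k c_k = 0\} \cong (\mathfrak g/[\mathfrak g,\mathfrak g])_0^*$. Your degree argument for why $\zeta_{ij}$ lands in $\mathbb K$ is a clean alternative to the paper's unspecified ``simple calculation''. One remark: the Weyl-quantization paragraph showing $[\zeta]=0$ in finite dimensions is not part of \emph{this} lemma's statement; the paper separates that claim into the immediately following lemma, using the same Weyl-ordering argument you sketch.
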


\begin{proof}
To quantize $L_i$, we need to replace all $y_i$ by $\hbar \frac{\partial}{\partial x^i}$. The meaning of this operation is ambiguous for mixed terms, i.e. for products $x^i y_j$, which supercommute on the classical level but not after quantization. Due to this ambiguity, in general we need to introduce constants of order $\hbar$ in our quantum hamiltonians. These terms are invisible at the classical level.

Note that it is always possible to replace $x^i y_j$ by the ``normally-ordered" expressions $\hbar x^i \frac{\partial}{\partial x^i}$, but unfortunately this ordering prescription does not lead to differential operators satisfying the commutation relations of the algebra $\mathfrak g$ in general. However, a simple calculation shows that this is almost true, in the sense that the quantity
\begin{equation}
\zeta_{ij} = \frac{1}{\hbar^2} \left( [L_i,L_j] - \hbar f_{ij}^k L_k \right)
\end{equation}
does not depend on $x$, $y$ or $\hbar$. By the Jacobi identity, it satisfies the $2$-cocycle condition
\begin{equation}
\zeta_{k [i } f^k_{jl]} =0,
\end{equation}
where square bracket denotes $\mathbb Z_2$-graded skew-symmetrization. One may try to get rid of the problematic $\zeta$ by shifting the generators by constants proportional to $\hbar$:
\begin{equation}
L_i \mapsto L_i + \hbar D_i,
\end{equation}
with $D_i =0$ for $|i|=1$. This transformation has the following effect on $\zeta$:
\begin{equation}
\zeta_{ij} \mapsto \zeta'_{ij} = \zeta_{ij} - f^k_{ij} D_k.
\label{eq:zeta_cohomologous_change}
\end{equation}
This means that $\zeta$ changes by a coboundary. In other words, the cohomology class of $\zeta$ depends only on the classical generators and not on the choice of the ordering prescription. In order to get the commutation relations of $\mathfrak g$ at the quantum level, we have to impose the condition $\zeta_{ij}'=0$. By the preceding discussion, this is possible if and only if the cohomology class $[\zeta] \in H^2(\mathfrak g, \mathbb K)_0$ vanishes. Now suppose that we have found a particular solution $D_i$ such that $\zeta_{ij}'=0$. We ask if other choices $D_i'$ are possible. It follows from the formula (\ref{eq:zeta_cohomologous_change}) that $D_i'$ is a consistent constant term for $L_i$ if and only if
\begin{equation}
f^k_{ij} \left( D_i' -D_i \right) =0.
\end{equation}
This condition means that the functional $D \in \mathfrak g_0^*$ vanishes on the commutator ideal $[\mathfrak g, \mathfrak g]$. Therefore the space of all consistent constant terms is affine over $\left( \frac{\mathfrak g}{[\mathfrak g, \mathfrak g]} \right)_0^*$.
\end{proof}

The situation is simpler when $\mathfrak g$ is finite-dimensional, in which case there is no cohomological obstruction to quantization.

\begin{lemma}
If $\mathfrak g$ is finite-dimensional, then $[\zeta] = 0$, and all quadratic super classical Airy structures can be quantized.
\end{lemma}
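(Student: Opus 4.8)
The strategy is to produce one explicit operator ordering for which the anomaly $\zeta_{ij}$ vanishes identically. Since the previous lemma establishes that the class $[\zeta] \in H^2(\mathfrak g, \mathbb K)_0$ is independent of the ordering prescription (two orderings differ by a coboundary), the vanishing of $\zeta$ in a single ordering immediately yields $[\zeta]=0$, and quantizability then follows from that same lemma. The ordering I would use is the graded Weyl (symmetric) ordering already singled out in the text, in which each mixed monomial $x^a y_b$ is sent to $\frac{\hbar}{2}\bigl( x^a \partial_b + (-1)^{|a||b|} \partial_b x^a \bigr)$.

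First I would record why finite-dimensionality of $\mathfrak g$ is exactly what makes this ordering available. Rewriting a Weyl-ordered quadratic Hamiltonian in the standard form, with all derivatives to the right, produces reordering constants proportional to $\hbar$ times a trace of the representation matrices $M_i$ over the phase-space coordinates. When $\mathfrak g$, and hence $W$, is finite-dimensional this trace is a finite number, so each Weyl-ordered $\hat L_i = W(L_i^2 + L_i^1)$ is a genuine element of $\mathcal W_\hbar(\widetilde V)^{\leq 2}$ with a finite constant term. This is precisely the point at which the hypothesis enters, and the step that can fail in the infinite-dimensional setting.

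The heart of the argument is the claim that the graded Weyl map intertwines the Poisson bracket with the normalized commutator on phase-space polynomials of degree at most two:
\begin{equation}
\frac{1}{\hbar}\bigl[ W(f), W(g)\bigr] = W\bigl(\{f,g\}\bigr), \qquad f, g \in {\mathbb K}[W]^{\leq 2}.
\end{equation}
This is the graded version of the statement that the Moyal star-commutator truncates after its leading term on quadratic functions. Indeed, the star-commutator is an alternating sum over $n \geq 1$ of terms in which $n$ derivatives act on each of $f$ and $g$, weighted by a power of $\hbar$; only odd $n$ survive the antisymmetrization, the term $n=1$ reproduces $\hbar\{f,g\}$, and every term with $n \geq 3$ places at least three derivatives on each factor and therefore annihilates any polynomial of degree $\leq 2$. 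I would establish the identity, signs included, by checking it on the generating cases $f,g \in \{ 1,\, z_a,\, z_a z_b \}$ and extending by graded bilinearity.

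Applying the intertwining identity to $f = L_i$ and $g = L_j$ and invoking the classical relations $\{L_i, L_j\} = f_{ij}^k L_k$ gives $\frac{1}{\hbar}[\hat L_i, \hat L_j] = f_{ij}^k \hat L_k$, that is $\zeta_{ij}=0$ in the Weyl ordering; by ordering-independence $[\zeta]=0$, which completes the proof. I expect the only real obstacle to be the graded bookkeeping in the intertwining identity: one must track the Koszul signs generated by the super-symplectic structure and by transposing odd generators past one another, and confirm that the derivative-counting truncation goes through verbatim in the $\mathbb Z_2$-graded setting. No cohomological input beyond the previous lemma is required.
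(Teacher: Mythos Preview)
Your proposal is correct and follows essentially the same approach as the paper: both invoke the graded Weyl ordering, use finite-dimensionality to ensure it is well-defined (finite traces), and appeal to the fact that Weyl quantization of quadratic Hamiltonians preserves the commutation relations, so that $\zeta$ vanishes on the nose. If anything, you are more explicit than the paper, which simply asserts that Weyl-ordered quadratics satisfy the correct commutators, whereas you justify it via the truncation of the Moyal star-commutator on degree $\leq 2$ polynomials.
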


\begin{proof}
If $\mathfrak g$ is finite-dimensional, there exists a simpler quantization procedure, namely Weyl quantization. In this scheme every term of the form $x^i y_j$ in the classical hamiltonians is replaced by $\frac{\hbar}{2} \left( x^i \frac{\partial}{\partial x^j} + \frac{\partial}{\partial x^j} x^i \right)$ at the quantum level. This has the advantage that the commutation relations between quantum $L_i$ are automatically satisfied. Thus a quantization always exists: it not necessary to shift further with $D_i$-terms (although it is still possible to construct other quantizations as in the previous Lemma). Since the space of consistent quantizations now has a distinguished origin, it is a vector (rather than affine) space. This has the corollary that the cohomology class $[\zeta]$ described in the previous paragraph vanishes identically, since Weyl quantization guarantees existence of a quantization for finite-dimensional quadratic super classical Airy structures.
\end{proof}

\begin{remark}
In the infinite-dimensional case, Weyl quantization fails in general. Indeed, in order to have well-defined operators $L_i$, we need to be able to use commutation relations to put all $\frac{\partial}{\partial x^j}$ to the right of all $x^i$ in such a way that the coefficient in front of each derivative is finite. If we insist on Weyl ordering, this may turn out to be impossible. Indeed, we would like to replace expressions of the form $b_i^j x^i y_j$ with $\frac{\hbar}{2} b_i^j \left( x^i \frac{\partial}{\partial x^j} + \frac{\partial}{\partial x^j} x^i \right) = \frac{\hbar}{2} b_i^j x^i \frac{\partial}{\partial x^j} + \frac{\hbar}{2} b_i^i$.  Unfortunately, the contraction $b_i^i$ is meaningless in general, as it contains an infinite sum. Therefore $[\zeta] \neq 0$ is possible. 
\end{remark}

We have claimed earlier that Weyl quantization, whenever possible, is distinguished among all quantization schemes. The main reason for this is that it is canonical, with no room for arbitrary choices. Another pleasant property is that it is covariant with respect to symplectic transformations, in the sense that classical hamiltonians expressed in different coordinates are quantized to the same (up to isomorphism) quantum operators.
This is not true for all quantization schemes. We illustrate this feature with the simplest possible example.

\begin{example}
Consider the purely quadratic hamiltonian
\begin{equation}
L= \frac{1}{2} x^2 - \frac{1}{2} y^2.
\end{equation}
In this situation Weyl quantization and normal ordering quantizations agree. Both give
\begin{equation}
L_q = \frac{1}{2} x^2 - \frac{\hbar^2}{2} \partial_x^2.
\end{equation}
Define now $a= \frac{x+y}{\sqrt{2}}$ and $b= \frac{x-y}{\sqrt{2}}$. The classical hamiltonian takes the form
\begin{equation}
L' = ab.
\end{equation}
The Weyl quantization of this hamiltonian differs from the normal ordering quantization (and infinitely many different quantization prescriptions). It gives
\begin{equation}
L_q' = \hbar b \frac{\partial}{\partial b} + \frac{\hbar}{2}.
\end{equation}
We introduce new generators $x = \frac{b + \partial_b}{\sqrt{2}}$, $\partial_x = \frac{-b + \partial_b}{\sqrt{2}}$. They satisfy the same algebraic relations as $b, \partial_b$. Reexpressing\footnote{This transformation may be implemented by an automorphism of the Weyl algebra.} $L_q'$ in terms of $x, \partial_x$ we recover the hamiltonian $L_q$. This doesn't happen if $L'$ is quantized with any other ordering prescription.\end{example}

\subsubsection{An Example}

We now illustrate the steps outlined in the classification scheme in a specific example. Our starting point is the Lie superalgebra with a single bosonic generator $H$ and two fermionic generators $Q_1,Q_2$, with the only nonzero commutator $[Q_1,Q_2]=H$. We note that the simply connected Lie group generated by the bosonic part of this algebra is isomorphic to $\mathbb C$.

\begin{enumerate}
\item
The first step in the classification is the construction of all representations of the superalgebra by purely quadratic hamiltonians. For the sake of brevity, we will not perform this step here. Rather, we will focus on one choice of a purely quadratic representation of the superalgebra.

Let us denote the bosonic variables by $x,y$ satisfying $\{ y, x \}=1$, and the fermionic variables by $\theta_1, \theta_2, \xi_1, \xi_2$ satisfying $\{ \theta_i , \theta_j \} = \{ \xi_i, \xi_j \} =0$ and $\{ \xi_i, \theta_j \} = \delta_{ij}$. We will choose the following purely quadratic representation of the superalgebra:
\begin{subequations}
\begin{align}
Q_1^2 :=& x \theta_1, \\
Q_2^2 :=& y \xi_1, \\
H^2 :=& \{ Q_1^2, Q_2^2 \} = x y -  \theta_1 \xi_1.
\end{align}
\end{subequations}

\item
The second step consists in adding linear and constant terms in a way that preserves the Poisson brackets. We first construct the most general linear terms by solving \eqref{eq:L1_constraint}, up to equivalences generated by shifts of even variables (see Definition \ref{d:equiv}). We then solve for constant terms using \eqref{eq:L0_constraint}.

In our context, \eqref{eq:L1_constraint} becomes the equations:
\begin{subequations}
\begin{align}
\{ Q_1^2, Q_2^1 \} + \{ Q_1^1, Q_2^2 \} =& H^1,\\
\{Q_1^2, Q_1^1 \} =&0,\\
\{Q_2^2, Q_2^1 \} =& 0,
\end{align}
\end{subequations}
with $Q_1^1, Q_2^1$ odd, and $H^1$ even. The most general solution to these equations is
\begin{subequations}
\begin{align}
Q_1^1 =& \alpha_1 \theta_1 + \beta_1 \theta_2 + \gamma_1 \xi_2, \\
Q_2^1 =& \alpha_2 \xi_1 + \beta_2 \xi_2 + \gamma_2 \theta_2,\\
H^1 =& \alpha_2 x + \alpha_1 y,
\end{align}
\end{subequations}
for constants $\alpha_1, \alpha_2, \beta_1, \beta_2, \gamma_1, \gamma_2 \in \mathbb{C}$. However, the terms $\alpha_1 \theta_1$ and $\alpha_2 \xi_1$ in $Q_1^1$ and $Q_2^1$ respectively (and the corresponding terms in $H^1$) can be obtained by shifting the even variables $x \mapsto x + \alpha_1$ and $y \mapsto y + \alpha_2$. According to the classification scheme, we consider these solutions as equivalent (see Definition \ref{d:equiv}), and so we can set $\alpha_1 = \alpha_2 = 0$ without loss of generality, and we get the linear terms
\begin{subequations}
\begin{align}
Q_1^1 =&  \beta_1 \theta_2 + \gamma_1 \xi_2, \\
Q_2^1 =&  \beta_2 \xi_2 + \gamma_2 \theta_2,\\
H^1 =& 0.
\end{align}
\end{subequations}

We then need to add constant terms using \eqref{eq:L0_constraint}. First, since $Q_1$ and $Q_2$ are fermionic, they cannot have constant terms. From \eqref{eq:L0_constraint}, the equations that we have to solve are
\begin{equation}\label{eq:L0_constraint_ex}
\{Q_1^1, Q_1^1\} = \{Q_2^1, Q_2^1 \} = 0, \qquad \{Q_1^1, Q_2^1\} = H^0.
\end{equation}
The first two equalities impose that $\beta_1 \gamma_1 = \beta_2 \gamma_2 = 0$. In particular, we must have that either $\beta_1 =0 $ or $\gamma_1 = 0$. In the case with $\beta_1$ non-zero, we can always use the transformation $\theta_2 \mapsto \xi_2$, $\xi_2 \mapsto \theta_2$ (which does not modify the original quadratic hamiltonians and preserves Poisson brackets) to make $\beta_1$ vanish. Thus we can assume without loss of generality that $\beta_1 = 0$. The linear terms become
\begin{subequations}
\begin{align}
Q_1^1 =&  \gamma_1 \xi_2, \\
Q_2^1 =&  \beta_2 \xi_2 + \gamma_2 \theta_2,\\
H^1 =& 0.
\end{align}
\end{subequations}
Then, from the last equality in \eqref{eq:L0_constraint_ex}, we get:
\begin{equation}
H^0 = \gamma_1 \gamma_2.
\end{equation}
As a result, up to equivalences we have found the general representation:
\begin{subequations}
\begin{align}
Q_1 =& \gamma_1 \xi_2 +x \theta_1,\\
Q_2 =& \beta_2 \xi_2 + \gamma_2 \theta_2 + y \xi_1, \\
H =&  \gamma_1 \gamma_2 + x y -  \theta_1 \xi_1,
\end{align}
\end{subequations}
with $ \beta_2 \gamma_2 = 0$. By changing values of the coefficients $\gamma_1, \beta_2, \gamma_2$ one may obtain several genuinely different Airy structures.

For clarity we now pick one particular example. We remark that the ``trivial'' case with $\gamma_1=\beta_2=\gamma_2=0$ is rather boring; indeed, in this case it turns out that $\Sigma_s$ is empty. Thus we consider the slightly more complicated case with $\gamma_1 = \gamma_2 = 1$ and $\beta_2 = 0$. The generators take the form:
\begin{subequations}
\begin{align}
Q_1 =& \xi_2 +x \theta_1,\\
Q_2 =&  \theta_2 + y \xi_1, \\
H =&  1 + x y -  \theta_1 \xi_1.
\end{align}
\end{subequations}

\item
The third step is to shift even variables to find an equivalent representation with the generators in the form of a super quantum Airy structure (if possible). As explained in Section \ref{sec:smooth_point}, this corresponds to finding these points of the zero locus
\begin{equation}
(H)_0 = 1 + x y = 0
\end{equation}
where the gradient of $(H)_0$ is nonzero. They take the form
\begin{equation}
x = a, \qquad y= - \frac{1}{a},
\end{equation}
with $a \in \mathbb{C} \setminus \{ 0 \}$. We see that $\Sigma_s$ is connected and isomorphic to $\mathbb C^{\times} \cong \frac{\mathbb C}{\mathbb Z}$, so its fundamental group is $\mathbb Z$. Hence the stabiliser of any point with respect to the $\mathbb C$-action is isomorphic to $\mathbb Z$. We can double check that this is correct by solving for the Hamiltonian flow generated by $(H)_0$. To this end we evaluate the Poisson brackets
\begin{subequations}
\begin{gather}
\{ H, x \} = x, \\
\{ H, y \} = -y.
\end{gather}
\end{subequations}
Therefore the solution of the Hamilton equations $\frac{df}{dt} = \{ H,f \}$ takes the form
\begin{subequations}
\begin{gather}
x(t) = e^t x(0), \\
y(t) = e^{-t} y(0).
\end{gather}
\end{subequations}
Clearly we can obtain any element of $\Sigma_s$ by flowing from any given initial point, say $x(0)=1$, $y(0)=-1$. The relation $xy+1=0$ is explicitly preserved by the flow. Moreover we have $(x(t),y(t))= (x(0),y(0))$ if and only if $t \in 2 \pi i \mathbb Z$, confirming that the stabiliser of $(x(0),y(0))$ is infinite cyclic.

Since $\Sigma_s$ is connected, we are free to choose $a=1$. Performing the corresponding shifts of the even variables $x \mapsto x + 1$ and $y \mapsto y - 1$ brings the generators in the form:
\begin{subequations}
\begin{align}
Q_1 =&\theta_1+ \xi_2  +x \theta_1,\\
Q_2 =&  \theta_2 - \xi_1 + y \xi_1, \\
H =& y - x + x y -  \theta_1 \xi_1.
\end{align}
\end{subequations}

\item Finally, the last step in the classification is to choose a Lagrangian complement. First, we define new canonical momenta $\pi_1 = Q_1^1$, $\pi_2 = Q_2^1$, and $p = H^1$ to bring the linear terms in the form of a super quantum Airy structure. More explicitly,
\begin{subequations}
\begin{align}
\pi_1 =&  \theta_1+\xi_2, \\
\pi_2 =& \theta_2 - \xi_1,\\
p =& y-x.
\end{align}
\end{subequations}
Then, we have to choose a Lagrangian complement, i.e. odd linear generators $\kappa^1, \kappa^2$ and an even $q$ such that $\{ \pi_i, \kappa^j \}= \delta^j_i$, $\{ p, q \}=1$, $\{ \kappa^i, \kappa^j \}=0$. The following choice is convenient:
\begin{subequations}
\begin{align}
\kappa^1 =& \frac{1}{2} \left( \theta_2 + \xi_1 \right), \\
\kappa^2 =& \frac{1}{2} \left( \xi_2 - \theta_1 \right), \\
q =& \frac{1}{2} \left( x+y \right).
\end{align}
\end{subequations}
In terms of these variables, the generators become:
\begin{subequations}
\begin{align}
Q_1 =&\pi_1 + \frac{1}{4}(2 q - p) (\pi_1 - 2 \kappa^2),\\
Q_2 =& \pi_2 + \frac{1}{4}(2 q + p)(2 \kappa^1 - \pi_2), \\
H =& p + q^2 - \frac{1}{4} p^2   - \frac{1}{4}(\pi_1 - 2 \kappa^2)(2 \kappa^1 - \pi_2).
\end{align}
\end{subequations}
This completes the construction of the super classical Airy structure. It can be quantized as usual to get a super quantum Airy structure. For instance, using Weyl quantization, the generators become the following differential operators in the variables $\kappa^1, \kappa^2$ and $q$:
\begin{subequations}
\begin{align}
Q_1 =& \hbar \partial_{\kappa^1}  - q \kappa^2 +\frac{\hbar}{2} q \partial_{\kappa^1} + \frac{\hbar}{2} \kappa^2 \partial_q - \frac{\hbar^2}{4} \partial_q \partial_{\kappa^1},\\
Q_2=& \hbar \partial_{\kappa^2} + q \kappa^1 - \frac{\hbar}{2} q \partial_{\kappa^2} + \frac{\hbar}{2} \kappa^1 \partial_q - \frac{\hbar^2}{4} \partial_q \partial_{\kappa^2},\\
H =& \hbar \partial_q + q^2 - \kappa^1 \kappa^2 + \frac{\hbar}{2} \kappa^1 \partial_{\kappa^1} - \frac{\hbar}{2} \kappa^2 \partial_{\kappa^2} - \frac{\hbar^2}{4} \partial_q^2 + \frac{\hbar^2}{4} \partial_{\kappa^1} \partial_{\kappa^2}.
\end{align}
\end{subequations}
By construction, those differential operators have the form of a super quantum Airy structure, and they form a representation of the original superalgebra, with the only non-zero commutator $[Q_1, Q_2] = H$.
\end{enumerate}


\subsection{The $\mathfrak{osp}(1|2)$ example}

In this section we explain how we can construct quadratic super quantum Airy structures using representation theory of Lie superalgebras, following \cite{ABCD}. We then apply the procedure to the particular case of the $\mathfrak{osp}(1|2)$ Lie superalgebra.

Let $(V^*,[\,\cdot\,,\,\cdot\,])$ be a finite dimensional Lie superalgebra. A quadratic super classical Airy structure $L^{\rm cl}$ can be understood as a Lie superalgebra homomorphism: for any $\phi,\psi \in V^*$, and with $\{\,\cdot\,,\,\cdot\,\}$ denoting the canonical Poisson bracket on $W = V^*\oplus V$, we have
\begin{equation}
\label{super:Lie:homomorhism}
\left\{L^{\rm cl}(\phi),L^{\rm cl}(\psi)\right\} = L^{\rm cl}\left([\phi,\psi]\right).
\end{equation}
For a quadratic Airy structure $L^{\rm cl}(\phi) = \phi + {\mathcal L}(\phi),\;{\mathcal L}(\phi) \in {\mathbb K}[W]^2,$ (\ref{super:Lie:homomorhism})
is equivalent to
\begin{subequations}
\begin{gather}
\left\{{\mathcal L}(\phi),{\mathcal L}(\psi)\right\} = {\mathcal L}([\phi,\psi]), \label{quadratic:conditions}
\\[2pt]
\left\{\phi, {\mathcal L}(\psi)\right\}+ \left\{{\mathcal L}(\phi), \psi\right\} =[\phi,\psi]. \label{eq:J_cocycle}
\end{gather}
\end{subequations}
In particular ${\mathcal L}$ is itself a Lie superalgebra homomorphism, and the formula
\begin{equation}
\label{the:W:representation}
\rho_W(\phi)(w) = \left\{{\mathcal L}(\phi),w\right\}, \quad \mathrm{for} \quad \phi \in V^*, \quad w \in W,
\end{equation}
defines a representation of $(V^*,[\,\cdot\,,\,\cdot\,])$ on $W$.

Following \cite{ABCD}, we now show how one can use representation theory to construct a super classical Airy structure. Suppose that $\rho_M$ is a representation of the Lie superalgebra $V^*$ on a $2\dim V$ dimensional space $M$ equipped with a $V^*$ invariant symplectic form $\omega_M$. Suppose further that there exists $\Omega \in M$ such that $\rho_M(V^*)\cdot\Omega$ is a Lagrangian subspace of $M.$
Thus, in particular, $\dim \rho_M(V^*)\cdot\Omega = \dim V$ and the map
\begin{equation}
\label{I:map:definition}
I: V^* \ni \phi \; \mapsto \; I(\phi) = \rho_M(\phi)\cdot\Omega\ \in\ \rho_M(V^*)\cdot\Omega
\end{equation}
is an even isomorphism. Choose a Lagrangian complement $\Sigma$ of $\rho_M(\phi)\cdot \Omega$, and define
a map $K:\ \Sigma \to V^{**} \simeq V$ by the formula
\begin{equation}
\label{K:map:definition}
\left\{\phi, K(s)\right\} = \omega_M\left(\rho_M(\phi)\cdot\Omega,s\right), \quad \mathrm{for} \quad \phi \in V^*, \quad s \in \Sigma.
\end{equation}
Clearly $K$ is also an even isomorphism and consequently
\begin{equation}
\label{J:map:definition}
    J = I^{-1}\oplus K:\  M =  \rho_M(V^*)\cdot \Omega\oplus\Sigma\ \mapsto\  V^* \oplus V = W,
\end{equation}
is an even isomorphism as well.  $J$ is a symplectomorphism, which follows from the fact that the four (sub)spaces $V, V^*, \rho_M(V^*) \cdot \Omega$ and $\Sigma$ are Lagrangian, and for $s \in \Sigma, \phi \in V^*$ we have
\begin{equation}
\left\{J\left(\rho_M(\phi)\cdot\Omega\right),J(s)\right\} =  \left\{I^{-1}\left(I(\phi)\right),K(s)\right\} = \omega_M\left(\rho_M(\phi)\cdot\Omega,s\right)
\end{equation}
by (\ref{K:map:definition}). We can thus define a symplectic representation of $V^*$ on $W$ by the formula
\begin{equation}
\rho_W = J \circ \rho_M \circ J^{-1},
\end{equation}
and consequently, via (\ref{the:W:representation}), quadratic hamiltonians ${\mathcal L}(\phi)$ satisfying (\ref{quadratic:conditions}).

Equation (\ref{eq:J_cocycle}) is also satisfied, since
with our definitions (viewing $\phi,\psi \in V^*$ via the natural embedding $ V^* \simeq V^*\oplus 0 \hookrightarrow V^*\oplus V = W$ as elements of $W$), we have for homogeneous $\phi$ and $\psi$,
\begin{eqnarray}
\left\{{\mathcal L}(\phi), \psi\right\}+ \left\{\phi, {\mathcal L}(\psi)\right\}
& = &
\rho_W(\phi)(\psi) - (-1)^{|{\mathcal L}(\psi)||\phi|}\rho_W(\psi)(\phi)\nonumber
\\[2pt]
& = &
J\left(\rho_M(\phi)I(\psi)\right) - (-1)^{|{\mathcal L}(\psi)||\phi|}J\left(\rho_M(\psi)I(\phi)\right)\nonumber
\\[2pt]
& = &
J\left(\rho_M(\phi)\rho_M(\psi)\cdot\Omega\right)- (-1)^{|{\mathcal L}(\psi)||\phi|}J\left(\rho_M(\psi)\rho_M(\phi)\cdot\Omega\right)\nonumber
\\[2pt]
& = &
I^{-1}\left(\rho_M([\phi,\psi])\cdot\Omega \right) \; = \; [\phi,\psi].
\end{eqnarray}

So to summarize, the construction goes as follows:
\begin{enumerate}
\item We start with a Lie superalgebra $V^*$ and a representation $\rho_M(V^*)$ on some $2 \dim V$-dimensional space $M$;
\item We construct a symplectomorphism $J: M \to W = V^* \oplus V$, and a representation $\rho_W = J \circ \rho_M \circ J^{-1}$ on $W$;
\item By \eqref{the:W:representation}, this gives us a quadratic super classical Airy structure. It may be then quantized using Weyl prescription.
\end{enumerate}

Let us now apply this construction to the $\mathfrak{osp}(1|2)$ Lie superalgebra.
Recall that $\mathfrak{osp}(1|2)$  is generated by three even and two odd vectors, $l_0,l_{\pm}$ and $q_{\pm}$ respectively, satisfying the relations
\begin{eqnarray}
\nonumber
\label{osp:1:2}
[l_0,l_\pm] = \pm l_\pm && [l_+,l_-] = 2l_0.
\\[4pt]
[l_0,q_{\pm}] = \pm {\textstyle\frac12}q_{\pm}, && [l_\pm, q_\mp] = q_\pm,
\\[4pt]
\nonumber
[q_{\pm},q_{\pm}] = \pm l_\pm, && [q_+,q_-] = - l_0.
\end{eqnarray}

\begin{proposition}
Let $V$ be a $(3|2)$-dimensional super vector space, and let $\widetilde V = V \oplus \mathbb{K}^{0|1}$. We choose a basis $\{x^1, x^2, x^3 \}$ and $\{\theta^1, \theta^2\}$ for the even and odd subspaces of $V$ respectively, with dual basis $\{y_1, y_2, y_3\}$ and $\{ \xi_1, \xi_2 \}$. We let $\theta^0$ be a basis for $\mathbb{K}^{0|1}$. Then the linear operator $L: V^* \to \widehat{\mathcal{W}}(\tilde V)$ defined by
\begin{subequations}
\begin{align}
L(y_1)
 = &
\hbar\partial_{x^1}
- \sqrt{3}\,\theta^0\theta^1  - 12\left(x^1\right)^2
- \frac32\hbar x^2\partial_{x^1} - \frac{10}{3}\hbar x^3\partial_{x^2}  + 2\sqrt{3}\hbar\,\theta^1\partial_{\theta^0} + \hbar \theta^2\partial_{\theta^1},
\\
L(y_2)  = &
\hbar\partial_{x^2}
- \frac12\hbar x^1\partial_{x^1} - \frac32\hbar x^2\partial_{x^2} - \frac52\hbar x^3\partial_{x^3} - \hbar\theta^1\partial_{\theta^1} - 2\hbar\theta^2\partial_{\theta^2} - \frac34 \hbar,
\\
L(y_3)
 = &
\hbar\partial_{x^3}
- \frac{16}{3}\hbar x^1\partial_{x^2}- \frac{3}{2}\hbar x^2\partial_{x^3} + \frac{\sqrt{3}}{2}\hbar\,\theta^0\partial_{\theta^1} + 4\hbar\theta^1\partial_{\theta^2}
+ \frac{3}{16}\hbar^2\partial_{x^1}^2 - \sqrt{3}\,\hbar^2\partial_{\theta^1}\partial_{\theta^0},
\end{align}
\end{subequations}
and
\begin{subequations}
\begin{align}
L(\xi_1)
 = &
\hbar\partial_{\theta^1}
+
\sqrt{3}\, x_1\theta^0
-
\frac12\hbar\theta^1\partial_{x^1} + \frac13\hbar\theta^2\partial_{x^2}
+
2\sqrt{3}\hbar\, x^1\partial_{\theta^0} - \frac32\hbar x_2\partial_{\theta_1} + 5 \hbar x_3\partial_{\theta_2},
\\
L(\xi_2)
 = &
\hbar\partial_{\theta^2}
+
\frac{\sqrt{3}}{8} \hbar\theta^0\partial_{x^1} - \frac43\hbar\theta^1\partial_{x^2} + \frac12 \hbar\theta^2\partial_{x^3}
+
2\hbar x^1\partial_{\theta^1} - \frac32 \hbar x^2\partial_{\theta^2} + \frac{\sqrt{3}}{4} \hbar^2\partial_{\theta^0}\partial_{x^1},
\end{align}
\end{subequations}
is a quadratic super quantum Airy structure (with an extra fermionic variable), realized as a differential representation of the $\mathfrak{osp}(1|2)$ Lie superalgebra under the identification
\begin{equation}
\{ l_-, l_0, l_+, q_-, q_+ \} \mapsto  \{ L(y_1), L(y_2), L(y_3), L(\xi_1), L(\xi_2) \}.
\end{equation}
\end{proposition}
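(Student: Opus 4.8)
The plan is to verify the two conditions of Definition \ref{d:SQAS} for the explicitly displayed operators, organizing the verification through the representation-theoretic construction \eqref{the:W:representation}--\eqref{J:map:definition} followed by Weyl quantization. Condition (2) is immediate by inspection: each of $L(y_1),L(y_2),L(y_3)$ has linear part $\hbar\partial_{x^i}$ and each of $L(\xi_1),L(\xi_2)$ has linear part $\hbar\partial_{\theta^a}$, while every remaining term is homogeneous of $\mathbb{Z}$-degree $2$ (recall $\deg\hbar = 2$, so the constant $-\tfrac{3}{4}\hbar$ in $L(y_2)$ and the $\hbar^2\partial^2$ terms are all of degree $2$). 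Thus $L(V^*) \subseteq \mathcal{W}_{\hbar}(\widetilde V)^{\leq 2}$, the structure is quadratic, and since $L$ is manifestly even and continuous, it remains only to establish condition (1).

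For condition (1) I would prove the stronger subalgebraic statement that the five operators close, under the super-commutator, on the defining relations of $\mathfrak{osp}(1|2)$ with a factor of $\hbar$, under the identification $\{l_-,l_0,l_+,q_-,q_+\}\mapsto\{L(y_1),L(y_2),L(y_3),L(\xi_1),L(\xi_2)\}$. The cleanest route is to realize these operators as the Weyl quantization of a quadratic super classical Airy structure built from a symplectic $\mathfrak{osp}(1|2)$-module: I would exhibit a module $M$ of symplectic dimension $2\dim V + (0|2) = (6|6)$ (the extra odd pair accounting for the additional fermion $\theta^0$ and its conjugate momentum), a vector $\Omega \in M$ whose orbit $\rho_M(V^*)\cdot\Omega$ is coisotropic with the expected isotropic image $V^*$, and the induced symplectomorphism $J$ of \eqref{J:map:definition}. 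The general identities \eqref{quadratic:conditions} and \eqref{eq:J_cocycle}, already verified in the text, then guarantee that the classical hamiltonians $\mathcal{L}(\phi)$ form a quadratic super classical Airy structure representing $\mathfrak{osp}(1|2)$. Since $\mathfrak{osp}(1|2)$ is finite-dimensional, the quantization lemma for finite-dimensional Lie superalgebras ensures that the cocycle $[\zeta]$ vanishes, so Weyl quantization yields operators whose super-commutators close on $\mathfrak{osp}(1|2)$ with no anomalous constant; this is precisely condition (1), with $f_{ij}^k$ the $\mathfrak{osp}(1|2)$ structure constants. Writing the Weyl-quantized operators in the Darboux coordinates dual to a chosen Lagrangian complement $V$ should then reproduce the displayed formulas.

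The one genuinely new ingredient, and the main obstacle, is the extra fermionic variable $\theta^0$: the construction \eqref{I:map:definition}--\eqref{J:map:definition} as stated (with $\dim M = 2\dim V$ and $\rho_M(V^*)\cdot\Omega$ \emph{Lagrangian}) does not apply directly, and one must instead allow $\rho_M(V^*)\cdot\Omega$ to be merely coisotropic, with the leftover $(0|2)$ degrees of freedom furnishing $\theta^0$ and its momentum. Alternatively, and perhaps more transparently, one can bypass the representation theory and verify condition (1) by brute force, expanding the ten super-commutators directly from the explicit operators. In either approach the real labor is the Koszul sign bookkeeping---the factors $(-1)^{|a||b|}$ arising whenever odd generators or odd variables are transposed---together with confirming that the specific numerical coefficients ($\sqrt{3}$, $\tfrac{10}{3}$, $\tfrac{16}{3}$, $\tfrac{3}{16}$, and so on) conspire exactly to produce the $\mathfrak{osp}(1|2)$ relations. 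Via Lemma \ref{lem:SAS} the direct check reduces to verifying the algebraic constraints \eqref{A}--\eqref{CA-BD} on the tensors $A,B,C,D$ read off from the operators, which is mechanical but lengthy.
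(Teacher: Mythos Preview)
Your approach is essentially the paper's: build the classical hamiltonians from a symplectic $\mathfrak{osp}(1|2)$-module via the $I,K,J$ construction, then Weyl-quantize (which is unambiguous since the algebra is finite-dimensional and perfect). The one place you are imprecise is exactly the point you flag as the main obstacle, the extra fermion, and your proposed resolution is not quite what works. The paper does \emph{not} use a $(6|6)$-dimensional module with a merely coisotropic orbit; instead it takes the $11$-dimensional (that is, $(6|5)$) irreducible $\mathfrak{osp}(1|2)$-module $M$, on which the invariant form has an odd basis vector $f_0$ that is self-paired, $\omega_M(f_0,f_0)=1$. The orbit $\rho_M(V^*)\cdot\Omega$ with $\Omega=e_{3/2}$ is a $5$-dimensional \emph{maximal isotropic} $M_+$, and its complement splits as $M_0\oplus M_-$ with $M_0=\mathbb{K}f_0$. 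The symplectomorphism $J$ is then extended by $J_0(f_0)=\eta:=\xi_0+\tfrac{1}{2}\theta^0$ (so that $\{\eta,\eta\}=1=\omega_M(f_0,f_0)$), landing in the codimension-$(0|1)$ subspace $V^*\oplus\mathbb{K}\eta\oplus V\subset\widetilde W$. So the extra fermionic variable enters via a self-paired odd vector in an odd-dimensional irreducible module, not via an extra $(0|2)$ summand; with this adjustment your plan goes through exactly as in the paper.
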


\begin{remark}
Clearly, the operator $L$ satisfies the second property in the definition of super quantum Airy structures (see Definition \ref{d:SQAS}), and it is quadratic. All that remains to be checked is that it forms a representation of the $\mathfrak{osp}(1|2)$ Lie superalgebra. This could be checked by brute force calculation. Let us instead construct this super quantum Airy structure using the general approach presented above.\end{remark}

\begin{proof}
We apply the construction described above for the particular case of the $\mathfrak{osp}(1|2)$ Lie superalgebra, with generators satisfying the relations \eqref{osp:1:2}. We let $V$ be a $(3|2)$-dimensional super vector space. We choose a basis $\{x^1, x^2, x^3 \}$ and $\{\theta^1, \theta^2\}$ for the even and odd subspaces of $V$ respectively, with dual basis $\{y_1, y_2, y_3\}$ and $\{ \xi_1, \xi_2 \}$. We identify $V^*$ with the algebra $\mathfrak{osp}(1|2)$ by:
\begin{equation}
\{y_1, y_2, y_3, \xi_1, \xi_2 \} \leftrightarrow \{ l_-, l_0, l_+, q_-, q_+ \}.
\end{equation}

The irreducible representations of $\mathfrak{osp}(1|2)$ are odd-dimensional. We will modify the construction above slightly, to allow our representation space $M$ to be odd-dimensional. We enlarge $W$ by adding an extra fermionic variable. We define $\widetilde V = V \oplus \mathbb{K}^{0|1}$, and  let $\widetilde{W} = \widetilde{V} \oplus \widetilde{V}^*$. We will construct a symplectomorphism $J$ as a map between the representation space $M$ and a suitably chosen, odd-dimensional subspace of the space $\widetilde W$. The result of the construction will be a quadratic super classical Airy structure with an extra fermionic variable.

Since $V$ is five-dimensional, we take $M$ to be 11-dimensional and denote its basis vectors by
$f_m$ and $e_k,$ where $m = -2,-1,\ldots,2$ and $k = -\frac52, -\frac32,\ldots, \frac52.$ The representation of $\mathfrak{osp}(1|2)$  in question is given by
\begin{equation}
\rho_M(\xi_2) f_m = \sqrt{\frac{3+m}{2}}\, e_{m+\frac12},
\hskip 1cm
\rho_M(\xi_1) f_m = \sqrt{\frac{3-m}{2}}\, e_{m-\frac12},
\end{equation}
and
\begin{equation}
\rho_M(\xi_2) e_k = \sqrt{\frac{\frac52 - k}{2}}\, f_{k+\frac12},
\hskip 1cm
\rho_M(\xi_1) e_k = -\sqrt{\frac{\frac52 + k}{2}}\, f_{k+\frac12},
\end{equation}
so that $\rho_M(\xi_2) e_{\frac52} = \rho_M(\xi_1) e_{-\frac52} = 0$. The action of other generators can be computed from (\ref{osp:1:2}).

The symplectic form on $M,$ invariant with respect to the $\mathfrak{osp}(1|2)$ algebra, i.e.\ satisfying
\begin{subequations}
\begin{gather}
\omega_M\left(\rho_M(y_i)e_k,e_l\right) + \omega_M\left(e_k,\rho_M(y_i)e_l\right) \; = \; \omega_M\left(\rho_M(y_i)f_m,f_n\right) +\omega_M\left(f_m,\rho_M(y_i)f_n\right)  =  0,
\\
\omega_M\left(\rho_M(\xi_j)e_k,f_n\right) + \omega_M\left(e_k,\rho_M(\xi_j)f_n\right)  =  0,
\end{gather}
\end{subequations}
is determined uniquely up to an overall normalization. Its non-zero elements read
\begin{subequations}
\begin{eqnarray}
\omega_M(e_{\frac12},e_{-\frac12}) \; = \; - \omega_M(e_{\frac32},e_{-\frac32}) \; = \; \omega_M(e_{\frac52},e_{-\frac52}) & = & 1,
\\[4pt]
\omega_M(f_0,f_0) \; = \; -\omega_M(f_1,f_{-1}) \; = \; \omega_M(f_2,f_{-2}) & = & 1,
\\[4pt]
\omega_M(f_i,f_j) - \omega_M(f_j,f_i) & = & 0,
\\[4pt]
\omega_M(e_i,e_j) +  \omega_M(e_j,e_i) & = & 0.
\end{eqnarray}
\end{subequations}

The maximal isotropic subspaces of $M$ are of dimension 5 and we can choose one of such subspaces to be generated by the action
of $\rho_m(V^*)$ on $e_{\frac32}$ (which thus plays the role of the vector $\Omega$ from the initial paragraphs of this subsection):
\begin{equation}
M_+ = \rho_M(V^*)e_{\frac32} = {\rm span}\Big\{e_{\frac12},f_1,e_{\frac32},f_2,e_{\frac52}\Big\}.
\end{equation}
Defining the map $I\ : \ V^* \mapsto M_+$ as in (\ref{I:map:definition}) we thus get
\begin{equation}
I(y_1) = 2\sqrt{2}\, e_{\frac12},
\hskip 5mm
I(\xi_1) = -\sqrt{2} f_1,
\hskip 5mm
I(y_2) = \frac32 \, e_{\frac32},
\hskip 5mm
I(\xi_2) = \frac{1}{\sqrt 2} f_2,
\hskip 5mm
I(y_3) = \sqrt{5}\, e_{\frac52}.
\end{equation}
As a complement of $M_+$, we take the space $M_0 \oplus M_-,$ where $M_0 = {\mathbb K}f_0$ and
\begin{equation}
M_- = {\rm span}\Big\{e_{-\frac12},f_{-1},e_{-\frac32},f_{-2},e_{-\frac12}\Big\}.
\end{equation}
Note that $M_-$ is also a maximal isotropic subspace of $M.$ We can now define the map $K\ : \ M_- \mapsto V$ as in (\ref{K:map:definition})
with the result
\begin{equation}
K(e_{-\frac12})= 2\sqrt{2}\,x^1,
\hskip .15cm
K(f_{-1}) = \sqrt{2}\, \theta^1,
\hskip .15cm
K(e_{-\frac32})=  -\frac32x^2,
\hskip .15cm
K(f_{-2}) = \frac{1}{\sqrt{2}}\, \theta^2,
\hskip .15cm
K(e_{-\frac52})=  \sqrt{5}\, x^3.
\end{equation}

Let us now denote the basis of the ${\mathbb K}^{0|1}$ subspace appearing in the decomposition $\widetilde V = V \oplus {\mathbb K}^{0|1}$ by $\theta^0$, and the corresponding
element of the dual basis in $\widetilde V^{*}$ by $\xi_0.$ Taking $\eta = \xi_0 + \frac12\theta^0$, so that
\begin{equation}
\{\eta,\eta\} = \{\xi_0,\theta^0\} = \xi_0(\eta^0) = 1 = \omega_M(f_0,f_0),
\end{equation}
and defining $J_0(f_0) = \eta$, we construct the symplectomorphism
\begin{equation}
J = I^{-1}\oplus J_0\oplus K :\ M_+ \oplus {\mathbb K}f_0 \oplus M_- \;\mapsto \; V^* \oplus {\mathbb K}\eta \oplus V
\end{equation}
satisfying the properties required for $\rho_{\widetilde W} = J^{-1} \circ \rho_M\circ J$ to define via (\ref{the:W:representation}) a quadratic super classical Airy structure
on ${\mathbb K}[\widetilde W]$ (depending on an extra fermionic variable). Calculating the matrix elements of $\rho_{\widetilde W}$, and using (\ref{the:W:representation}), we get
\begin{subequations}
\begin{eqnarray}
L^{\rm cl}(y_1) & = & y_1 -\frac32 x^2y_1 -\frac{10}{3}x^3 y_2 - 12\left(x^1\right)^2 +2\sqrt{3}\theta^1\eta + \theta^2\xi_1,
\\[2pt]
L^{\rm cl}(y_2) & = & y_2 - \frac12 x^1 y_1 - \frac32 x^2 y_2 - \frac52 x^3 y_3 - \theta^1\xi_1 -2 \theta^2\xi_2 - \frac32 \eta^2,
\\[2pt]
L^{\rm cl}(y_3) & = & y_3 + \frac{3}{16}\left(y_1\right)^2 - \frac{16}{3}x^1y_2 - \frac32 x^2 y_3 + \sqrt{3}\eta\xi_1 + 4\theta^1\xi_2,
\end{eqnarray}
\end{subequations}
as well as
\begin{subequations}
\begin{eqnarray}
L^{\rm cl}(\xi_1) & = & \xi_1 - \frac12 \theta^1 y_1 + \frac13 \theta^2 y_2 + 2\sqrt{3}x^1\eta - \frac32 x^2\xi_1 + 5 x^3\xi_2,
\\[4pt]
L^{\rm cl}(\xi_2) & = & \xi_2 - \frac43 \theta^1 y_2+ \frac12\theta^2 y_3 + \frac{\sqrt{3}}{4}\eta y_1 + 2 x^1\xi_1 - \frac32 x^2\xi_2.
\end{eqnarray}
\end{subequations}
Finally, applying Weyl quantization, we get
\begin{subequations}
\begin{align}
L(y_1)
 = &
\hbar\partial_{x^1}
- \sqrt{3}\,\theta^0\theta^1  - 12\left(x^1\right)^2
- \frac32\hbar x^2\partial_{x^1} - \frac{10}{3}\hbar x^3\partial_{x^2}  + 2\sqrt{3}\hbar\,\theta^1\partial_{\theta^0} + \hbar \theta^2\partial_{\theta^1},
\\
L(y_2)  = &
\hbar\partial_{x^2}
- \frac12\hbar x^1\partial_{x^1} - \frac32\hbar x^2\partial_{x^2} - \frac52\hbar x^3\partial_{x^3} - \hbar\theta^1\partial_{\theta^1} - 2\hbar\theta^2\partial_{\theta^2} - \frac34 \hbar,
\\
L(y_3)
 = &
\hbar\partial_{x^3}
- \frac{16}{3}\hbar x^1\partial_{x^2}- \frac{3}{2}\hbar x^2\partial_{x^3} + \frac{\sqrt{3}}{2}\hbar\,\theta^0\partial_{\theta^1} + 4\hbar\theta^1\partial_{\theta^2}
+ \frac{3}{16}\hbar^2\partial_{x^1}^2 - \sqrt{3}\,\hbar^2\partial_{\theta^1}\partial_{\theta^0},
\end{align}
\end{subequations}
and
\begin{subequations}
\begin{align}
L(\xi_1)
 = &
\hbar\partial_{\theta^1}
+
\sqrt{3}\, x_1\theta^0
-
\frac12\hbar\theta^1\partial_{x^1} + \frac13\hbar\theta^2\partial_{x^2}
+
2\sqrt{3}\hbar\, x^1\partial_{\theta^0} - \frac32\hbar x_2\partial_{\theta_1} + 5 \hbar x_3\partial_{\theta_2},
\\
L(\xi_2)
 = &
\hbar\partial_{\theta^2}
+
\frac{\sqrt{3}}{8} \hbar\theta^0\partial_{x^1} - \frac43\hbar\theta^1\partial_{x^2} + \frac12 \hbar\theta^2\partial_{x^3}
+
2\hbar x^1\partial_{\theta^1} - \frac32 \hbar x^2\partial_{\theta^2} + \frac{\sqrt{3}}{4} \hbar^2\partial_{\theta^0}\partial_{x^1}.
\end{align}
\end{subequations}

\end{proof}

We would like to make several remarks about the constructed super quantum Airy structure:
\begin{itemize}
\item It is not possible to construct an additional odd generator, say $Q$, which together with $ L(y_i) $ and $L(\xi_i)$ forms a super quantum Airy structure for some superalgebra of dimension $3|3$. Indeed, suppose that such an extension exists. Writing down the Jacobi identity for the extended algebra and using the fact that the first cohomology group of $\mathfrak{osp}(1|2)$ valued in the adjoint module vanishes, one can show that with no loss of generality $Q$ may be assumed to be central. Similar arguments show that $Q$ has to annihilate the whole $6|6$-dimensional module $M \oplus \mathbb K^{0|1}$. This means that we have to take the corresponding Hamiltonian to be purely linear. The only linear variable which supercommutes with all $L$ is $\xi_0 - \frac{1}{2} \theta^0$, which does not work because it is not nilpotent.
\item It is not possible to construct a super quantum Airy structure for $\mathfrak{osp}(1|2)$ which does not involve and additional fermionic variable. The reason is that there is no $6|4$-dimensional representation whose bosonic part is the irreducible $6$-dimensional representation of $\mathfrak{sl}(2, \mathbb C)$, which is known \cite{ABCD} to be necessary to construct an Airy structure for the even subalgebra. For similar reasons, the irreducible representation that we chose is the only one which allows the construction of a super quantum Airy structure.
\item In the language of the classification scheme for super quantum Airy structures that we have outlined in Section \ref{s:classification},
the fact that the Lagrangian embedding of $\mathfrak{osp}(1|2)$ into $M \oplus \mathbb K^{0|1}$ may be obtained by acting with the generators on some reference vector $\Omega$ is equivalent to the statement that the linear terms of our generators are trivial, i.e. can be obtained by an affine shift of coordinates. There is no need to consider more general linear terms, because the relevant cohomology groups vanish \cite{Kac:superalgebras}.
\item We have $[\mathfrak{osp}(1|2),\mathfrak{osp}(1|2)]= \mathfrak{osp}(1|2)$, so the quantization procedure is unique.
\end{itemize}


\subsection{Super Frobenius Algebras}

For our last finite-dimensional examples, we construct super quantum Airy structures from super Frobenius algebras, in the spirit of \cite{ABCD} (see also \cite{Notes by Gaetan}).

We define super Frobenius algebras following \cite{SFA}, where a definition of $G$-twisted Frobenius algebras is given. We concentrate on the $G=\mathbb{Z}_2$ case, which corresponds to super Frobenius algebras.

\begin{definition}
A super Frobenius algebra $\mathbb{A}_s=\mathbb{A}_0\oplus \mathbb{A}_1$ over $\mathbb{K}$ is a finite-dimensional $\mathbb{Z}_2$-graded vector space equipped with a super-commutative, associative product $\mathbb{A}_g\otimes \mathbb{A}_h\rightarrow \mathbb{A}_{gh}$ respecting grading, and a non-degenerate bilinear form $\phi:\mathbb{A}_g\otimes \mathbb{A}_h\rightarrow\mathbb{K}$ where $\phi=0$ unless $|g|+|h|=0$ and $g,h\in\{0,1\}$. 
\end{definition}

Given a super Frobenius algebra, for any choice of even elements $\theta_A, \theta_B, \theta_C \in \mathbb{A}_0$, we can construct a super quantum Airy structure as follows: 

\begin{lemma}\label{lem:SFA}
Let $\{e_i\}$ be a basis for a super Frobenius algebra $\mathbb{A}_s$, and $\{e^j\}$ be the dual basis, i.e.
\begin{equation}
\phi(e_i,e^j)=(-1)^{|i||j|}\phi(e^j,e_i)=\delta_i^j.\label{basis}
\end{equation}
Then, for any even $\theta_A,\theta_B,\theta_C\in\mathbb{A}_0$, the coefficients
\begin{equation}
A_{ijk}=\phi(\theta_Ae_ie_je_k),\;\;\;\;B_{ij}^k=\phi(\theta_Be_ie_je^k),\;\;\;\;C_i^{jk}=\phi(\theta_Ce_ie^je^k),\label{super Frobenius ABC}
\end{equation}
together with an arbitrary coefficient $D_i$, define a super quantum Airy structure on $V=\mathbb{A}_s$ with vanishing structure constants $f_{ij}^k=0$. \end{lemma}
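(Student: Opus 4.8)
The plan is to check that the data $(A,B,C,D)$ satisfy the conditions of Lemma \ref{lem:SAS} with vanishing structure constants $f_{ij}^k=0$; property (2) of Definition \ref{d:SQAS} is automatic from the form \eqref{quadratic:Ls}. The first move is to recast everything through the Frobenius trace: writing $\epsilon$ for the functional with $\phi(u,v)=\epsilon(uv)$, the coefficients become $A_{ijk}=\epsilon(\theta_A e_ie_je_k)$, $B_{ij}^k=\epsilon(\theta_B e_ie_je^k)$, and $C_i^{jk}=\epsilon(\theta_C e_ie^je^k)$. Since $V=\mathbb A_s$ carries no extra fermionic variable, every tensor component bearing a $0$-index vanishes, so all conditions of Lemma \ref{lem:SAS} involving the index $0$ (in particular \eqref{f}) hold trivially, and it suffices to treat indices in $I$.

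The toolkit I would assemble consists of three facts: (i) super-commutativity $e_ae_b=(-1)^{|a||b|}e_be_a$; (ii) graded cyclicity $\epsilon(uv)=(-1)^{|u||v|}\epsilon(vu)$; and (iii) the completeness/sewing relation $u=\sum_c\epsilon(ue^c)e_c$, equivalently $\sum_c\epsilon(Pe^c)\,\epsilon(e_cQ)=\epsilon(PQ)$ for all $P,Q$, which follows by evaluating the left-hand side on a basis vector using $\phi(e_i,e^j)=\delta_i^j$. With these in hand the linear conditions are immediate: because $\theta_A$ is even and the product is super-commutative, $A_{ijk}$ is totally graded-symmetric, giving \eqref{A}; the same argument gives $B_{ji}^k=(-1)^{|i||j|}B_{ij}^k$, whence $f_{ij}^k=(-1)^{|i||j|}B_{ij}^k-B_{ji}^k=0$, as claimed.

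For the quadratic relations \eqref{BA}, \eqref{BB-CA}, \eqref{CB} and \eqref{CA-BD} (with the $f_{ij}^k$-terms now absent), the strategy is uniform. In each product the contracted index pairs an upper slot ($e^c$) against a lower slot ($e_c$), so after using graded cyclicity to bring $e_c$ into sewing position I collapse the sum with fact (iii) into a single trace of a product of the $\theta$'s and $e$'s. The resulting expression differs from its $i\leftrightarrow j$ image only through the relative position of $e_i$ and $e_j$ inside the trace, and relating the two by cyclicity together with super-commutativity produces exactly the factor $(-1)^{|i||j|}$ demanded by the right-hand sides. In \eqref{CA-BD} the contraction $C_i^{ba}A_{jab}$ is handled the same way by sewing on both $a$ and $b$, while the $B_{ij}^kD_k$ term cancels against its counterpart: using $B_{ji}^k=(-1)^{|i||j|}B_{ij}^k$ the two $D$-dependent contributions coincide, which is precisely why $D_i$ may be chosen arbitrarily.

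The main obstacle will be the disciplined tracking of the Koszul signs: each use of cyclicity and of super-commutativity produces a parity-dependent sign, and these must be shown to combine, together with the explicit $(-1)^{|a||b|}$ prefactors appearing in Lemma \ref{lem:SAS}, into precisely $(-1)^{|i||j|}$ upon exchanging $i$ and $j$. A useful simplification is that $\epsilon$ annihilates odd elements, so in every surviving term the parity of each contracted index is fixed by the remaining data (for instance the terms with $B_{ia}^c$ nonvanishing force $|c|=|i|+|a|$); this rigidity constrains and ultimately resolves the sign computation, leaving the rest of the argument the routine sewing described above.
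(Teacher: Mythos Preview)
Your proposal is correct and follows essentially the same approach as the paper: both proofs use the completeness relation $a=\sum_c\phi(a,e^c)e_c$ to sew the contracted indices in each quadratic product into a single trace of $\theta$'s and $e$'s, then invoke super-commutativity (with $\theta_A,\theta_B,\theta_C$ even) to exhibit the $(-1)^{|i||j|}$ symmetry under $i\leftrightarrow j$. Your version is slightly more explicit about introducing the trace functional $\epsilon$ and about the sign bookkeeping, but the argument is the same.
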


\begin{remark}
We note that this super quantum Airy structure does not depend on an extra fermionic variable, i.e. the number of differential operators match with the dimension of the super vector space $\mathbb{A}_s$.
\end{remark}

\begin{proof}
To prove that this is a super quantum Airy structure, we will show that the coefficients satisfy the conditions of Lemma \ref{lem:SAS}.

Note that \eqref{basis} implies that every $a\in\mathbb{A}_s$ can be written as $a=\phi(a,e^i)e_i$. This gives
\begin{align}
B_{ik}^pA_{jpl}=&\phi(\theta_Be_ie_ke^p)\phi(\theta_Ae_je_pe_l)=\phi(\theta_Ae_j\phi(\theta_Be_ie_ke^p)e_pe_l)\nonumber\\
=&\phi(\theta_Ae_j\theta_Be_ie_ke_l)=\phi(\theta_A\theta_Be_je_ie_ke_l).
\end{align}
Thus, we have
\begin{align}
&B_{ik}^pA_{jpl}+(-1)^{|k||l|}B_{il}^pA_{jpk}+(-1)^{|i||j|}B_{ij}^pA_{pkl}\nonumber\\
&=\phi(\theta_A\theta_Be_je_ie_ke_l)+(-1)^{|k||l|}\phi(\theta_A\theta_Be_je_ie_le_k)+(-1)^{|i||j|}\phi(\theta_A\theta_Be_ie_je_ke_l)\nonumber\\
&=3\phi(\theta_A\theta_Be_je_ie_ke_l)=(-1)^{|i||j|}(i\leftrightarrow j),
\end{align}
and hence \eqref{BA} is satisfied.

Similarly, we find
\begin{align}
B_{ik}^pB_{jp}^l&=\phi(\theta_B^2e_je_ie_ke^l),\\
C_i^{lp}A_{jpk}&=\phi(\theta_A\theta_Ce_je_ie^le_k),\\
C_i^{kp}B_{jp}^l&=\phi(\theta_B\theta_Ce_je_ie^ke^l),\\
B_{ij}^pC_p^{kl}&=\phi(\theta_B\theta_Ce_ie_je^ke^l).
\end{align}
These ensure that all terms in the remaining conditions \eqref{BB-CA}, \eqref{CB} and \eqref{CA-BD} are $\mathbb{Z}_2$-symmetrical under $i\leftrightarrow j$, and hence the conditions of Lemma \ref{lem:SAS} are satisfied.
\end{proof}

A natural question then is to determine what this super quantum Airy structure associated to a super Frobenius algebra calculates. What is the meaning of the $F_{g,n}$?

In the standard, bosonic case, it is well known that a two-dimensional topological quantum field theory (2D TQFT) naturally defines the structure of a Frobenius algebra, and that, conversely, to any Frobenius algebra can be associated a unique 2D TQFT. In this case, it is shown in \cite{ABCD,Notes by Gaetan} that the quantum Airy structure naturally associated to a Frobenius algebra, with the choice $\theta_A=\theta_B=\theta_C=1$ and $D_i = \phi(e_i \cdot H)$ for $H = \sum_{j} e_j e^j$, solves the corresponding TQFT, in the sense that its $F_{g,n}$ compute the amplitudes of the 2D TQFT (up to a simple combinatorial factor).

In the general setting, it is known that a $G$-equivariant 2D TQFT defines the structure of a $G$-twisted Frobenius algebra, from which it can be recovered \cite{TQFT2}. Super Frobenius algebras correspond to the case with $G=\mathbb{Z}_2$. We expect that a story analogous to the bosonic case holds here as well, namely, that the $F_{g,n}$ associated to the super quantum Airy structure naturally constructed from a super Frobenius algebra with $\theta_A=\theta_B=\theta_C=1$ and choice of $D_i$ as above, compute the amplitudes of the corresponding 2D super TQFT. This correspondence should be made precise and investigated further.


\section{Infinite-Dimensional Examples}
\label{sec:VOSA}

In this section we construct examples of infinite-dimensional, quadratic, subalgebraic, super quantum Airy structures, as representations of subalgebras of super Virasoro algebras. The construction follows along the lines of \cite{BBCCN, Milanov}. In the bosonic case, many infinite-dimensional quantum Airy structures compute interesting enumerative invariants, such as intersection numbers over the moduli space of curves, Hurwitz numbers, Gromov-Witten invariants, etc. We expect the super quantum Airy structures that we construct in this section to also have interesting enumerative interpretations, which we leave for future work. In particular, they may be related to the recent supersymmetric generalization of JT gravity presented in \cite{Stanford:2019vob}.

In this section we focus on constructing quadratic super quantum Airy structures. But the construction can naturally be generalized to higher order super quantum Airy structures along the lines of \cite{BBCCN}, which we also leave for future work.

We do not review the well known definitions for vertex operator algebras (VOAs), vertex operator super algebras (VOSAs) and their representations. See \cite{BBCCN} for relevant definitions in the context of the construction presented here, and for instance \cite{BZF,VOSA0,VOSA1,VOSA2,VOSA3,VOSA4} for more details on VOAs and VOSAs.

\subsection{Quantum Airy Structures from The Free Boson VOA}

Before we study super quantum Airy structures constructed as representations of subalgebras of the super Virasoro algebra, let us review the standard bosonic construction for the Virasoro algebra. In the process we will generalize the construction of \cite{BBCCN}, discovering a VOA realization of the ``topological recursion without branched covers'' presented in \cite{ABCD}.

Our goal in this section is to construct quantum Airy structures as representations of subalgebras of the Virasoro algebra with central charge $c=1$:
\begin{equation}
[L_m, L_n] = (m-n)L_{m+n} + \delta_{m,-n} \frac{1}{12} m(m^2-1).
\end{equation}
This algebra arises as the algebra of modes for the energy-momentum tensor of the free boson VOA, which is central in our construction. So let us start by reviewing the main features of the free boson VOA.

\subsubsection{The Free Boson VOA}\label{s:freeboson}

We study the free boson VOA (also called the Heisenberg VOA). It is generated by a single vector $b_{-1} \ket{0} \in V$, where $V$ is the space of states. Here $\ket{0} \in V$ is the vacuum vector. The state-operator correspondence reads
\begin{equation}
b(z) := Y(b_{-1} \ket{0}, z) = \sum_{m \in \mathbb{Z}} b_m z^{-m-1},
\end{equation}
where the modes of the bosonic field generate the Heisenberg algebra
\begin{equation}
[b_m, b_n] = m \delta_{m,-n} .
\end{equation}
The vacuum vector $\ket{0}$ is annihilated by all $b_k$ with $k \geq 0$, and the space of states $V$ is the Fock space consisting of all excited modes
\begin{equation}
b_{-k_1} \cdots b_{-k_n} \ket{0}, \qquad k_1, \ldots, k_n \in \mathbb{Z}_{>0}.
\end{equation}
Their corresponding operators are
\begin{equation}\label{eq:genop}
Y(b_{-k_1}\cdots b_{-k_n}\ket{0},z)=\altcolon\prod_{i=1}^n\frac{1}{(k_i-1)!}\left(\frac{d}{dz}\right)^{k_i-1}Y(b_{-1}\ket{0},z)\altcolon,
\end{equation}
where $\altcolon\cdots\altcolon$ denotes normal ordering, i.e., all modes $b_k$ with negative $k$ are on the left and those with positive $k$ are on the right.

The conformal vector $\ket{\omega}$ for the free boson VOA is:
\begin{equation}
\ket{\omega} = \frac{1}{2} b_{-1} b_{-1} \ket{0}.
\end{equation}
Its operator takes the form
\begin{equation}
T(z) := Y(\ket{\omega}, z) = \sum_{m \in \mathbb{Z}} L_m z^{-m-2},
\end{equation}
with its modes generating the Virasoro algebra with central charge $c=1$:
\begin{equation}\label{eq:VIR}
[L_m, L_n] = (m-n)L_{m+n} + \delta_{m,-n} \frac{1}{12} m(m^2-1).
\end{equation}
The modes of the conformal field can be related to the Heisenberg modes as follows. From \eqref{eq:genop}, we have:
\begin{equation}
T(z) =  \frac{1}{2}  Y(b_{-1} b_{-1} \ket{0}) = \frac{1}{2} \altcolon Y(b_{-1} \ket{0}, z) Y(b_{-1} \ket{0}, z) \altcolon = \frac{1}{2} \altcolon b(z) b(z) \altcolon,
\end{equation}
and hence
\begin{equation}\label{eq:Lboson}
L_m =  \frac{1}{2}  \sum_{k \in \mathbb{Z}} \altcolon b_k b_{m-k} \altcolon.
\end{equation}

\subsubsection{Untwisted and Twisted Representations of the Free Boson VOA}

To construct quantum Airy structures as representations of subalgebras of the Virasoro algebra, we will start with two different representations of the free boson VOA: an untwisted representation, and a $\mathbb{Z}_2$-twisted representation (see for instance \cite{BBCCN} for more details on twisted representations for VOAs in the context of Airy structures).

In both cases, we will represent the bosonic modes $b_m$ as endomorphisms of the space $\mathbb{K}[[V,\hbar]]$, where $V$ is an infinite-dimensional vector space. In other words, we represent the modes of the Heisenberg algebra as differential operators in $\widehat{\mathcal{W}}_\hbar(V)$, which turns the Virasoro modes into differential operators in $\widehat{\mathcal{W}}_\hbar(V)$ as well.

The untwisted representation is basically the VOA itself. The state-field correspondence for the representation is 
\begin{equation}
Y^M(b_{-1} \ket{0}, z) = \sum_{m \in \mathbb{Z}} b_m^M z^{-m-1},
\end{equation}
with the $b^M_m$ endomorphisms of $\mathbb{K}[[V,\hbar]]$. The Virasoro modes of the untwisted representation then take the form:
\begin{equation}\label{eq:Luntwisted}
L_m^M =  \frac{1}{2}  \sum_{k \in \mathbb{Z}} \altcolon b_k^M b^M_{m-k} \altcolon.
\end{equation}

For the $\mathbb{Z}_2$-twisted representation, we consider the order two automorphism that acts on the Fock space as:
\begin{align}
\sigma:& V \to V \nonumber\\
& b_{-k_1} \cdots b_{-k_n} \ket{0} \mapsto (-1)^{\sum_{i=1}^n k_i} b_{-k_1} \cdots b_{-k_n} \ket{0}.
\end{align}
This automorphism preserves the vacuum vector $\ket{0}$ and the conformal vector $\ket{\omega} = \frac{1}{2} b_{-1} b_{-1} \ket{0}$. However, it does not preserve the fundamental vector $b_{-1} \ket{0}$, which picks a sign.

Thus the state-field correspondence for this $\mathbb{Z}_2$-twisted representation takes the form
\begin{equation}\label{eq:twistboson}
Y^\sigma(b_{-1} \ket{0},z) = \sum_{r \in \mathbb{Z} + \frac{1}{2} } b^\sigma_r z^{-r-1},
\end{equation}
with fractional exponents, and twisted modes that generate the Heisenberg algebra:
\begin{equation}
[ b_r^\sigma, b_s^\sigma] = r \delta_{r,-s}.
\end{equation}

The Virasoro modes $L_m^\sigma$ of the $\mathbb{Z}_2$-twisted representation are related to the twisted bosonic modes $b_r^\sigma$ as follows:

\begin{lemma}\label{lem:sigma}
Consider the $\sigma$-twisted representation of the free boson VOA, with state-field correpondence \eqref{eq:twistboson}. The conformal field
\begin{equation}
Y^\sigma(\ket{\omega}, z) = \frac{1}{2} Y^\sigma( b_{-1} b_{-1} \ket{0}, z) = \sum_{m \in \mathbb{Z}} L_m^\sigma z^{-m-2}
\end{equation}
has Virasoro modes given by
\begin{equation}\label{eq:Ltwisted}
L_m^\sigma = \frac{1}{2} \sum_{r \in \mathbb{Z} + \frac{1}{2}} \altcolon b_r^\sigma b_{m-r}^\sigma \altcolon + \frac{1}{16} \delta_{m,0}.
\end{equation}
\end{lemma}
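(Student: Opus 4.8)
The plan is to reduce the computation of $Y^\sigma(\ket\omega, z)$ to a point-split product of twisted bosonic fields and to extract the anomalous constant $\tfrac{1}{16}$ from the mismatch between the twisted contraction and the \emph{universal} (sector-independent) singular part of the operator product. First I would record the contraction of the twisted field. Writing $b^\sigma(z) = \sum_{r\in\mathbb{Z}+1/2} b_r^\sigma z^{-r-1}$ and using that the twisted vacuum is annihilated by $b_r^\sigma$ for $r>0$, together with $[b_r^\sigma, b_s^\sigma] = r\delta_{r,-s}$, only the terms with $r>0$, $s=-r$ survive, so that
\[
\langle b^\sigma(z) b^\sigma(w)\rangle = \sum_{\substack{r\in\mathbb{Z}+1/2\\ r>0}} r\, z^{-r-1} w^{r-1}.
\]
Summing this series in $x=w/z$ (for $|z|>|w|$) yields the closed form $\tfrac{z+w}{2\sqrt{zw}(z-w)^2}$, equivalently $\tfrac{1}{2}\bigl(\sqrt{z/w}+\sqrt{w/z}\bigr)\tfrac{1}{(z-w)^2}$. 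Running the identical computation with integer modes reproduces the untwisted contraction $\tfrac{1}{(z-w)^2}$, confirming that the short-distance singularity dictated by the OPE $b(z)b(w)\sim(z-w)^{-2}$ is exactly $\tfrac{1}{(z-w)^2}$ in every module.

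Next I would invoke the defining property of twisted vertex operators: the singular part of the field--field OPE is an internal datum of the VOA, hence common to all modules, so the field assigned to $\ket\omega=\tfrac12 b_{-1}b_{-1}\ket0$ is the coincidence limit of $\tfrac12\, b^\sigma(z)b^\sigma(w)$ with this universal singularity removed,
\[
Y^\sigma(\ket\omega, z) = \frac{1}{2}\lim_{w\to z}\left( b^\sigma(z)\, b^\sigma(w) - \frac{1}{(z-w)^2}\right).
\]
Splitting $b^\sigma(z)b^\sigma(w)$ into its normal-ordered part plus the contraction above, the normal-ordered part passes smoothly to $\altcolon b^\sigma(z)b^\sigma(z)\altcolon$, while the leftover c-number is $\tfrac12\lim_{w\to z}\bigl[\tfrac{z+w}{2\sqrt{zw}(z-w)^2}-\tfrac{1}{(z-w)^2}\bigr]$. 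Factoring $z-w=(\sqrt z-\sqrt w)(\sqrt z+\sqrt w)$ collapses the apparent double pole and leaves $\tfrac{1}{2\sqrt{zw}(\sqrt z+\sqrt w)^2}$, whose $w\to z$ limit is $\tfrac{1}{8z^2}$; with the overall factor $\tfrac12$ this contributes $\tfrac{1}{16z^2}$. Reading off Laurent coefficients from $Y^\sigma(\ket\omega,z)=\sum_m L_m^\sigma z^{-m-2}$ then gives precisely \eqref{eq:Ltwisted}, since $\tfrac{1}{16z^2}=\tfrac{1}{16}z^{-2}$ feeds only into $L_0^\sigma$.

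The step I expect to be the crux is the middle one: justifying that the correct twisted conformal field is the point-split product with the \emph{untwisted} singularity subtracted, rather than with the full twisted contraction subtracted. This is exactly where the general machinery of $\sigma$-twisted modules enters and where the value $\tfrac{1}{16}$ is forced; the mode sum on one side and the coincidence-limit calculation on the other are routine. An alternative, equally valid route is to bypass point-splitting altogether: define $L_m^\sigma$ by the right-hand side of \eqref{eq:Ltwisted} and verify directly that these operators satisfy the $c=1$ Virasoro relations \eqref{eq:VIR} consistently with the twisted state--field correspondence \eqref{eq:twistboson}. In that approach the same half-integer mode bookkeeping that produces the closed form of the contraction reappears as the origin of the $\tfrac{1}{16}\delta_{m,0}$ shift, so the two derivations are really the same calculation organized differently.
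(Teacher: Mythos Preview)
Your proposal is correct and follows essentially the same route as the paper. The paper packages the point-splitting via the product formula $Y^\sigma(b_{-1}b_{-1}\ket0,z)=\tfrac{1}{2}\bigl[\partial_{z_1}^2\bigl((z_1-z_2)^2\,b^\sigma(z_1)b^\sigma(z_2)\bigr)\bigr]_{z_1=z_2=z}$ from \cite{BBCCN}, which is exactly your ``subtract the universal $(z-w)^{-2}$ singularity and take $w\to z$'' written in VOA language; the contraction you compute, $\tfrac{z+w}{2\sqrt{zw}(z-w)^2}$, coincides with the paper's $\partial_{z_2}\bigl(\tfrac{1}{z_1-z_2}\sqrt{z_2/z_1}\bigr)$, and the extraction of $\tfrac{1}{8z^2}$ (hence $\tfrac{1}{16}\delta_{m,0}$) is identical.
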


\begin{proof}
The proof goes as Lemma 4.2 of \cite{BBCCN}.
The conformal field is
\begin{equation}
Y^\sigma(\ket{\omega}, z) = \frac{1}{2} Y^\sigma( b_{-1} b_{-1} \ket{0}, z) .
\end{equation}
We want to rewrite it as a normal ordered product of twisted bosonic fields. For clarity, let us denote the twisted bosonic field by $b^\sigma(z) := Y^\sigma(b_{-1} \ket{0},z)$. 

For this we can use the product formula for twisted representations (see for instance Definition 3.18 in \cite{BBCCN}):
\begin{equation}\label{eq:prodform}
Y^\sigma(b_{-1} b_{-1} \ket{0}, z) = \frac{1}{2} \left[ \frac{\partial^2}{\partial z_1^2} \left( (z_1 - z_2)^2 b^\sigma(z_1) b^\sigma(z_2) \right) \right]_{z_1=z_2=z}.
\end{equation}
We have:
\begin{align}
b^\sigma(z_1) b^\sigma(z_2) =& \altcolon b^\sigma(z_1) b^\sigma(z_2) \altcolon + \sum_{\substack{k_1, k_2 \in \mathbb{Z} + \frac{1}{2} \\ k_1>0, k_2<0}} [b_{k_1}^\sigma, b_{k_2}^\sigma] z_1^{-k_1-1} z_2^{-k_2-1} \nonumber \\
=&  \altcolon b^\sigma(z_1) b^\sigma(z_2) \altcolon + \sum_{\substack{k_1 \in \mathbb{Z} + \frac{1}{2} \\ k_1>0}} k_1 z_1^{-k_1-1} z_2^{k_1-1} \nonumber \\
=&  \altcolon b^\sigma(z_1) b^\sigma(z_2) \altcolon + \sum_{k \in \mathbb{Z}_{\geq 0}} \frac{\partial}{\partial z_2} \left( \frac{z_2^{k }}{z_1^{k+1}}  \sqrt{\frac{z_2}{z_1}}\right) \nonumber\\
=&  \altcolon b^\sigma(z_1) b^\sigma(z_2) \altcolon + \frac{\partial}{\partial z_2} \left( \frac{1}{z_1-z_2}  \sqrt{\frac{z_2}{z_1}}\right).
\end{align}
Substituting back in \eqref{eq:prodform}, we get:
\begin{align}
Y^\sigma(b_{-1} b_{-1} \ket{0}, z) =&  \frac{1}{2} \left[ \frac{\partial^2}{\partial z_1^2} \left( (z_1 - z_2)^2 \altcolon b^\sigma(z_1) b^\sigma(z_2) \altcolon +  (z_1 - z_2)^2 \frac{\partial}{\partial z_2} \left( \frac{1}{z_1-z_2}  \sqrt{\frac{z_2}{z_1}}\right) \right) \right]_{z_1=z_2=z} \nonumber\\
=&\altcolon b^\sigma(z) b^\sigma(z) \altcolon + \frac{1}{8 z^2},
\end{align}
where the second line follows from a straightforward calculation. Therefore,
\begin{equation}
Y^\sigma(\ket{\omega},z) = \frac{1}{2} \altcolon b^\sigma(z) b^\sigma(z) \altcolon + \frac{1}{16 z^2} = \sum_{m \in \mathbb{Z}} L_m^\sigma z^{-m-2}.
\end{equation}
Combining with \eqref{eq:twistboson}, we extract the relation between the modes for the $\mathbb{Z}_2$-twisted representation:
\begin{equation}\label{eq:Ltwisted}
L_m^\sigma = \frac{1}{2} \sum_{r \in \mathbb{Z} + \frac{1}{2}} \altcolon b_r^\sigma b_{m-r}^\sigma \altcolon + \frac{1}{16} \delta_{m,0}.
\end{equation}
\end{proof}

\subsubsection{Quantum Airy Structures from Untwisted Representations of the Free Boson VOA}

Let us now construct quantum Airy structures using these representations of the free boson VOA.
We proceed in three steps:
\begin{enumerate}
\item We choose a representation (untwisted or twisted) of the free boson VOA to obtain a differential representation for the Virasoro modes:  $L_n \in \widehat{\mathcal{W}}_\hbar(V)$. As we will see, the operators $\hbar L_n$ have degree $2$ (with the notion of degree defined in \eqref{degree}). We thus have constructed a representation of the Virasoro algebra by quadratic differential operators $\hbar L_i$. 
\item We pick a subalgebra of the Virasoro algebra. Our quantum Airy structure will be obtained as a representation of this particular Virasoro subalgebra.
\item We shift some of the bosonic modes to create linear terms to the operators $\hbar L_i$, without changing the algebra, so that the second condition of Definition \ref{d:SQAS} is satisfied. This constructs a representation of the chosen Virasoro subalgebra in the form of a quantum Airy structure. 
\end{enumerate}

For convenience, let us define the following notation. For any integer $R$, we define
\begin{equation}
\delta_{i \leq R} = \begin{cases} 1 & \text{if $i \leq R$} \\ 0 & \text{ if $i > R$.} \end{cases}
\end{equation}

From untwisted representations of the free boson VOA, we construct the following three classes of quantum Airy structures:

\begin{proposition}\label{prop:untwisted}
We represent the Heisenberg algebra as 
\begin{equation}
\forall k \in \mathbb{Z}_{\geq 1}, \qquad b_k^M = \sqrt{\hbar} \partial_k, \qquad b_{-k}^M =\frac{1}{\sqrt{\hbar}} k x^k, \qquad b_0^M = \sqrt{\hbar} \partial_0.
\end{equation}
Let $\{x^0, x^1, x^2, \ldots \}$ be a basis for $V$, with dual set $\{y_0, y_1, y_2, \ldots \}$. Define the differential operators $H_i \in \widehat{\mathcal{W}}_{\hbar} (V)$, $i \in \mathbb{Z}_{\geq 0}$:
\begin{equation}
H_i =  \hbar \partial_i + \frac{\hbar}{2}  \sum_{k \in \mathbb{Z}} \altcolon b_k^M b^M_{i+N-1-k} \altcolon ,
\end{equation}
which form a representation of the Virasoro subalgebra
\begin{equation}\label{eq:untwistedC1}
[ H_i, H_j] = \hbar (i-j) H_{i+j+N-1}.
\end{equation}

\begin{enumerate}
\item Let $N$ be any integer $N \geq 0$. The linear operator $H: V^* \to \widehat{\mathcal{W}}_{\hbar} (V)$ defined by:
\begin{equation}\label{eq:untwistedH1}
\forall i \in \mathbb{Z}_{\geq 0}, \qquad H(y_i) = H_i +  \hbar D_i \delta_{i \leq N-1},
\end{equation}
for arbitrary constant $D_i$, $i=0,\ldots,N-1$, forms a quantum Airy structure as a representation of the Virasoro subalgebra \eqref{eq:untwistedC1}.
\item Let $N$ be any integer $N \geq -1$. The linear operator $H: V^* \to \widehat{\mathcal{W}}_{\hbar} (V)$ defined by
\begin{subequations}\label{eq:untwistedH2}
\begin{align}
\forall i \in \mathbb{Z}_{\geq 1}, \qquad H(y_i) =& H_i  + \hbar D_i \delta_{i \leq N+1}, \\
H(y_0) =& \hbar \partial_0 + \frac{\hbar^2}{2} C_0 \partial_0^2 + \hbar D_0,
\end{align}
\end{subequations}
for arbitrary constants $D_i$, $i=0,\ldots,N+1$ and $C_0$, forms a quantum Airy structure as a representation of the Virasoro subalgebra \eqref{eq:untwistedC1} extended by:
\begin{equation}\label{eq:untwistedC2}
[H(y_0), H(y_i)] = 0.
\end{equation}
\item Let $N$ be any integer $N \geq -1$. Let us now formally set the bosonic zero mode $b_0^M = 0$. Consider the subspace $V_{red} \subset V$ spanned by $\{x^1, x^2, \ldots \}$. The linear operator $H: V^*_{red} \to \widehat{\mathcal{W}}_{\hbar} (V_{red})$ defined by
\begin{equation}\label{eq:untwistedH3}
\forall i \in \mathbb{Z}_{\geq 1}, \qquad H(y_i) =  H_i \Big|_{b_0^M = 0} + \hbar D_i \delta_{i \leq N+1},
\end{equation}
for arbitrary constants $D_i$, $i=1,\ldots,N+1$, forms a quantum Airy structure as a representation of the Virasoro subalgebra \eqref{eq:untwistedC1}.

\end{enumerate}

\end{proposition}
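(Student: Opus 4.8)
The plan is to verify directly that the stated operators satisfy the three defining conditions of a bosonic, quadratic, subalgebraic quantum Airy structure, i.e.\ the specialization of Definition~\ref{d:SQAS} with $V = V_0$ and no extra fermionic variable: evenness and continuity of $L$, the correct leading term $L(y_i)^{\leq 1} = \hbar\partial_i$ (property (2)), and the Lie-algebra relation $[L(V^*),L(V^*)]\subseteq \hbar\cdot L(V^*)$ (the subalgebraic strengthening of property (1)). Writing $m = i+N-1$, I note that $H_i = \hbar\partial_i + \hbar L^M_m$ with $\hbar L^M_m = \frac{\hbar}{2}\sum_{k\in\mathbb Z}\altcolon b^M_k b^M_{m-k}\altcolon$. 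Every mode satisfies $\deg(\sqrt{\hbar}\,b^M_k) = 1$ (it equals $\hbar\partial_k$, $\hbar\partial_0$, or a multiple of $x^{|k|}$), so each summand $\frac{\hbar}{2}\altcolon b^M_k b^M_{m-k}\altcolon$ is homogeneous of degree $2$ and, crucially, carries only \emph{integer} powers of $\hbar$ (the two half-integer powers always combine). Checking the coefficient condition of \eqref{eq:Weyl_terms} term by term — for a fixed derivative index only one creation monomial $x^p$ survives — shows $\hbar L^M_m\in\widehat{\mathcal{W}}_{\hbar}(V)$ is a genuine degree-$2$ element. Since this quadratic piece has pure degree $2$ while the shift $\hbar D_i$ (respectively $\frac{\hbar^2}{2}C_0\partial_0^2 + \hbar D_0$) has degree $\geq 2$, property (2) follows: $H(y_i)^{\leq 1} = \hbar\partial_i$.

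The heart of the argument is the relation $[H_i,H_j] = \hbar(i-j)H_{i+j+N-1}$. Expanding $[\hbar\partial_i + \hbar L^M_m,\ \hbar\partial_j + \hbar L^M_n]$ with $n = j+N-1$, the term $\hbar^2[L^M_m,L^M_n]$ gives, via the $c=1$ Virasoro relation \eqref{eq:VIR}, the summand $\hbar^2(m-n)L^M_{m+n} = \hbar(i-j)\big(H_{i+j+N-1} - \hbar\partial_{i+j+N-1}\big)$ together with the central contribution $\hbar^2\frac{1}{12}m(m^2-1)\delta_{m,-n}$. The cross-terms $\hbar^2[\partial_i,L^M_n] - \hbar^2[\partial_j,L^M_m]$, evaluated through $[b^M_p,L^M_q] = p\,b^M_{q+p}$, collapse to $(i-j)\hbar^{3/2}b^M_{i+j+N-1}$; since the output index is $\geq 0$ this is $(i-j)\hbar^2\partial_{i+j+N-1}$, exactly restoring the missing linear term. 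This is precisely why the hand-added $\hbar\partial_i$ is the correct shift for the relation to close on the full operators $H_\bullet$ rather than merely on the quadratic Virasoro pieces. The central term vanishes because $\delta_{m,-n}$ forces $i+j = 2(1-N)$, so that the relevant $m = i+N-1$ lies in $\{-1,0,1\}$, where $m(m^2-1) = m(m-1)(m+1) = 0$. The resulting structure constants $f_{ij}^k = (i-j)\delta_{k,\,i+j+N-1}\in\mathbb K$ involve finitely many $(i,j)$ for each fixed $k$, so $L$ is continuous, and evenness is automatic as $V=V_0$. One also checks the output index stays admissible: for part (1), $i,j\geq 0$ with $i\neq j$ gives $i+j\geq 1$, hence $i+j+N-1\geq N\geq 0$; for parts (2) and (3), $i,j\geq 1$ with $i\neq j$ gives $i+j\geq 3$, hence $i+j+N-1\geq N+2\geq 1$ for $N\geq -1$ (so in part (2) the distinguished generator $H(y_0)$ is never produced).

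It remains to account for the constant shifts and the special generators. Adding a central $\hbar D_k$ to $H_k$ leaves all commutators unchanged, and the subalgebraic property survives provided no spurious central constant $\hbar^2(i-j)D_{i+j+N-1}$ is generated; this vanishes exactly when $i+j+N-1$ is \emph{not} of the admissible output form (with $i\neq j$ in range). By the range counts above, the permitted shift-indices are $\{0,\ldots,N-1\}$ when the generators start at $0$ (part (1)) and $\{1,\ldots,N+1\}$ when they start at $1$ (parts (2),(3)), which is precisely the content of the cutoffs $\delta_{i\leq N-1}$ and $\delta_{i\leq N+1}$. For part (2), the extra generator $H(y_0) = \hbar\partial_0 + \frac{\hbar^2}{2}C_0\partial_0^2 + \hbar D_0$ commutes with every $H_i$: the zero mode is realized purely as $\partial_0$ and no operator contains $x^0$ as a multiplication operator, so $H(y_0)$ is a function of $\partial_0$ alone and hence central, giving $[H(y_0),H(y_i)]=0$; its leading term is again $\hbar\partial_0$. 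For part (3), the specialization $b^M_0 = 0$ is consistent since $b_0$ is central in the Heisenberg algebra, the Virasoro relations persist (the only change is an additive constant in $L^M_0$, which never enters the subalgebra used), and the analysis of parts (1)–(2) applies verbatim on $V_{red}$. The main obstacle is the commutation computation of the middle paragraph — in particular tracking the $\sqrt{\hbar}$-powers in the cross-terms so that the linear shift is reproduced exactly, and pinning down the precise admissible shift-index ranges that dictate the two cutoffs.
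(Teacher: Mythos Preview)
Your argument is correct. The one notable difference from the paper's proof is in how the linear term $\hbar\partial_i$ is shown to be compatible with the Virasoro relations: the paper obtains it by performing the mode shift $b_{N}^M\mapsto b_{N}^M+\tfrac{1}{\sqrt{\hbar}}$ (respectively $b_{N-1}^M$), which is an automorphism of the Heisenberg algebra and therefore preserves all commutators automatically, after which the resulting spurious constant $\tfrac{1}{2}$ in $H_N$ is removed by hand. You instead verify the same thing by computing the cross-commutators $[\hbar\partial_i,\hbar L^M_n]$ directly via $[b_p^M,L_q^M]=p\,b_{p+q}^M$ and observing they rebuild exactly $\hbar(i-j)\hbar\partial_{i+j+N-1}$. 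Both routes are short; the paper's is slightly more conceptual (the linear term is ``explained'' as a translation in the Heisenberg representation), while yours is more self-contained and makes the vanishing of the central contribution and the precise admissible $D$-shift ranges completely explicit. The treatment of the $D$-terms, of the auxiliary $H(y_0)$ in part~(2), and of the $b_0^M=0$ specialization in part~(3) is essentially identical to the paper's.
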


\begin{proof}
We start with the untwisted representation of the free boson VOA. For our first two classes of quantum Airy structures, we represent the bosonic modes as endomorphisms on the space $M=\mathbb{K}[[V,\hbar]]$, where $V$ is an infinite-dimensional vector space with basis $\{x^0, x^1, x^2, \ldots \}$, as:
\begin{equation}
\forall k \in \mathbb{Z}_{\geq 1}, \qquad b_k^M = \sqrt{\hbar} \partial_k, \qquad b_{-k}^M =\frac{1}{\sqrt{\hbar}} k x^k, \qquad b_0^M = \sqrt{\hbar} \partial_0,
\end{equation}
which form a representation of the Heisenberg algebra. From \eqref{eq:Luntwisted}, the Virasoro modes read:
\begin{equation}\label{eq:Lun1}
\hbar L_m^M =  \frac{\hbar}{2}  \sum_{k \in \mathbb{Z}} \altcolon b_k^M b^M_{m-k} \altcolon.
\end{equation}
We multiplied the Virasoro modes by $\hbar$ here so that they have degree $2$ according to the grading \eqref{degree} on $\widehat{\mathcal{W}}_\hbar(V)$. This rescales the Virasoro algebra by $\hbar$.

We now choose the following infinite sequence of subalgebras of the Virasoro algebra:
\begin{equation}
[L_m^M, L_n^M] =  (m-n) L_{m+n}^M, \qquad m,n \geq N,
\end{equation}
for arbitrary fixed integer $N \geq -1$. We will construct quantum Airy structures for each choice of such subalgebra, using the representation of the Virasoro algebra \eqref{eq:Lun1}. To do so, we need to bring the operators $\hbar L_m^M$ in the form of quantum Airy structures by creating appropriate linear terms.

For our first class of quantum Airy structures, we shift indices so that for any choice of subalgebra (choice of $N$) our operators are indexed by integers $i=0,1,2,\ldots$.
Thus, given an integer $N \geq -1$, we define the quadratic hamiltonians:
\begin{equation}\label{eq:quadha}
\forall i \in \mathbb{Z}_{\geq 0}, \qquad H_i^2 := \hbar L_{i+N}^M =  \frac{\hbar}{2}  \sum_{k \in \mathbb{Z}} \altcolon b_k^M b^M_{i+N-k} \altcolon,
\end{equation}
which have commutation relations:
\begin{equation}\label{eq:virLcheck}
[ H_i^2, H_j^2] = \hbar (i-j) H_{i+j+N}^2.
\end{equation}
To create appropriate linear terms, we do the shift $b_N \mapsto b_N + \frac{1}{\sqrt{\hbar}}$, which creates a linear term $H_i^1 = \sqrt{\hbar} b_i = \hbar \partial_i$ in the hamiltonians without changing the commutation relations. It however also creates an unwanted constant term $H_N^0 = \frac{1}{2}$, but since $H_N$ never appears in the right-hand-side of the commutation relations \eqref{eq:virLcheck}, we can get rid of this constant term without changing the algebra. We thus obtain hamiltonians:
\begin{equation}
\forall i \in \mathbb{Z}_{\geq 0}, \qquad H_i =  \hbar \partial_i + \frac{\hbar}{2}  \sum_{k \in \mathbb{Z}} \altcolon b_k^M b^M_{i+N-k} \altcolon,
\end{equation}
which have the form of a quantum Airy structure. However, this quantum Airy structure has no ``$A$'' or ``$D$'' terms (see \eqref{quadratic:Ls}), and hence its associated partition function $Z$ is trivial: that is, $Z=1$ (see \eqref{F(i)}). For $N=-1$, there is not much that we can do. However, for $N \geq 0$ we can make the partition function non-trivial. Looking at the commutation relations \eqref{eq:virLcheck}, we see that the operators $H_i$, $i=0,\ldots,N$ never appear on the right-hand-side of the commutation relations. Thus we can add quadratic ``$D$'' terms of the form $\hbar D_i$ to those operators without changing the commutation relations (since these terms commute with all $H_i$). We thus obtain our first class of quantum Airy structures. For convenience, in this case we redefine $N \mapsto N-1$, so that the operators take the same form as for the other cases. We then have $N \geq 0$. We obtain the quantum Airy structure \eqref{eq:untwistedH1}, as a representation of the algebra \eqref{eq:untwistedC1}.

For our second class of quantum Airy structures, we shift indices so that for any $N$ our operators are indexed by integers $i=1,2,\ldots$:
\begin{equation}\label{eq:quadha}
\forall i \in \mathbb{Z}_{\geq 1}, \qquad H_i^2 := \hbar L_{i+N-1}^M =  \frac{\hbar}{2}  \sum_{k \in \mathbb{Z}} \altcolon b_k^M b^M_{i+N-1-k} \altcolon,
\end{equation}
with commutation relations:
\begin{equation}
[ H_i^2, H_j^2] = \hbar (i-j) H_{i+j+N-1}^2.
\end{equation}
To create linear terms, we do the shift $b_{N-1} \mapsto b_{N-1} + \frac{1}{\sqrt{\hbar}}$, which creates a linear term $H_i^1 = \sqrt{\hbar} b_i = \hbar \partial_i$ in the hamiltonians without changing the commutation relations. We get rid of the unwanted constant term $H_{N-1}^0 = \frac{1}{2}$ as above, and introduce non-trivial $D$-terms $\hbar D_i$ for $i=1,\ldots,N+1$, since the corresponding $H_i$, $i=1,\ldots,N+1$ do not appear on the right-hand-side of the commutation relations. This is not quite an Airy structure though, since the hamiltonians depend on the variable $x^0$, which does not appear in the linear terms. We notice however that the $H_i$ only depend on the derivative $\partial_0$ through the bosonic zero mode $b_0^M$, they do not depend on the variable $x^0$ itself. Thus, we can introduce an auxiliary operator
\begin{equation}
\mathcal{H}_0 = \hbar \partial_0 + \frac{\hbar^2}{2} C_0 \partial_0^2 + \hbar D_0,
\end{equation}
which commutes with all $H_i$, $i \geq 0$, to get a quantum Airy structure. The result is the quantum Airy structure \eqref{eq:untwistedH2}, as a representation of the algebra \eqref{eq:untwistedC2}.

 For our third class of quantum Airy structures, we modify the representation of the Heisenberg algebra slightly, and set the bosonic zero mode $b_0^M = 0$. That is, we consider a ``momentum zero'' representation. To get a quantum Airy structure, we do the same manipulations as for the second class above. We end up with the quantum Airy structure \eqref{eq:untwistedH3}, as a representation of the algebra \eqref{eq:untwistedC1}.

\end{proof}

Before we proceed, a few remarks are in order.

\begin{remark}
Note that from the point of view of partition functions, the third class can be understood as a special case of the second class. Indeed, consider the second class of quantum Airy structures with $C_0 = D_0 = 0$. The constraint $\mathcal{H}_0 Z = 0$ implies that $Z$ does not depend on the variable $x^0$. Then any term in the constraints $H_i Z = 0$ that involves $b_0^M$ identically vanishes, and hence the constraints $H_i Z = 0$ become identical to the constraints of the third class of quantum Airy structures. Therefore, the partition function of the third class is equal to the partition function of the second class with $C_0 = D_0 = 0$.
\end{remark}

\begin{remark}
In the construction above we introduced the shifts $\hbar D_i$ as defining a new representation for the Virasoro subalgebra. We could however take a different viewpoint. We could stick with the representation without the added $D$-terms, and then, instead of solving the constraints $H_i Z = 0$ to define the partition function, we could solve the constraints
\begin{equation}
H_i Z = - \hbar D_i Z,
\end{equation}
with possibly non-vanishing $D_i$ in the allowed range (for instance $i=0,\ldots,N-1$ in the first class). This is entirely equivalent to what we did above, but from the point of view of vertex operator algebras, by solving these constraints we would be constructing so-called ``Whittaker modules'' for the Virasoro algebra. We are currently investigating this connection further.
\end{remark}

It turns out that the quantum Airy structures in the third class of Proposition \ref{prop:untwisted} have appeared in the literature before, in a different form. Indeed, in Section 10 of \cite{ABCD}, a variant of the Eynard-Orantin topological recursion was constructed, the so-called ``topological recursion without branched covers''. We now show that the quantum Airy structures of Proposition \ref{prop:untwisted} are examples of this topological recursion without branched covers. We thus obtain a realization of the topological recursion without branched covers in terms of representations of the free boson VOA, which sheds light on its origin. It remains to be seen however whether the associated partition functions compute interesting enumerative invariants, and whether the other quantum Airy structures of Proposition \ref{prop:untwisted} have similar realizations.

\begin{proposition}
The data of a spectral curve for the topological recursion without branched covers of Section 10 of \cite{ABCD} is given by:
\begin{itemize}
\item A Riemann surface $\Sigma$;
\item A meromorphic one-form $\omega_{0,1}$ on $\Sigma$;
\item A bilinear differential $\omega_{0,2}$ on $\Sigma^2$;
\item A finite subset $\mathfrak{r} \subset \Sigma$ such that $\omega_{0,1}$ has at most simple zeros on $\mathfrak{r}$;
\item A meromorphic one-form $\omega_{1,1}$ on $\Sigma$ such that for any point $p \in \mathfrak{r}$, $z^2 \frac{\omega_{1,1}(z)}{\omega_{0,1}(z)}$ is holomorphic at $p$, where $z$ is a local coordinate near $p$.
\end{itemize}
Consider the spectral curve given by the data:
\begin{equation}
\Sigma=\mathbf{P}^1,\;\;\;\;\omega_{0,1}(z)=-\frac{dz}{z^{N}},\;\;\;\;\omega_{0,2}(z_1,z_2)=\frac{dz_1dz_2}{(z_1-z_2)^2},\;\;\;\;\mathfrak{r} = \{0\}, \;\;\;\;  \omega_{1,1}(z)=-\sum_{k=1}^{N+1}D_k\frac{dz}{z^{k+1}},\label{TR for no branch cover}
\end{equation}
where $z$ is a coordinate on $\mathbb{P}^1$ and $N \in \mathbb{Z}_{\geq 0}$. Then, for any $N \geq 0$, the quantum Airy structure associated to the topological recursion without branched cover on this spectral curve, which was constructed in Section 10 of \cite{ABCD}, precisely corresponds to the quantum Airy structure in the third class of Proposition \ref{prop:untwisted}. 
\end{proposition}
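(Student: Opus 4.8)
The plan is to show that both objects are quadratic quantum Airy structures of the form \eqref{quadratic:Ls}, and that after the natural identification of index sets their defining tensors $A,B,C,D$ coincide. By Theorem~\ref{thm:SAS} the coefficients $F_{g,n}$ are then uniquely reconstructed from these tensors and hence agree, so the two structures --- and the correlators $\omega_{g,n}$ they compute --- are identical. The whole argument thus reduces to a tensor-by-tensor comparison, with the uniqueness part of Theorem~\ref{thm:SAS} doing the final upgrade from ``equal tensors'' to ``same quantum Airy structure''.

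First I would recall the construction of Section 10 of \cite{ABCD}. There the correlators $\omega_{g,n}$ are expanded in a basis of one-forms $\{\xi_k\}$ adapted to the marked point $\mathfrak{r}=\{0\}$ (concretely $\xi_k \sim dz/z^{k+1}$ near $z=0$), the expansion coefficients are identified with $F_{g,n}[k_1,\dots,k_n]$, and the residue recursion is recast in the form \eqref{F(i)}. In this dictionary the $C$-tensor is read off from the principal part of the Bergman kernel $\omega_{0,2}$, the $B$-tensor together with the index shift comes from $\omega_{0,1}$, and the $D$-tensor from $\omega_{1,1}$; since the curve has no branch points, there is no genus-zero three-point seed, so the $A$-tensor vanishes. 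I would then write these read-off formulas explicitly for the data \eqref{TR for no branch cover}. Note that $\omega_{1,1}$ being supplied as input (rather than computed) is exactly what makes $D_i$ a free parameter on the Airy side.

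Independently I would compute the tensors of the third class of Proposition~\ref{prop:untwisted}. Expanding the normal-ordered modes $\hbar L^{M}_{i+N-1}\big|_{b^{M}_0=0}$ in the template \eqref{quadratic:Ls} gives $A=0$, a $C$-tensor supported on pairs with $a+b=i+N-1$, a $B$-tensor linear in the mode indices, and the hand-added terms $\hbar D_i\,\delta_{i\le N+1}$. The conceptual bridge between the two computations is the elementary identity $\langle b(z_1)b(z_2)\rangle\,dz_1\,dz_2 = dz_1\,dz_2/(z_1-z_2)^2 = \omega_{0,2}$: the free-boson two-point function is precisely the Bergman kernel of $\mathbf{P}^1$, so the $C$-tensor extracted from $\omega_{0,2}$ matches the double-contraction of Heisenberg modes. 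The order-$N$ pole of $\omega_{0,1}=-dz/z^N$ at $\mathfrak{r}$ is what reproduces the shift by $N-1$ defining the Virasoro subalgebra \eqref{eq:untwistedC1}, and expanding $\omega_{1,1}=-\sum_{k=1}^{N+1}D_k\,dz/z^{k+1}$ yields exactly the $D_k$ for $k=1,\dots,N+1$, matching the admissible range $i\le N+1$.

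I expect the main obstacle to be purely one of bookkeeping rather than of principle: pinning down the relative normalizations and signs, and the precise local-coordinate and basis conventions, so that the residue pairing of \cite{ABCD} is aligned with the Airy-structure convention in \eqref{quadratic:Ls}, and in particular tracking how the order-$N$ pole of $\omega_{0,1}$ induces the index shift. Once these conventions are fixed, the equality $A=0$ and the coincidence of $B$, $C$, and $D$ are direct, and Theorem~\ref{thm:SAS} then delivers the asserted correspondence.
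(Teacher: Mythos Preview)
Your proposal is correct and follows essentially the same approach as the paper: compute the tensors $A,B,C,D$ from the spectral-curve recipe of \cite{ABCD} for the given data, compute them independently from the normal-ordered Heisenberg expression for $H_i$ in the third class of Proposition~\ref{prop:untwisted}, and match. The paper carries this out by quoting the explicit residue formulas $A_{ijk}=\Res_{z\to 0}\xi^*_i\,d\xi^*_j\,d\xi^*_k\,\theta$, $B_{ij}^k=\Res_{z\to 0}\xi^*_i\,d\xi^*_j\,\xi_k\,\theta$, $C_i^{jk}=\Res_{z\to 0}\xi^*_i\,\xi_j\,\xi_k\,\theta$ with $\theta=1/\omega_{0,1}$ (so all three involve $\omega_{0,1}$, not just $B$; your attribution of $C$ to $\omega_{0,2}$ is a slight mislabeling, though the end result is the same), evaluating them to $A=0$, $B_{ij}^k=-j\,\delta_{i+j+N-1,k}$, $C_i^{jk}=-\delta_{i+N-1-j,k}$, and then recognizing the resulting $H_i$ as the Virasoro modes written in Heisenberg form with $b_0^M=0$.
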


\begin{proof}
Following \cite{ABCD}, we define, for $k \in \mathbb{Z}_{\geq 1}$,\footnote{Note that our definition for $\xi_k$ and $\xi^*_{k+1}$ is rescaled by $(k+1)^{-1}$ and $k$ respectively with respect to \cite{ABCD}. }
\begin{equation}\label{eq:basisTR}
\xi_k(z) = \frac{dz}{z^{k+1}}, \qquad \xi^*_k(z) = z^k, \qquad \theta(z) := \frac{1}{\omega_{0,1}(z)}=-\frac{z^{N}}{dz}.
\end{equation}

According to the recipe of Section 10 in \cite{ABCD} (see also Proposition 8.13), the coefficients of the quantum Airy structure associated to topological recursion without branched covers can be calculated as follows. The $D$-coefficients can be obtained by expanding the one-form $\omega_{1,1}(z)$ in the basis of differentials $\xi_k(z)$:
\begin{equation}
\omega_{1,1}(z) = - \sum_{k=1}^{N+1} D_k \xi_k(z).
\end{equation}
As for $A,B,C$, they can be calculated as:
\begin{align}
A_{ijk}&=\underset{z\rightarrow 0}{\text{Res}}(\xi^*_i(z)d\xi_j^*(z)d\xi^*_k(z)\theta(z)),\\
B_{ij}^{k}&=\underset{z\rightarrow0}{\text{Res}}(\xi^*_i(z)d\xi_j^*(z)\xi_k(z)\theta(z)),\\
C_{i}^{jk}&=\underset{z\rightarrow0}{\text{Res}}(\xi^*_i(z)\xi_j(z)\xi_k(z)\theta(z)).
\end{align}
Substituting \eqref{eq:basisTR} in these equations, we obtain:
\begin{equation}
A_{ijk}=0,\;\;\;\;B_{ij}^k=-j\delta_{i+j+N-1,k},\;\;\;\;C_i^{jk}=-\delta_{i+N-j-1,k}.
\end{equation}
Thus, the resulting differential operators are, for $i \in \mathbb{Z}_{\geq 1}$,
\begin{equation}
H_{i}=\hbar\frac{\partial}{\partial x^i}+\hbar\sum_{k-l=-i-N+1} k x^k \partial_l +\frac{\hbar^2}{2}\sum_{k+l=i+N-1} \partial_k \partial_l +\hbar D_i\delta_{i\leq N+1},
\end{equation}
with $k,l \in \mathbb{Z}_{\geq 1}$, which can be rewritten as
\begin{equation}
H_i =  \hbar \partial_i + \frac{\hbar}{2}  \sum_{k \in \mathbb{Z}} \altcolon b_k^M b^M_{i+N-1-k} \altcolon + \hbar D_i \delta_{i \leq N+1},
\end{equation}
with
\begin{equation}
b_k^M = \sqrt{\hbar} \partial_k, \qquad b_{-k}^M =\frac{1}{\sqrt{\hbar}} k x^k, \qquad b_0^M = 0,
\end{equation}
for all $k \in \mathbb{Z}_{\geq 1}$. Those are precisely the differential operators $H_i$ of Proposition \ref{prop:untwisted}.

\end{proof}

\subsubsection{Quantum Airy Structures from $\mathbb{Z}_2$-Twisted Representations of the Free Boson VOA}

We now construct another class of quantum Airy structures obtained from the $\sigma$-twisted representation of the free boson VOA (see Lemma \ref{lem:sigma}). This is an example of the general construction of \cite{BBCCN} for $W$-algebras, although only the cases with $N=-1$ and $N=0$ (and their generalizations to $W$-algebras) were considered there.

\begin{proposition}\label{prop:twisted}
Let $N$ be any fixed integer $N \geq -1$. We represent the Heisenberg algebra for the twisted bosonic modes as:
\begin{equation}
\forall r \in \mathbb{Z}_{\geq 0} + \frac{1}{2}, \qquad b_r^\sigma = \sqrt{\hbar} \partial_{r+\frac{1}{2}}, \qquad b_{-r}^\sigma = \frac{r}{\sqrt{\hbar}} x^{r + \frac{1}{2}}.
\end{equation}
Let $\{x^1, x^2, x^3, \ldots \}$ be a basis for $V$, with dual set $\{y_1, y_2, y_3, \ldots \}$. The linear operator $H: V^* \to \widehat{\mathcal{W}}_{\hbar} (V)$ defined by:
\begin{equation}\label{eq:twistedH1}
\forall i \in \mathbb{Z}_{\geq 1}, \qquad H_i:= H(y_i) = \hbar \partial_i + \frac{\hbar}{2} \sum_{r \in \mathbb{Z} + \frac{1}{2}} \altcolon b_r^\sigma b_{i+N-1-r}^\sigma \altcolon + \frac{\hbar}{16} \delta_{i,1-N} + \hbar D_i \delta_{i \leq N+1},
\end{equation}
for arbitrary constants $D_i$, $i=1,\ldots,N+1$, forms a quantum Airy structure as a representation of the Virasoro subalgebra
\begin{equation}\label{eq:twistedC1}
[H_i, H_j] = \hbar (i-j) H_{i+j+N-1}.
\end{equation}

\end{proposition}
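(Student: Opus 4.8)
The plan is to follow the three-step strategy of Proposition \ref{prop:untwisted} verbatim in structure, replacing the untwisted mode expansion by the $\sigma$-twisted one of Lemma \ref{lem:sigma}. Two features distinguish the twisted case and deserve care: the half-integer indexing of the modes $b_r^\sigma$, which changes which mode must be shifted to create the linear terms, and the anomalous constant $\frac{1}{16}\delta_{m,0}$ appearing in \eqref{eq:Ltwisted}.

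First I would start from the twisted Virasoro modes $L_m^\sigma$ of Lemma \ref{lem:sigma} and pass to the rescaled operators $\hbar L_m^\sigma$. Under the representation $b_r^\sigma = \sqrt\hbar\partial_{r+1/2}$, $b_{-r}^\sigma = \frac{r}{\sqrt\hbar}x^{r+1/2}$, each normal-ordered summand of $\hbar L_m^\sigma$ becomes one of $x^kx^l$, $\hbar x^k\partial_l$ or $\hbar^2\partial_k\partial_l$, all of degree $2$ in the grading \eqref{degree}, and the usual finiteness check (for each derivative multi-index only finitely many monomials occur) places $\hbar L_m^\sigma$ in $\widehat{\mathcal{W}}_\hbar(V)$. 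Next I restrict to the range $m \geq N$ with $N \geq -1$: the $c=1$ central term $\delta_{m,-n}\frac{1}{12}m(m^2-1)$ can contribute only when $m=-n$ with $m,n\geq N\geq -1$, forcing $m\in\{-1,0,1\}$, for which $m(m^2-1)=0$, so the subalgebra is centreless and $[L_m^\sigma,L_n^\sigma]=(m-n)L_{m+n}^\sigma$. Shifting indices by $H_i^2 := \hbar L_{i+N-1}^\sigma$ ($i\geq1$) gives $[H_i^2,H_j^2]=\hbar(i-j)H_{i+j+N-1}^2$, which is \eqref{eq:twistedC1}; under the same shift the anomaly becomes $\frac{\hbar}{16}\delta_{i,1-N}$. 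Since $L_m^\sigma$ already satisfies the Virasoro relations with this constant included, the anomaly is intrinsic to the generator $L_0^\sigma$ and is automatically compatible with the bracket, requiring no separate argument.

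The second step is the creation of the linear terms. As in the untwisted case this is a single constant shift of one mode, but here a half-integer one: I would set $b_{N-1/2}^\sigma \mapsto b_{N-1/2}^\sigma + \frac{1}{\sqrt\hbar}$. In $H_i^2$ exactly two summands involve this mode (those with $r=N-\frac12$ and with $r=i-\frac12$), and together they contribute $\frac{\hbar}{2}\cdot 2\cdot\frac{1}{\sqrt\hbar}\,b_{i-1/2}^\sigma = \hbar\partial_i$ for every $i\geq1$, using $b_{i-1/2}^\sigma=\sqrt\hbar\partial_i$. This computation is the heart of the argument and it pins down the normalization $\frac{1}{\sqrt\hbar}$ given the twisted convention $b_{-r}^\sigma=\frac{r}{\sqrt\hbar}x^{r+1/2}$. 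The shift also produces an unwanted constant $\frac{1}{2}$ in the single operator with $i=N$ (from the coincident term $r=N-\frac12=i-\frac12$); because $i+j+N-1\geq N+1$ for all $i,j\geq1$, the generator $H_N$ never appears with nonzero coefficient on the right-hand side of \eqref{eq:twistedC1}, so this constant may be dropped without disturbing the algebra (and for $N\leq0$ the index $i=N$ is outside the generating set, so nothing need be removed).

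Finally I would insert the $D$-terms and verify the axioms of Definition \ref{d:SQAS}. The same inequality $i+j+N-1\geq N+1$ shows $H_1,\dots,H_{N+1}$ appear on the right-hand side of \eqref{eq:twistedC1} only with vanishing coefficient, so arbitrary $\hbar D_i$ may be added for $i\leq N+1$; combined with the automatic $\frac{\hbar}{16}\delta_{i,1-N}$ this yields \eqref{eq:twistedH1}. Condition (2) of Definition \ref{d:SQAS} then holds because the quadratic, $D$- and anomaly pieces all have degree $2$, leaving $H_i^{\leq1}=\hbar\partial_i$; condition (1) holds in the stronger subalgebraic form since the structure constants $f_{ij}^k=(i-j)\delta^k_{i+j+N-1}$ are numbers and continuity of $H$ is immediate. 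As $V=V_0$ with no extra fermionic variable, $H$ is trivially even, so this is a (bosonic) quantum Airy structure. The main obstacle is entirely the bookkeeping of the half-integer shift in the second step together with the twisted anomaly: one must check that shifting $b_{N-1/2}^\sigma$ reproduces $\hbar\partial_i$ uniformly in $i$, and keep firmly in mind that the $\frac{1}{16}$ term is an intrinsic part of $L_0^\sigma$ --- hence consistent with the centreless bracket by construction --- rather than a freely addable constant of $D$-type.
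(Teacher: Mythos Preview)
Your proposal is correct and follows essentially the same route as the paper's proof: start from the $\sigma$-twisted Virasoro modes of Lemma \ref{lem:sigma}, restrict to $m\geq N$, shift indices by $N-1$, shift the mode $b_{N-1/2}^\sigma$ by $\frac{1}{\sqrt\hbar}$ to create the linear terms $\hbar\partial_i$, remove the resulting constant $\frac12$ at $i=N$, and add $D$-terms for $i\leq N+1$. You have simply been more explicit than the paper about the degree and finiteness checks, the vanishing of the central term via $m(m^2-1)=0$ for $m\in\{-1,0,1\}$, and the verification of the axioms of Definition~\ref{d:SQAS}.
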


\begin{proof}
We start with the $\sigma$-twisted representation of the free boson VOA. The twisted bosonic modes, see \eqref{eq:twistboson}, form the Heisenberg algebra $[b_r^\sigma, b_s^\sigma] = r \delta_{r,-s}$, where $r,s \in \mathbb{Z} + \frac{1}{2}$. We represent the twisted bosonic modes as endomorphisms on the space $\mathbb{K}[[V,\hbar]]$ as:
\begin{equation}
\forall r \in \mathbb{Z}_{\geq 0} + \frac{1}{2}, \qquad b_r^\sigma = \sqrt{\hbar} \partial_{r+\frac{1}{2}}, \qquad b_{-r}^\sigma = \frac{r}{\sqrt{\hbar}} x^{r + \frac{1}{2}}.
\end{equation}
Note that there is no choice of bosonic zero mode here, because the bosonic field is twisted by the $\mathbb{Z}_2$ automorphism. 
From \eqref{eq:Ltwisted}, the Virasoro modes take the form:
\begin{equation}
L_m^\sigma = \frac{1}{2} \sum_{r \in \mathbb{Z} + \frac{1}{2}} \altcolon b_r^\sigma b_{m-r}^\sigma \altcolon + \frac{1}{16} \delta_{m,0} 
\end{equation}

Now as in Proposition \ref{prop:untwisted}, we choose the subalgebra
\begin{equation}
[L_m^\sigma, L_n^\sigma] = (m-n) L_m^\sigma, \qquad m,n \geq N,
\end{equation}
for an arbitrary fixed integer $N \geq -1$. 
We now shift indices so that our operators are indexed by integers $i=1,2,3,\ldots$ for any $N$. For any $N \geq -1$, we define the quadratic hamiltonians:
\begin{equation}
H_i^2 := \hbar L^\sigma_{i+N-1} =\frac{\hbar}{2} \sum_{r \in \mathbb{Z} + \frac{1}{2}} \altcolon b_r^\sigma b_{i+N-1-r}^\sigma \altcolon + \frac{\hbar}{16} \delta_{i,1-N}. 
\end{equation}
Those have commutation relations:
\begin{equation}
[H_i^2, H_j^2] = \hbar(i-j) H^2_{i+j+N-1}.
\end{equation}
To add linear terms, we consider the shift $b_{N-\frac{1}{2}} \mapsto b_{N-\frac{1}{2}} + \frac{1}{\sqrt{\hbar}}$, which creates linear terms $H_i^1 = \sqrt{\hbar} b_{i-\frac{1}{2}}^\sigma = \hbar \partial_i$. It also creates a constant term $H_N^0 = \frac{1}{2}$, but as usual we get rid of it without changing the algebra since $H_N$ does not appear on the right-hand-side of the commutation relations. We also add $D$-terms $\hbar D_i$ for $i=1,\ldots, N+1$ to the operators $H_i$, $i=1,\ldots,N+1$, since they do not appear on the right-hand-side of the commutation relations. The resulting quantum Airy structures are \eqref{eq:twistedH1}, as representations of the algebra \eqref{eq:twistedC1}.

\end{proof}

These quantum Airy structures are known to produce interesting enumerative invariants:
\begin{itemize}
\item
For $N=-1$, the quantum Airy structure takes the explicit form
\begin{gather}
\forall i \in \mathbb{Z}_{\geq 1}, \qquad H_i = \hbar \partial_i+ \frac{\hbar^2}{2} \sum_{j=1}^{i-2} \partial_{j}  \partial_{i-1-j} + \frac{\hbar}{2} \sum_{j = i-1}^\infty (2j-2i+3) x^{j-i+2}  \partial_{j} \nonumber\\
+ \frac{1}{8} (x^1)^2 \delta_{i,1}+ \frac{\hbar}{16} \delta_{i,2},
\end{gather}
where the second sum is understood to vanish for terms with $j < 1$. This the quantum Airy structure associated to the Eynard-Orantin topological recursion on the Airy spectral curve (up to trivial rescaling of the variables $x^i$) \cite{EO,EO2}. The constraints $H_i Z = 0$ reproduce the well-known Virasoro constraints for the Kontsevich-Witten tau-function of the KdV hierarchy, and $Z$ is a generating function for intersection numbers on the moduli space of curves \cite{Witten,Kontsevich,DVV}.
\item
For $N=0$, the quantum Airy structure takes the form:
\begin{gather}\label{eq:bgw}
\forall i \in \mathbb{Z}_{\geq 1}, \qquad H_i = \hbar \partial_i+ \frac{\hbar^2}{2} \sum_{j=1}^{i-1} \partial_{j}  \partial_{i-j} + \frac{\hbar}{2} \sum_{j = i}^\infty (2j-2i+1) x^{j-i+1}  \partial_{j} \nonumber\\
+ \frac{\hbar}{16} \delta_{i,1} + \hbar D_1 \delta_{i,1}.
\end{gather}
For $D_1=0$, this is the quantum Airy structure associated to the Eynard-Orantin topological recursion on the Bessel spectral curve (up to rescaling of variables $x^i$) \cite{Bessel}. In this case, the constraints $H_i Z =0 $ reproduce the Virasoro constraints for the Br\'{e}zin-Gross-Witten tau-function of the KdV hierarchy \cite{BGW,BGROSS,Gross:1980he,Mironov:1994mv}. $Z$ is now a generating function for intersection numbers on the moduli space of curves involving Norbury's cohomology class \cite{Norbclass, DYZ}.
\item
For arbitrary $N \geq 1$, the quantum Airy structure reads ($\forall i \in \mathbb{Z}_{\geq 1}$):
\begin{equation}
H_i = \hbar \partial_i+ \frac{\hbar^2}{2} \sum_{j=1}^{i+N-1} \partial_{j}  \partial_{i+N-j} + \frac{\hbar}{2} \sum_{j = i+N}^\infty (2j-2i-2N+1) x^{j-i-N+1}  \partial_{j}+ \hbar D_i \delta_{i \leq N+1}.
\end{equation}
It is at the moment unknown whether the partition function that it computes has an interesting enumerative interpretation. Remark that for $N \geq 1$, the partition function is non-trivial only if the $D_i$ do not all vanish. In this case, is it a generating function for some intersection numbers on the moduli space curves? Is $Z$ a tau-function for the KdV hierarchy? These questions certainly deserve further investigation.
\end{itemize}

\subsection{Super Quantum Airy Structures from the Free Boson-Fermion VOSA}

Along the same lines as the bosonic construction of the previous section, we now construct classes of examples of infinite-dimensional, quadratic, super quantum Airy structures as representations of subalgebras of the super Virasoro algebra with central charge $c=\frac{3}{2}$:
\begin{align}
[L_m,L_n]&=(m-n)L_{n+m}+\delta_{m,-n}\frac{1}{8}m(m^2-1),\nonumber\\
[L_n,G_r]&=\left(\frac{n}{2}-r\right)G_{n+r},\label{eq:superVir}\\
\{G_r,G_s\}&=2L_{r+s}+\delta_{r,-s}\frac{1}{2}\left(r^2-\frac{1}{4}\right).\nonumber
\end{align}
We will construct super quantum Airy structures as representations of subalgebras of the super Virasoro algebra in both the Neveu-Schwarz (NS) sector (where $r,s$ are half-integers) and the Ramond sector (where $r,s$ are integers).

Our main tool is the free boson-fermion vertex operator super algebra (VOSA), which has $N=1$ supersymmetry. We will construct our super quantum Airy structures from untwisted and twisted representations of the free boson-fermion VOSA. So let us first review the main features of this theory. We refer the reader to \cite{VOSA1,VOSA2,VOSA3} for more details.

\begin{remark}
In this section we construct classes of super quantum Airy structures as representations of subalgebra of the super Virasoro algebra in the NS and Ramond sectors. A natural question then is whether the associated partition functions $Z$ compute interesting enumerative invariants. This is unclear at the moment, and certainly deserves further investigation. For instance, it would be very interesting to see whether these super quantum Airy structures are related to the supersymmetric generalization of JT gravity and Mirzakhani's recursion presented in \cite{Stanford:2019vob}.

We also remark that we only consider the free boson-fermion VOSA here, which has $N=1$ supersymmetry. But it would interesting to investigate whether super quantum Airy structures can be constructed as representations of VOSAs with $N=2$ supersymmetry as well.
\end{remark}

\subsubsection{The Free Boson-Fermion VOSA}

The free boson VOA was introduced in Section \ref{s:freeboson}. Let us now introduce the free fermion VOSA. It is again generated by a single vector $\psi_{-\frac{1}{2}}\ket{0} \in V_f$, where $V_f$ is the space of states. Here $\ket{0} \in V_f$ is the vacuum vector. The state-operator correspondence reads:
\begin{equation}
Y(\psi_{-\frac{1}{2}}\ket{0},z)=\sum_{m \in \mathbb{Z}} \psi_{m + \frac{1}{2}} z^{-m-1} = \sum_{r\in\mathbb{Z}+\frac{1}{2}}\psi_r z^{-r-\frac{1}{2}},\label{Y fermion}
\end{equation}
where the modes $\psi_r$ generate the Clifford algebra $\{\psi_r,\psi_s\}=\delta_{r,-s}$.

The vacuum vector $\ket{0}$ is annihilated by all  $\psi_{k}$ with $k > 0$, and the space of states $V_f$ is the Fock space of all excited states 
\begin{equation}
\psi_{-k_1}\cdots\psi_{-k_n}\ket{0}, \qquad k_1, \ldots, k_n \in \mathbb{Z}_{\geq 0} + \frac{1}{2}.
\end{equation}
Define a normal ordering on the fermionic modes as:
\begin{equation}
\altcolon\psi_r\psi_s\altcolon \; = \begin{cases} \psi_r\psi_s & \text{for $ r\leq s$,} \\ - \psi_s \psi_r & \text{for $r > s$.} \end{cases}
\end{equation}
Then the operators corresponding to the states in the Fock space are:
\begin{gather}\label{eq:fermionop}
Y(\psi_{-k_1}\cdots \psi_{-k_n}\ket{0},z) =\nonumber\\
\altcolon  \frac{1}{\left(k_1 - \frac{1}{2} \right)!} \left(\frac{d}{dz}\right)^{k_i-\frac{1}{2}}Y(\psi_{-\frac{1}{2}}\ket{0},z) \cdots \frac{1}{\left(k_n - \frac{1}{2} \right)!} \left(\frac{d}{dz}\right)^{k_n-\frac{1}{2}}Y(\psi_{-\frac{1}{2}}\ket{0},z)  \altcolon.
\end{gather}

The conformal vector $\ket{\omega}$ for the free fermion VOSA reads:
\begin{equation}
\ket{\omega}=\frac{1}{2}\psi_{-\frac{3}{2}}\psi_{-\frac{1}{2}}\ket{0}.
\end{equation}
Its operator takes the form
\begin{equation}
Y(\ket{\omega},z) = \sum_{m\in\mathbb{Z}} L_m z^{-m-2},
\end{equation}
with its modes generating the Virasoro algebra with central charge $c=\frac{1}{2}$:
\begin{equation}
[L_m,L_n] = (m-n) L_{m+n} + \frac{1}{24} m(m^2-1) \delta_{m,-n}.
\end{equation}
The Virasoro modes are related to the modes of the fermionic field as follows. From \eqref{eq:fermionop}, we have:
\begin{align}
Y(\ket{\omega},z) =& \frac{1}{2} \altcolon \left(  \frac{d}{dz} Y(\psi_{-\frac{1}{2}} \ket{0}, z) \right) Y(\psi_{-\frac{1}{2}} \ket{0}, z)  \altcolon \nonumber\\
=& - \frac{1}{2} \sum_{r_1, r_2 \in \mathbb{Z} + \frac{1}{2}} \left(r_1 + \frac{1}{2} \right) \altcolon \psi_{r_1} \psi_{r_2}  \altcolon z^{-r_1-r_2-2}.
\end{align}
Thus
\begin{align}\label{eq:Lfermion}
L_m =& - \frac{1}{2} \sum_{r \in \mathbb{Z} + \frac{1}{2}}  \left( r + \frac{1}{2} \right) \altcolon \psi_r \psi_{m-r} \altcolon \nonumber\\
=& \frac{1}{2} \sum_{r \in \mathbb{Z} + \frac{1}{2} } \left(r + \frac{m}{2} \right) \altcolon \psi_{-r} \psi_{r+m} \altcolon,
\end{align}
where the second equality follows from a straightforward calculation.

Now let $V_b$ be a free boson VOA, and $V_f$ be a free fermion VOSA. Let us consider their tensor product $V=V_b\otimes V_f$. More precisely, the vector space is a Fock space of all states excited by bosonic modes $b_{-n}$ and fermionic modes $\psi_{-r}$, where we assume $[b_n,\psi_r]=0$ for any $n\in\mathbb{Z}$ and $r\in\mathbb{Z}+\frac12$. The state-operator correspondence $Y_V$ satisfies $Y_V(\ket{u}\otimes \ket{v},z)=Y_{V_b}(\ket{u},z)\otimes Y_{V_f}(\ket{v},z)$ for $\ket{u}\in V_b$ and $\ket{v}\in V_f$. 

We define the vacuum vector of the combined theory as $\ket{0} = \ket{0}_{V_b}\otimes\ket{0}_{V_f}$. From now on we will omit the tensor product symbols for clarity. The conformal vector $\ket{\omega}$ for $V$ reads:
\begin{equation}
\ket{\omega}=\frac12\left(b_{-1}b_{-1}+ \psi_{-\frac{3}{2}}\psi_{-\frac{1}{2}}\right) \ket{0}.
\end{equation}
Its modes are given by the sum of \eqref{eq:Lboson} and \eqref{eq:Lfermion}:
\begin{equation}\label{eq:virn1no}
L_n = \frac{1}{2}\sum_{k\in\mathbb{Z}}\altcolon b_{k}b_{n-k}\altcolon  + \frac{1}{2}\sum_{r\in\mathbb{Z}+\frac{1}{2}}\left(r+\frac{n}{2}\right)\altcolon\psi_{-r}\psi_{n+r}\altcolon .
\end{equation}
Those generate the Virasoro algebra with central charge $c=3/2$:
\begin{equation}
[L_m,L_n] = (m-n) L_{m+n} + \frac{1}{8} m(m^2-1) \delta_{m,-n} .
\end{equation}

It turns out that this theory has $N=1$ supersymmetry. This means that it has a superconformal vector $\ket{\tau}$, whose modes, together with the Virasoro modes, generate a super Virasoro algebra in the NS sector. For the free boson-fermion VOSA, the superconformal vector is \cite{VOSA3}:
\begin{equation}
\ket{\tau}=b_{-1} \psi_{-\frac12} \ket{0}.
\end{equation}
Its operator reads:
\begin{equation}
Y(\ket{\tau},z) = \sum_{m \in \mathbb{Z}} G_{m-\frac{1}{2}} z^{-m-1} = \sum_{r \in \mathbb{Z} + \frac{1}{2}} G_r z^{-r-\frac{3}{2}},
\end{equation}
with the modes $G_r$ given by
\begin{equation}
G_r=\sum_{m\in\mathbb{Z}}  \psi_{r-m} b_m .\label{G free}
\end{equation}
These $G_r$, together with the $L_m$ of \eqref{eq:virn1no}, generate an $N=1$ super Virasoro algebra in the NS sector with central charge $3/2$:
\begin{align}
[L_m,L_n]&=(m-n)L_{n+m}+\delta_{m,-n}\frac{1}{8}m(m^2-1),\nonumber\\
[L_n,G_r]&=\left(\frac{n}{2}-r\right)G_{n+r},\label{eq:superVir}\\
\{G_r,G_s\}&=2L_{r+s}+\delta_{r,-s}\frac{1}{2}\left(r^2-\frac{1}{4}\right).\nonumber
\end{align}
We call this theory the \emph{free boson-fermion VOSA.}


\subsubsection{Untwisted and Twisted Representations for the Free Boson-Fermion  VOSA}

Our goal is now to construct super quantum Airy structures as representations of subalgebras of the super Virasoro algebra (in both NS and Ramond sectors). To this end, we will construct untwisted and twisted representations for the free boson-fermion VOSA. 

We will construct four different representations. The first one is the untwisted one, which is obtained directly from the natural representation of the super Heisenberg algebra. The super Virasoro modes take the form \eqref{eq:virn1no} and \eqref{G free} in terms of the representation of the bosonic and fermionic modes.

For the three twisted ones, we will use three distinct $\mathbb{Z}_2$ automorphisms of the VOSA:
\begin{enumerate}
\item The $\sigma$-twisted representation is obtained using the $\mathbb{Z}_2$ automorphism $\sigma: V \to V$ that we already studied in Lemma \ref{lem:sigma}, extended trivially to the fermionic sector. It acts on the Fock space of the free boson-fermion VOSA as follows:
\begin{equation}
\sigma:~ b_{-k_1} \cdots b_{-k_m} \psi_{-r_1} \cdots \psi_{-r_n} \ket{0} \mapsto (-1)^{\sum_{i=1}^m k_i} b_{-k_1} \cdots b_{-k_m} \psi_{-r_1} \cdots \psi_{-r_n} \ket{0}.
\end{equation}
In other words, it acts on the bosons as before, but leaves the fermions invariant. It preserves the vacuum vector $\ket{0}$ and the conformal vector $\ket{\omega}$. However, it twists the boson $b_{-1} \ket{0}$, and also the superconformal vector $ \ket{\tau} = b_{-1} \psi_{-\frac{1}{2} } \ket{0}$. Therefore, the $\sigma$-twisted representation will product a representation of the super Virasoro algebra in the Ramond sector.
\item The $\mu$-twisted representation is obtained using the parity $\mathbb{Z}_2$ automorphism $\mu: V \to V$, which sends odd vectors to minus themselves and keeps even vectors invariant. On the Fock space, it acts as:
\begin{equation}
\mu:~ b_{-k_1} \cdots b_{-k_m} \psi_{-r_1} \cdots \psi_{-r_n} \ket{0} \mapsto (-1)^{n} b_{-k_1} \cdots b_{-k_m} \psi_{-r_1} \cdots \psi_{-r_n} \ket{0}.
\end{equation}
It preserves the vacuum vector $\ket{0}$ and the conformal vector $\ket{\omega}$, but it twists the fermion $\psi_{- \frac{1}{2}} \ket{0}$ and the superconformal vector $\ket{\tau}= b_{-1} \psi_{-\frac{1}{2} } \ket{0}$. We will then again obtain a representation of the super Virasoro algebra in the Ramond sector.
\item Our last twisted representation is the $\rho$-twisted representation, with $\rho = \sigma \circ \mu : V \to V$, where we combine both automorphisms. The combined $\mathbb{Z}_2$ automorphism acts on the Fock space as:
 \begin{equation}
\sigma \circ \mu:~ b_{-k_1} \cdots b_{-k_m} \psi_{-r_1} \cdots \psi_{-r_n} \ket{0} \mapsto (-1)^{n +\sum_{i=1}^m k_i} b_{-k_1} \cdots b_{-k_m} \psi_{-r_1} \cdots \psi_{-r_n} \ket{0}.
\end{equation}
As usual, it keeps the vacuum and the conformal vectors invariant. While it twists both the boson vector $b_{-1} \ket{0}$ and the fermion vector $\psi_{-\frac{1}{2}} \ket{0}$, it keeps the superconformal vector $\ket{\tau}= b_{-1} \psi_{-\frac{1}{2} } \ket{0}$ invariant. Thus we will get a representation of the super Virasoro algebra in the NS sector.
\end{enumerate}

Let us now calculate the super Virasoro modes for all three $\mathbb{Z}_2$-twisted representations.
\begin{lemma}\label{lem:twistedreps}
\quad
\begin{itemize}
\item For the $\sigma$-twisted representation, 
the super Virasoro modes (in the Ramond sector) take the form
\begin{subequations}\label{eq:SVsigma}
\begin{align}
L_m^\sigma =& \frac{1}{2} \sum_{r \in \mathbb{Z} + \frac{1}{2}} \altcolon b_r^\sigma b_{m-r}^\sigma \altcolon + \frac{1}{2}\sum_{r\in\mathbb{Z}+\frac{1}{2}}\left(r+\frac{m}{2}\right)\altcolon\psi^\sigma_{-r}\psi^\sigma_{m+r}\altcolon+ \frac{1}{16} \delta_{m,0} , \\
G_m^\sigma =& \sum_{r \in \mathbb{Z} + \frac{1}{2}} b_{m-r}^\sigma \psi_r^\sigma,
\end{align}
\end{subequations}
with $m \in \mathbb{Z}$, in terms of the twisted bosonic modes and untwisted fermionic modes:
\begin{equation}
Y^\sigma(b_{-1} \ket{0},z) = \sum_{r \in \mathbb{Z} + \frac{1}{2} } b^\sigma_r z^{-r-1}, \qquad Y^\sigma(\psi_{-\frac{1}{2}}\ket{0},z)=\sum_{r\in\mathbb{Z}+\frac{1}{2}}\psi^\sigma_r z^{-r-\frac{1}{2}}.
\end{equation}
\item For the $\mu$-twisted representation, the super Virasoro modes (in the Ramond sector) take the form
\begin{subequations}\label{eq:SVmu}
\begin{align}
L_m^\mu =& \frac{1}{2} \sum_{i \in \mathbb{Z} } \altcolon b_i^\mu b_{m-i}^\mu \altcolon + \frac{1}{2}\sum_{i \in\mathbb{Z}}\left(i+\frac{m}{2}\right)\altcolon\psi^\mu_{-i}\psi^\mu_{m+i}\altcolon+ \frac{1}{16} \delta_{m,0} , \\
G_m^\mu =& \sum_{i \in \mathbb{Z}} b_{m-i}^\mu \psi_i^\mu,
\end{align}
\end{subequations}
with $m \in \mathbb{Z}$, in terms of the untwisted bosonic modes and twisted fermionic modes:
\begin{equation}
Y^\mu(b_{-1} \ket{0},z) = \sum_{m \in \mathbb{Z} } b^\mu_m z^{-m-1}, \qquad Y^\mu(\psi_{-\frac{1}{2}}\ket{0},z)=\sum_{n \in\mathbb{Z}}\psi_n^\mu z^{-n-\frac{1}{2}}.
\end{equation}
\item For the $\rho=\sigma \circ \mu$-twisted representation, the super Virasoro modes (in the NS sector) take the form:
\begin{subequations}\label{eq:SVrho}
\begin{align}
L_m^\rho =& \frac{1}{2} \sum_{r \in \mathbb{Z} +\frac{1}{2} } \altcolon b_r^\rho b_{m-r}^\rho \altcolon + \frac{1}{2}\sum_{i \in\mathbb{Z}}\left(i+\frac{m}{2}\right)\altcolon\psi^\rho_{-i}\psi^\rho_{m+i}\altcolon+ \frac{1}{8} \delta_{m,0} , \\
G_r^\rho =& \sum_{s \in \mathbb{Z} + \frac{1}{2}} b_{s}^\rho \psi_{r-s}^\rho,
\end{align}
\end{subequations}
with $m \in \mathbb{Z}$ and $r \in \mathbb{Z} + \frac{1}{2}$, in terms of the twisted bosonic modes and twisted fermionic modes:
\begin{equation}
Y^\rho(b_{-1} \ket{0},z) = \sum_{r \in \mathbb{Z} + \frac{1}{2} } b^\rho_r z^{-r-1}, \qquad Y^\rho(\psi_{-\frac{1}{2}}\ket{0},z)=\sum_{n \in\mathbb{Z}}\psi_n^\rho z^{-n-\frac{1}{2}}.
\end{equation}
\end{itemize}
\end{lemma}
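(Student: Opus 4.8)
The plan is to exploit the fact that the boson and fermion sectors of the free boson-fermion VOSA decouple. Since $[b_n,\psi_r]=0$, the conformal vector splits as a sum $\ket{\omega}=\frac12 b_{-1}b_{-1}\ket{0}+\frac12\psi_{-\frac32}\psi_{-\frac12}\ket{0}$ of a purely bosonic and a purely fermionic conformal vector, while the superconformal vector $\ket{\tau}=b_{-1}\psi_{-\frac12}\ket{0}$ is a product of one bosonic and one fermionic field between which there is no possible contraction. Consequently, for each of the three automorphisms the twisted field $Y^{\bullet}(\ket{\omega},z)$ is the sum of the twisted bosonic and twisted fermionic energy-momentum fields, and $Y^{\bullet}(\ket{\tau},z)=b^{\bullet}(z)\psi^{\bullet}(z)$ carries no anomalous term. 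Thus $L_m^{\bullet}$ is additive in the two sectors and $G^{\bullet}$ is obtained simply by extracting the modes of the product $b^{\bullet}(z)\psi^{\bullet}(z)$. I would therefore compute the bosonic and fermionic contributions once and for all, and then assemble the three cases.

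For the Virasoro modes I would assemble four building blocks. The untwisted bosonic contribution is \eqref{eq:Lboson} and the untwisted fermionic contribution is \eqref{eq:Lfermion}. The $\sigma$-twisted (half-integer moded) bosonic contribution was already computed in Lemma \ref{lem:sigma}, yielding the constant shift $\frac{1}{16}\delta_{m,0}$; since $\rho$ twists the boson in the same way, it contributes identically. The one genuinely new input is the twisted (integer, i.e.\ Ramond, moded) fermionic contribution entering $\mu$ and $\rho$, which I would obtain by repeating the product-formula argument of Lemma \ref{lem:sigma}, now for the fermionic field: applying the analogue of \eqref{eq:prodform} to $\frac12\psi_{-\frac32}\psi_{-\frac12}\ket{0}$, computing the difference between $\psi^{\mu}(z_1)\psi^{\mu}(z_2)$ and its normal-ordered counterpart, and taking the coincidence limit. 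The resulting anomaly produces again a constant shift $\frac{1}{16}\delta_{m,0}$, the Ramond ground-state energy of a single Majorana fermion. Combining: $\sigma$ gives twisted boson plus untwisted fermion, hence \eqref{eq:SVsigma} with a single $\frac{1}{16}\delta_{m,0}$; $\mu$ gives untwisted boson plus twisted fermion, hence \eqref{eq:SVmu} again with a single $\frac{1}{16}\delta_{m,0}$; and $\rho$ twists both, so the two shifts add to $\frac18\delta_{m,0}$, reproducing \eqref{eq:SVrho}.

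For the supercurrent modes I would simply read off the modes of $b^{\bullet}(z)\psi^{\bullet}(z)$. Writing $b^{\bullet}(z)=\sum b_\alpha^{\bullet}z^{-\alpha-1}$ and $\psi^{\bullet}(z)=\sum\psi_\beta^{\bullet}z^{-\beta-\frac12}$, the product expands as $\sum_{\alpha,\beta}b_\alpha^{\bullet}\psi_\beta^{\bullet}z^{-(\alpha+\beta)-\frac32}$, so that $G^{\bullet}$ is indexed by $\alpha+\beta$. The parity of this index set is then forced by the mode structure: for $\sigma$ (half-integer bosons, half-integer fermions) and $\mu$ (integer bosons, integer fermions) the index $\alpha+\beta$ is an integer, giving the Ramond sector, whereas for $\rho$ (half-integer bosons, integer fermions) it is a half-integer, giving the NS sector. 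This yields the stated expressions for $G_m^{\sigma}$, $G_m^{\mu}$ and $G_r^{\rho}$ directly.

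The main obstacle is precisely the twisted fermionic normal-ordering computation: everything else is either recalled from earlier results or follows formally from the decoupling of the two sectors, but the constant $\frac{1}{16}\delta_{m,0}$ in the Ramond fermion sector must be extracted carefully from the coincidence limit of the regularized two-point function, exactly as in the bosonic case of Lemma \ref{lem:sigma}. I would double-check the bookkeeping by verifying that the central charges assemble to $c=\frac32$ in each sector, and, as a sanity check on the supercurrent formulas, that $\{G^{\bullet},G^{\bullet}\}$ closes onto the $L^{\bullet}$ computed above with the correct anomaly, consistent with \eqref{eq:superVir}.
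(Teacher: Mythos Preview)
Your proposal is correct and follows essentially the same approach as the paper: decouple the bosonic and fermionic sectors, reuse the twisted-boson computation from Lemma~\ref{lem:sigma}, perform the analogous product-formula computation for the twisted fermion (which the paper omits ``for brevity''), and then assemble the three cases by combining the appropriate sectors, with the supercurrent modes read off directly from the product $b^{\bullet}(z)\psi^{\bullet}(z)$. Your observation that the two $\tfrac{1}{16}$ anomalies add to give the $\tfrac{1}{8}$ in the $\rho$-twisted case is exactly how the paper arrives at that coefficient as well.
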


\begin{proof}
For the $\sigma$-twisted representation, the boson is twisted, while the fermion is not:
\begin{equation}
Y^\sigma(b_{-1} \ket{0},z) = \sum_{r \in \mathbb{Z} + \frac{1}{2} } b^\sigma_r z^{-r-1}, \qquad Y^\sigma(\psi_{-\frac{1}{2}}\ket{0},z)=\sum_{r\in\mathbb{Z}+\frac{1}{2}}\psi^\sigma_r z^{-r-\frac{1}{2}}.
\end{equation}
The calculation of the conformal field $Y^\sigma(\ket{\omega},z)$ is the same as in Lemma \ref{lem:sigma}, since only the boson is twisted. The superconformal field $Y^\sigma(\ket{\tau},z)$ is twisted, and we get:
\begin{align}
Y^\sigma(\ket{\tau},z) =& \sum_{m \in \mathbb{Z}} G^\sigma_m z^{-m - \frac{3}{2}} \nonumber \\
=& Y^\sigma(b_{-1} \ket{0}, z) Y^\sigma(\psi_{-\frac{1}{2}, z}) \nonumber\\
=& \sum_{r,s \in \mathbb{Z} + \frac{1}{2}} b_r^\sigma \psi_s^\sigma z^{-r-s-\frac{3}{2}}.
\end{align}

For the $\mu$-twisted representation, the boson is untwisted, while the fermion is:
\begin{equation}
Y^\mu(b_{-1} \ket{0},z) = \sum_{m \in \mathbb{Z} } b^\mu_m z^{-m-1}, \qquad Y^\mu(\psi_{-\frac{1}{2}}\ket{0},z)=\sum_{n \in\mathbb{Z}}\psi_n^\mu z^{-n-\frac{1}{2}}.
\end{equation}
The bosonic part of the conformal field is untwisted. For the fermionic part, the calculation follows along the same lines as in Lemma \ref{lem:sigma}. We will omit it for brevity.
The superconformal field $Y^\mu(\ket{\tau},z)$ is twisted, and we get:
\begin{align}
Y^\mu(\ket{\tau},z) =& \sum_{m \in \mathbb{Z}} G_m^\mu z^{-m - \frac{3}{2}} \nonumber \\
=& Y^\mu(b_{-1} \ket{0}, z) Y^\mu(\psi_{-\frac{1}{2}, z}) \nonumber\\
=& \sum_{i,j \in \mathbb{Z}} b_i^\mu \psi_j^\mu z^{-i-j-\frac{3}{2}}.
\end{align}

For the $\rho=\sigma \circ \mu$-twisted representation, both boson and fermion are twisted:
\begin{equation}
Y^\rho(b_{-1} \ket{0},z) = \sum_{r \in \mathbb{Z} + \frac{1}{2} } b^\rho_r z^{-r-1}, \qquad Y^\rho(\psi_{-\frac{1}{2}}\ket{0},z)=\sum_{n \in\mathbb{Z}}\psi^\rho_n z^{-n-\frac{1}{2}}.
\end{equation}
The conformal field is calculated by combining the calculations for the $\sigma$-twisted and $\mu$-twisted representation. As for the superconformal field, it is untwisted, and we get:
\begin{align}
Y^\rho(\ket{\tau},z) =& \sum_{r \in \mathbb{Z} + \frac{1}{2}} G^\rho_r z^{-r - \frac{3}{2}} \nonumber \\
=& Y^\rho(b_{-1} \ket{0}, z) Y^\rho(\psi_{-\frac{1}{2}, z}) \nonumber\\
=& \sum_{s \in \mathbb{Z}+\frac{1}{2}} \sum_{m \in \mathbb{Z}} b_s^\rho \psi_m^\rho z^{-m-s-\frac{3}{2}}.
\end{align}
\end{proof}

Let us now construct classes of super quantum Airy structures using these four representations.

\subsubsection{Super Quantum Airy Structures from Untwisted Representations of the Free Boson-Fermion VOSA}

We first construct super quantum Airy structures from the untwisted representation of the free boson-fermion VOSA. 

\begin{proposition}\label{prop:SVuntwisted}
We represent the super Heisenberg algebra of untwisted bosonic and fermionic modes as:
\begin{align}
\forall m \in \mathbb{Z}_{\geq 1},& \qquad b_m^M = \sqrt{\hbar} \frac{\partial}{\partial x^{m}}, \qquad b_{-m}^M = \frac{m}{\sqrt{\hbar}} x^{m}, \qquad b_0^M = \sqrt{\hbar} \frac{\partial}{\partial x^0}, \nonumber\\
\forall r \in \mathbb{Z}_{\geq 0} + \frac{1}{2},& \qquad \psi_r^M = \sqrt{\hbar} \frac{\partial}{\partial \theta^{r + \frac{1}{2}}}, \qquad \psi_{-r}^M = \frac{1}{\sqrt{\hbar}} \theta^{r + \frac{1}{2}}.
\end{align}
Let $\{x^0,x^1,  \ldots \}$ (even) and $\{\theta^1, \theta^2, \ldots\}$ (odd) be a basis for the super vector space $V$, with dual sets $\{y_0,y_1,  \ldots \}$ and $\{\eta_1, \eta_2, \ldots \}$. Define the operators $H_i, F_r \in  \widehat{\mathcal{W}}_{\hbar}(V)$ for $i \in \mathbb{Z}_{i \geq 0}$ and $r \in \mathbb{Z}_{\geq 0} + \frac{1}{2}$:
\begin{subequations}
\begin{align}
H_i =& \hbar \frac{\partial}{\partial x^i} +\frac{\hbar}{2}\sum_{k\in\mathbb{Z}}\altcolon b_{k}^M b_{i+N-1-k}^M\altcolon  + \frac{\hbar}{2}\sum_{r\in\mathbb{Z}+\frac{1}{2}}\left(r+\frac{i+N-1}{2}\right)\altcolon\psi_{-r}^M \psi_{i+N-1+r}^M \altcolon , \\
F_r =& \hbar \frac{\partial}{\partial \theta^{r + \frac{1}{2}}}+ \hbar \sum_{m\in\mathbb{Z}}  \psi_{r+N-1-m}^M b_m^M,
\end{align}
\end{subequations}
which generate the following subalgebra of the super Virasoro algebra in the NS sector:
\begin{align}\label{eq:untwistC1}
[H_m, H_n] =& (m-n)H_{m+n+N-1}, \nonumber\\
[H_m, F_r] =& \left( \frac{m-N+1}{2} - r  \right) F_{m+r+N-1}, \nonumber\\
\{ F_r, F_s \} =& 2 H_{r+s+N-1}.
\end{align}
\begin{enumerate}
\item Let $N$ be any integer $N \geq 0$. The linear operator $H: V^* \to \widehat{\mathcal{W}}_{\hbar}(V)$ defined by, for all $i \in \mathbb{Z}_{\geq 0}$ and $r \in \mathbb{Z}_{\geq 0} + \frac{1}{2}$,
\begin{equation}
H(y_i) = H_i +  \hbar D_i \delta_{i \leq N-1}, \qquad H(\eta_{r+\frac{1}{2}}) = F_r,
\end{equation}
for arbitrary constants $D_i$, $i=0,\ldots,N-1$, forms a super quantum Airy structure as a representation of the algebra \eqref{eq:untwistC1}.
\item Let $N$ be any integer $N \geq 1$. The linear operator $H: V^* \to \widehat{\mathcal{W}}_{\hbar}(V)$ defined by, for all $i \in \mathbb{Z}_{\geq 1}$ and $r \in \mathbb{Z}_{\geq 0} + \frac{1}{2}$,
\begin{align}
H(y_i) =& H_i + \hbar D_i \delta_{i \leq N-1}, \qquad H(\eta_{r+\frac{1}{2}}) = F_r, \nonumber\\
H(y_0) =& \hbar \frac{\partial}{\partial x^0} + \frac{\hbar^2}{2} C_0 \frac{\partial^2}{\partial (x^0)^2} + \hbar D_0,
\end{align}
for arbitrary constants $D_i$, $i=0,\ldots,N-1$ and $C_0$, forms a super quantum Airy structure as a representation of the algebra \eqref{eq:untwistC1} extended by
\begin{equation}
[H(y_0), H(y_i)] = [H(y_0), H(\eta_{r + \frac{1}{2}} )]= 0 .
\end{equation}
\item Let $N$ be any integer $N \geq -1$. Let $V_{red} \subset V$ be the subspace spanned by $\{x^0, x^1, \ldots \}$ and $\{\theta^2, \theta^3, \ldots \}$. The linear operator $H: V^*_{red} \to \widehat{\mathcal{W}}_{\hbar}(V)$ defined by, for all $i \in \mathbb{Z}_{\geq 1}$ and $r \in \mathbb{Z}_{\geq 1} + \frac{1}{2}$,
\begin{align}
H(y_i) =& H_i + \hbar D_i \delta_{i \leq N+1}, \qquad H(\eta_{r+\frac{1}{2}}) = F_r, \nonumber\\
H(y_0) =& \hbar \frac{\partial}{\partial x^0} + \frac{\hbar^2}{2} C_0 \frac{\partial^2}{\partial (x^0)^2} + \hbar D_0,
\end{align}
for arbitrary constants $D_i$, $i=0,\ldots,N+1$, and $C_0$, form a super quantum Airy structure as a representation of the algebra \eqref{eq:untwistC1} extended by
\begin{equation}
[H(y_0), H(y_i)] = [H(y_0), H(\eta_{r + \frac{1}{2}} )]= 0 .
\end{equation}
This is a super quantum Airy structure with an extra fermionic variable, $\theta^1$.
\end{enumerate}
We remark that we could have considered as separate cases setting the bosonic zero mode $b_0^M=0$. But since the result will be equivalent to the cases with an auxiliary operator $\mathcal{H}_0$ with $C_0 = D_0 = 0$, we did not consider it separately.
\end{proposition}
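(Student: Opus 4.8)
\emph{Proof strategy.} The plan is to follow the proof of Proposition \ref{prop:untwisted} almost verbatim, the only genuinely new ingredients being the fermionic generators and the passage from the plain Virasoro to the super Virasoro relations. I would begin by substituting the given representation of the super Heisenberg algebra into the mode expressions \eqref{eq:virn1no} and \eqref{G free}. With the grading \eqref{degree}, every bosonic mode $b^M_m$ and every fermionic mode $\psi^M_r$ has degree $0$, so $\hbar L^M_n$ and $\hbar G^M_r$ are homogeneous of degree $2$; these are taken as the purely quadratic hamiltonians $H^2_i := \hbar L^M_{i+N-1}$ and $F^2_r := \hbar G^M_{r+N-1}$. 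The first task is then to record that, after rescaling the algebra by $\hbar$ and shifting the mode indices, the relations \eqref{eq:superVir} reproduce exactly \eqref{eq:untwistC1}, with all structure constants in $\mathbb{K}$; hence the structure is subalgebraic and condition (1) of Definition \ref{d:SQAS} holds automatically.

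The heart of the argument is that a single shift of one bosonic mode, $b^M_{N-1}\mapsto b^M_{N-1}+\tfrac{1}{\sqrt{\hbar}}$, simultaneously produces the required linear terms in both the bosonic and the fermionic generators. In $H^2_i$ the two cross terms containing $b_{N-1}$ combine to give $\sqrt{\hbar}\,b^M_i=\hbar\partial_{x^i}$, exactly as in the bosonic case; in $F^2_r=\hbar\sum_m \psi^M_{r+N-1-m}b^M_m$ the summand $m=N-1$ yields $\sqrt{\hbar}\,\psi^M_r=\hbar\partial_{\theta^{r+\frac12}}$, since $\psi^M_r$ with $r\geq\tfrac12$ is a derivative. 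This establishes condition (2) of Definition \ref{d:SQAS} for both types of generators. The shift also produces one unwanted constant $\tfrac12$ inside a single hamiltonian; because the corresponding generator never appears on the right-hand side of \eqref{eq:untwistC1}, it may be discarded without changing the algebra, and the allowed shifts $\hbar D_i$ may be added precisely to those generators absent from the right-hand side, which fixes the index ranges of the $D_i$ in the three cases.

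The step I expect to require the most care is verifying that the central extensions of the super Virasoro algebra vanish on the chosen subalgebra, as this is what forces the index restrictions ($N\geq -1$, resp. $N\geq 0$, $N\geq 1$) and, in the third case, the exclusion of $F_{\frac12}$. Concretely, the bosonic central term $\tfrac18 m(m^2-1)\delta_{m,-n}$ can contribute only when $m=-n$ with $m$ in the range of reachable mode indices, and one checks that $m(m^2-1)=0$ for every such $m$ precisely when $N\geq -1$; an analogous computation for the fermionic central term $\tfrac12(r^2-\tfrac14)\delta_{r,-s}$ shows that at $N=-1$ it vanishes exactly when the half-integer mode $r=\tfrac12$, equivalently the variable $\theta^1$, is removed from the set of generators. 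This removal is what turns $\theta^1$ into the extra fermionic variable of the structure in the third case, in accordance with Definition \ref{d:SQAS}. Finally, for the second and third cases I would introduce the auxiliary operator $\mathcal H_0=\hbar\partial_{x^0}+\tfrac{\hbar^2}{2}C_0\partial_{x^0}^2+\hbar D_0$, which commutes with all $H_i$ and $F_r$ (these depend on $x^0$ only through $\partial_{x^0}$ via $b^M_0$), so the enlarged system stays subalgebraic and still satisfies condition (2); continuity of $H$ and finiteness of all mode sums follow as in the bosonic case.
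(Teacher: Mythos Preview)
Your approach is essentially that of the paper: pass to the quadratic hamiltonians $H_i^2=\hbar L^M_{i+N-1}$ and $F_r^2=\hbar G^M_{r+N-1}$, shift $b^M_{N-1}\mapsto b^M_{N-1}+\hbar^{-1/2}$ to produce the required linear parts simultaneously in both families, drop the spurious constant, insert the admissible $\hbar D_i$, and adjoin $\mathcal H_0$ in cases (2) and (3).

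One refinement is worth making: the lower bounds on $N$ and the exclusion of $F_{1/2}$ in case (3) are not dictated solely by central-term vanishing but at least equally by \emph{closedness} of the chosen subalgebra under $\{G_r,G_s\}=2L_{r+s}$. For instance, in case (2) with $N=0$ all central terms already vanish (since $r^2-\tfrac14=0$ for $r=\pm\tfrac12$), yet $\{G_{-1/2},G_{-1/2}\}=2L_{-1}$ falls outside $\{L_m:m\geq 0\}$, so the restriction $N\geq 1$ comes from closedness rather than the centre. Likewise, in case (3) the exclusion of $F_{1/2}$ for \emph{all} $N\geq -1$ is the design choice that makes $\theta^1$ the extra fermionic variable; it is only \emph{forced} (by either closedness or the fermionic central term) when $N\in\{-1,0\}$. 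The paper accordingly phrases these steps as choosing closed subalgebras with no central term, rather than as a pure central-term computation.
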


\begin{proof}
We start with the untwisted representation, see Lemma \ref{lem:twistedreps}. We represent the Heisenberg algebra for untwisted bosonic and fermionic modes as:
\begin{align}
\forall m \in \mathbb{Z}_{\geq 1},& \qquad b_m^M = \sqrt{\hbar} \frac{\partial}{\partial x^{m}}, \qquad b_{-m}^M = \frac{m}{\sqrt{\hbar}} x^{m}, \qquad b_0^M = \sqrt{\hbar} \frac{\partial}{\partial x^0}, \nonumber\\
\forall r \in \mathbb{Z}_{\geq 0} + \frac{1}{2},& \qquad \psi_r^M = \sqrt{\hbar} \frac{\partial}{\partial \theta^{r + \frac{1}{2}}}, \qquad \psi_{-r}^M = \frac{1}{\sqrt{\hbar}} \theta^{r + \frac{1}{2}}.
\end{align}
The super Virasoro generators (in the NS sector) take the form:
\begin{subequations}
\begin{align}
L_n^M =& \frac{1}{2}\sum_{k\in\mathbb{Z}}\altcolon b_{k}^M b_{n-k}^M\altcolon  + \frac{1}{2}\sum_{r\in\mathbb{Z}+\frac{1}{2}}\left(r+\frac{n}{2}\right)\altcolon\psi_{-r}^M \psi_{n+r}^M \altcolon ,\\
G_r^M=& \sum_{m\in\mathbb{Z}}  \psi_{r-m}^M b_m^M .
\end{align}
\end{subequations}

We consider the closed subalgebras $\{L_m^M, G_r^M \}$ with $m \geq N-1$ and $r \geq N - \frac{1}{2}$ for any $N \geq 0$, which have no central term. We shift indices to index the bosonic generators with $0,1,2,\ldots$ and the fermionic ones with $\frac{1}{2}, \frac{3}{2}, \ldots$. We define the quadratic hamiltonians:
\begin{align}\label{eq:okok}
H_i^2 :=& \hbar L^M_{i+N-1} = \frac{\hbar}{2}\sum_{k\in\mathbb{Z}}\altcolon b_{k}^M b_{i+N-1-k}^M\altcolon  + \frac{\hbar}{2}\sum_{r\in\mathbb{Z}+\frac{1}{2}}\left(r+\frac{i+N-1}{2}\right)\altcolon\psi_{-r}^M \psi_{i+N-1+r}^M \altcolon ,\\
F_r^2 :=& \hbar G^M_{r+N-1} = \hbar \sum_{m\in\mathbb{Z}}  \psi_{r+N-1-m}^M b_m^M ,
\end{align}
which have the commutation relations
\begin{align}\label{eq:right}
[H_m^2, H_n^2] =& (m-n)H^2_{m+n+N-1}, \nonumber\\
[H_m^2, F_r^2] =& \left( \frac{m-N+1}{2} - r  \right) F^2_{m+r+N-1}, \nonumber\\
\{ F^2_r, F^2_s \} =& 2 H^2_{r+s+N-1}.
\end{align}
We shift the bosonic mode $b_{N-1}^M \mapsto b_{N-1}^M + \frac{1}{\sqrt{\hbar}}$ to get linear terms $H_i^1 = \sqrt{\hbar} b_i^M = \hbar \frac{\partial}{\partial x^i}$ and $F_r^1 = \sqrt{\hbar} \psi_r^M = \hbar \frac{\partial}{\partial \theta^{r+\frac{1}{2}}}$. We also get a constant term $H^0_{N-1} = \frac{1}{2}$, which we get rid of without changing the algebra. We add $D$-terms $\hbar D_i$ to the operators $H_i$ with $i=0,\ldots,N-1$. This gives our first class of super quantum Airy structures. 

We may want to consider $\theta^1$ as being an extra fermionic variable. For this we would like to consider the smaller algebra $\{L_m^M, G_r^M \}$ with $m \geq N$ and $r \geq N + \frac{3}{2}$. This is closed only for $N \geq 0$. Then we proceed as in the previous case, and the $D$-terms that we can add are the same. Thus, the operators are precisely the same as in the previous case, minus the operator $F_{\frac{1}{2}}$. By uniqueness of the partition function, it follows that the solution to the constraint is the same as in the previous case, and hence, in particular, it also satisfies the constraint $F_{\frac{1}{2} } Z = 0$. Therefore it is the same super quantum Airy structure.

The next thing that we can try is add an auxiliary operator $\mathcal{H}_0$ as in Proposition \ref{prop:untwisted}. For this, we consider the subalgebra $\{L_m^M, G_r^M \}$ with $m \geq N$ and $r \geq N - \frac{1}{2}$, which is closed for $N \geq 1$. We then shift indices to index the bosonic operators with $1,2,\ldots$ (without the $0$) and the fermionic operators with $\frac{1}{2}, \frac{3}{2}, \ldots$. To do this, we use the same quadratic hamiltonians as \eqref{eq:okok}, with commutation relations \eqref{eq:right}. We shift the bosonic mode $b_{N-1}^M \mapsto b_{N-1}^M + \frac{1}{\sqrt{\hbar}}$ to create linear terms $H_i^1 = \sqrt{\hbar} b_i^M = \hbar \frac{\partial}{\partial x^i}$ and $F_r^1 = \sqrt{\hbar} \psi^M_r = \hbar \frac{\partial}{\partial \theta^{r+\frac{1}{2}}}$. We also get a constant term $H^0_{N-1}= \frac{1}{2}$ which we get rid of as usual. We add $D$-terms $\hbar D_i$ for $i=1,\ldots,N-1$. To get a quantum Airy structure, we need to supplement with the auxiliary operator
\begin{equation}\label{eq:supp}
\mathcal{H}_0 = \hbar \frac{\partial}{\partial x^0} + \frac{\hbar^2}{2} C_0 \frac{\partial^2}{\partial (x^0)^2} + \hbar D_0,
\end{equation}
which commutes with all other operators. This is our second class of super quantum Airy structures.
 
For the third class, we want to keep using $\mathcal{H}_0$, but we would like to think of $\theta^1$ as an extra fermionic variable. For this, we consider the subalgebra $\{L_m^M, G_r^M \}$ with $m \geq N$ and $r \geq N + \frac{1}{2}$ for any $N \geq -1$. We want to shift indices so that bosonic generators are indexed with $1,2,\ldots$ and fermionic ones with $\frac{3}{2}, \frac{5}{2}, \ldots$. The same hamiltonians \eqref{eq:okok} will do the job, with commutation relations \eqref{eq:right}.
To create appropriate linear terms, we shift the bosonic mode $b^M_{N-1} \mapsto b^M_{N-1} + \frac{1}{\sqrt{\hbar}}$ as usual. We add $D$-terms $\hbar D_i$ for $i=1,\ldots,N+1$, and the auxiliary operator \eqref{eq:supp}. This gives our third class of super quantum Airy structures.

We note that we could also consider separately the cases where we set the bosonic zero mode $b_0^M = 0$, but in the end it is equivalent to the special case of the auxiliary operator $\mathcal{H}_0$ with $C_0 = D_0 = 0$, and thus we do not consider it separately.

\end{proof}

\subsubsection{Super Quantum Airy Structures from $\sigma$-Twisted Representations of the Free Boson-Fermion VOSA}

We now consider the $\sigma$-twisted representation of the free boson-fermion VOSA. We construct the following two classes of super quantum Airy structures:

\begin{proposition}\label{prop:SVsigma}
Let $N$ be any integer $N \geq 0$. We represent the super Heisenberg algebra for the twisted bosonic modes and untwisted fermionic modes as:
\begin{align}
\forall r \in \mathbb{Z}_{\geq 0} + \frac{1}{2},& \qquad b_r^\sigma = \sqrt{\hbar} \frac{\partial}{\partial x^{r+\frac{1}{2}}}, \qquad b_{-r}^\sigma = \frac{r}{\sqrt{\hbar}} x^{r + \frac{1}{2}}, \nonumber\\
\forall r \in \mathbb{Z}_{\geq 0} + \frac{1}{2},& \qquad \psi_r^\sigma = \sqrt{\hbar} \frac{\partial}{\partial \theta^{r + \frac{1}{2}}}, \qquad \psi_{-r}^\sigma = \frac{1}{\sqrt{\hbar}} \theta^{r + \frac{1}{2}},
\end{align}
Let $\{x^1, x^2,  \ldots \}$ (even) and $\{\theta^1, \theta^2, \ldots\}$ (odd) be a basis for the super vector space $V$, with dual sets $\{y_1, y_2, \ldots \}$ and $\{\eta_1, \eta_2, \ldots \}$. Define the differential operators $H_i, F_i \in \widehat{\mathcal{W}}_{\hbar}(V)$, for $i \in \mathbb{Z}_{\geq 1}$, 
\begin{subequations}
\begin{align}
H_i  =& \hbar \frac{\partial}{\partial x^i} +  \frac{\hbar}{2} \sum_{r \in \mathbb{Z} + \frac{1}{2}} \altcolon b_r^\sigma b_{i+N-1-r}^\sigma \altcolon + \frac{\hbar}{2}\sum_{r\in\mathbb{Z}+\frac{1}{2}}\left(r+\frac{i+N-1}{2}\right)\altcolon\psi^\sigma_{-r}\psi^\sigma_{i+N-1+r}\altcolon , \\
F_i =&   \hbar \frac{\partial}{\partial \theta^i} + \hbar \sum_{r \in \mathbb{Z} + \frac{1}{2}} b_{i+N-1-r}^\sigma \psi_r^\sigma,
\end{align}
\end{subequations}
which form a representation of the following subalgebra of the super Virasoro algebra in the Ramond sector:
\begin{align}\label{eq:sigmaC1}
[H_m, H_n] =& (m-n)H_{m+n+N-1}, \nonumber\\
[H_m, F_n] =& \left( \frac{m-N+1}{2} - n  \right) F_{m+n+N-1}, \nonumber\\
\{ F_m, F_n \} =& 2 H_{m+n+N-1}.
\end{align}
\begin{enumerate}
\item The linear operator $H: V^* \to \widehat{\mathcal{W}}_{\hbar}(V)$ defined by, for all $i \in \mathbb{Z}_{\geq 1}$,
\begin{equation}
H(y_i) = H_i +  \hbar D_i \delta_{i \leq N}, \qquad H(\eta_i) = F_i,
\end{equation}
for arbitrary constants $D_i$, $i=1,\ldots,N$, forms a super quantum Airy structure as a representation of the algebra \eqref{eq:sigmaC1}.

\item Let $V_{red} \subset V$ be the subspace spanned by $\{x^1, x^2,  \ldots \}$ and $\{\theta^2, \theta^3, \ldots\}$. The linear operator $H: V^*_{red} \to  \widehat{\mathcal{W}}_{\hbar}(V)$ defined by, for all $i \in \mathbb{Z}_{\geq 1}$ and $j \in \mathbb{Z}_{\geq 2}$,
\begin{equation}\label{eq:ttt}
H(y_i) = H_i + \hbar D_i \delta_{i \leq N+1},  \qquad H(\eta_j)  = F_j,
\end{equation}
for arbitrary constants $D_i$, $i=1,\ldots, N+1$, forms a super quantum Airy structure with an extra fermionic coordinate $\theta^1$, as a representation of the algebra \eqref{eq:sigmaC1}.
\end{enumerate}

\end{proposition}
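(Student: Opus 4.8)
The plan is to run, \emph{mutatis mutandis}, the same three-step construction used for the bosonic algebras in Propositions~\ref{prop:untwisted} and~\ref{prop:twisted} and for the untwisted super case in Proposition~\ref{prop:SVuntwisted}, now starting from the $\sigma$-twisted super Virasoro modes~\eqref{eq:SVsigma} of Lemma~\ref{lem:twistedreps}. First I would insert the stated differential representation of the super Heisenberg algebra into~\eqref{eq:SVsigma} and rescale by $\hbar$, so that $\hbar L_m^\sigma$ and $\hbar G_m^\sigma$ become homogeneous of degree~$2$ for the grading~\eqref{degree}; this turns the super Virasoro relations~\eqref{eq:superVir} into a representation by quadratic differential operators (with the bracket rescaled by~$\hbar$). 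Since in the Ramond sector both $L_m^\sigma$ and $G_m^\sigma$ carry integer labels, I would then select the closed subalgebra $\{L_m^\sigma, G_m^\sigma : m \geq N\}$ --- closure for $N \geq 0$ being immediate from~\eqref{eq:superVir} --- and reindex by $H_i^2 := \hbar L_{i+N-1}^\sigma$ and $F_i^2 := \hbar G_{i+N-1}^\sigma$ for $i \geq 1$. A direct mode computation then reproduces the relations~\eqref{eq:sigmaC1}.

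The essential step is to create the linear pieces demanded by the second condition of Definition~\ref{d:SQAS}. Exactly as in the untwisted super case, I would perform the \emph{single} bosonic shift $b_{N-\frac12}^\sigma \mapsto b_{N-\frac12}^\sigma + \frac{1}{\sqrt\hbar}$. The point to check is that this one shift simultaneously produces the correct linear terms in both parities: in $H_i^2$ the two cross terms with the shifted mode combine to give $\sqrt\hbar\, b_{i-\frac12}^\sigma = \hbar\,\partial_{x^i}$, while in $F_i^2$ the term with $r = i-\tfrac12$ gives $\sqrt\hbar\,\psi_{i-\frac12}^\sigma = \hbar\,\partial_{\theta^i}$, so that $H(y_i)^{\leq 1} = \hbar\,\partial_{x^i}$ and $H(\eta_i)^{\leq 1} = \hbar\,\partial_{\theta^i}$. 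The doubly shifted term contributes only a constant $\tfrac12$ to $H_N$; inspecting~\eqref{eq:sigmaC1}, the smallest index of an $H$ appearing on a right-hand side is $N+1$ (the binding relation being $\{F_1,F_1\} = 2H_{N+1}$), so this constant decorates a generator that never reappears and may be dropped without disturbing the algebra. For the same reason arbitrary constants $\hbar D_i$ may be added to $H_i$ precisely for $i \leq N$, which yields the first family; one then verifies directly that the resulting $L$ is even, linear, continuous and subalgebraic --- hence involutive --- so that it is a super quantum Airy structure.

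For the second family the plan is to shrink the subalgebra to $\{L_m^\sigma : m \geq N\} \cup \{G_m^\sigma : m \geq N+1\}$, which still closes for $N \geq 0$ and omits the operator $F_1$, thereby promoting the variable $\theta^1$ to an extra fermionic direction in the sense of the Remark following Definition~\ref{d:SQAS}. The same bosonic shift supplies the linear terms, and removing $F_1$ raises the smallest $H$-index on the right-hand side of~\eqref{eq:sigmaC1} to $N+2$ (since $\{F_1,F_1\}$ is now absent), so that $D$-terms become admissible for $i \leq N+1$, producing~\eqref{eq:ttt}; the existence and uniqueness of the associated free energy on the enlarged space $\widetilde V = V \oplus \mathbb{K}^{0|1}$ is then guaranteed by Theorem~\ref{thm:free_energy}. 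I expect the only genuinely delicate bookkeeping to be the treatment of the central terms: the $\tfrac{1}{16}\delta_{m,0}$ in~\eqref{eq:SVsigma} and the Ramond central extension of $\{G_r,G_s\}$ both contribute at the boundary value $N=0$, and one must confirm that the inherited constants land on generators whose index is harmless (or are cancelled against each other) so that~\eqref{eq:sigmaC1} holds on the nose. For $N \geq 1$ these central contributions vanish in the chosen ranges, and the argument reduces to a routine adaptation of the half-integer mode computation already carried out for the $\sigma$-twisted free boson in Proposition~\ref{prop:twisted}.
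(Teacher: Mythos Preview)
Your proposal is correct and follows essentially the same route as the paper's proof: select the Ramond subalgebra, reindex, perform the single bosonic shift $b^\sigma_{N-\frac12}\mapsto b^\sigma_{N-\frac12}+\frac{1}{\sqrt\hbar}$, discard the resulting constant in $H_N$, and count which $H_i$ fail to appear on the right-hand side to determine the admissible $D$-terms. The one point you flag but leave open --- the central contributions at $N=0$ --- is resolved in the paper exactly by the cancellation you anticipate: for the first class one redefines $L_0^\sigma\mapsto L_0^\sigma-\tfrac{1}{16}$, which removes both the $\tfrac{1}{16}\delta_{m,0}$ in $L_m^\sigma$ and the Ramond central term $-\tfrac18\delta_{m,0}\delta_{n,0}$ in $\{G_m^\sigma,G_n^\sigma\}$ simultaneously (since $\{G_0,G_0\}=2L_0-\tfrac18$); for the second class $G_0$ is absent so there is no central term in the anticommutator, and the surviving $\tfrac{\hbar}{16}$ in $H_1$ is absorbed into the now-free constant $D_1$.
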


\begin{proof}
We start with the $\sigma$-twisted representation, see Lemma \ref{lem:twistedreps}. We represent the twisted bosonic modes and untwisted fermionic modes as:
\begin{align}
\forall r \in \mathbb{Z}_{\geq 0} + \frac{1}{2},& \qquad b_r^\sigma = \sqrt{\hbar} \frac{\partial}{\partial x^{r+\frac{1}{2}}}, \qquad b_{-r}^\sigma = \frac{r}{\sqrt{\hbar}} x^{r + \frac{1}{2}}, \nonumber\\
\forall r \in \mathbb{Z}_{\geq 0} + \frac{1}{2},& \qquad \psi_r^\sigma = \sqrt{\hbar} \frac{\partial}{\partial \theta^{r + \frac{1}{2}}}, \qquad \psi_{-r}^\sigma = \frac{1}{\sqrt{\hbar}} \theta^{r + \frac{1}{2}}.
\end{align}
Here there is no bosonic or fermionic zero mode. From Lemma \ref{lem:twistedreps}, the super Virasoro generators take the form \eqref{eq:SVsigma}:
\begin{subequations}
\begin{align}
L_m^\sigma =& \frac{1}{2} \sum_{r \in \mathbb{Z} + \frac{1}{2}} \altcolon b_r^\sigma b_{m-r}^\sigma \altcolon + \frac{1}{2}\sum_{r\in\mathbb{Z}+\frac{1}{2}}\left(r+\frac{m}{2}\right)\altcolon\psi^\sigma_{-r}\psi^\sigma_{m+r}\altcolon+ \frac{1}{16} \delta_{m,0} , \\
G_m^\sigma =& \sum_{r \in \mathbb{Z} + \frac{1}{2}} b_{m-r}^\sigma \psi_r^\sigma.
\end{align}
\end{subequations}
The super Virasoro algebra is in the Ramond sector.

For our first class of super quantum Airy structures, we consider the closed subalgebra $\{L^\sigma_m, G^\sigma_n \}$ with $m,n \geq N$, for any $N \geq 0$. It takes the form:
\begin{align}\label{eq:alll}
[L^\sigma_m, L^\sigma_n] =& (m-n) L^\sigma_{m+n}, \nonumber\\
[L^\sigma_m, G^\sigma_n] =& \left( \frac{m}{2} - n \right) G^\sigma_{m+n}, \nonumber\\
\{ G_m^\sigma, G^\sigma_n \} =& 2 L^\sigma_{m+n} - \frac{1}{8} \delta_{m,0} \delta_{n,0}.
\end{align}
To get rid of the central term, we redefine $L^\sigma_0 \mapsto L^\sigma_0 - \frac{1}{16}$, which does not change the rest of the algebra. We shift indices as usual to define the quadratic hamiltonians, for $i \in \mathbb{Z}_{\geq 1}$:
\begin{align}
H_i^2 :=& \hbar L^\sigma_{i+N-1} = \frac{\hbar}{2} \sum_{r \in \mathbb{Z} + \frac{1}{2}} \altcolon b_r^\sigma b_{i+N-1-r}^\sigma \altcolon + \frac{\hbar}{2}\sum_{r\in\mathbb{Z}+\frac{1}{2}}\left(r+\frac{i+N-1}{2}\right)\altcolon\psi^\sigma_{-r}\psi^\sigma_{i+N-1+r}\altcolon, \nonumber\\
F_i^2 :=& \hbar G^\sigma_{i+N-1} = \hbar \sum_{r \in \mathbb{Z} + \frac{1}{2}} b_{i+N-1-r}^\sigma \psi_r^\sigma,
\end{align}
which satisfy the commutation relations:
\begin{align}
[H_m^2, H_n^2] =& (m-n)H^2_{m+n+N-1}, \nonumber\\
[H_m^2, F_n^2] =& \left( \frac{m-N+1}{2} - n  \right) F^2_{m+n+N-1}, \nonumber\\
\{ F^2_m, F^2_n \} =& 2 H^2_{m+n+N-1}.
\end{align}
To create appropriate linear terms, we shift the bosonic modes $b^\sigma_{N - \frac{1}{2}} \mapsto b^\sigma_{N-\frac{1}{2}} + \frac{1}{\sqrt{\hbar}}$, which creates terms $H_i^1 = \sqrt{\hbar} b^\sigma_{i-\frac{1}{2}} = \hbar \frac{\partial}{\partial x^i}$ and $F_i^1 = \sqrt{\hbar} \psi^\sigma_{i-\frac{1}{2}} = \hbar \frac{\partial}{\partial \theta^i}$. This also creates a constant term $H_N^0 = \frac{1}{2}$ which we get rid of without changing the algebra. We can also add $D$-terms $\hbar D_i$ to $H_i$ for $i=1,\ldots,N$ without changing the algebra, which gives our first class of super quantum Airy structures.

For the second class, we consider a smaller closed subalgebra $\{L^\sigma_m, G^\sigma_n \}$ with $m \geq N$, $n \geq N+1$, for any $N \geq 0$. The algebra is still \eqref{eq:alll} but with no central term now. We shift indices as above: now $F_i$ starts with $i=2,3,\ldots$. We then shift the same bosonic modes to create appropriate linear terms, and introduce $D$-terms $\hbar D_i$ to $H_i$ for $i=1,\ldots,N+1$.  Note that we can add one more $D$-term, since $H_{N+1}$ does not appear anymore on the right-hand-side of the commutation relations. This gives a super quantum Airy structure, where $\theta^1$ is considered as an extra fermionic variable (since it does not appear in the linear terms). Note that in \eqref{eq:ttt} we absorbed the term $\frac{\hbar}{16} \delta_{i,1-N}$, which only appears for $N=0$, into the arbitrary constant $\hbar D_1$.

\end{proof}

\begin{remark}
Note that the case with $N=0$ for the second class of super quantum Airy structures in Proposition \ref{prop:SVsigma} is interesting. Setting $D_1=\frac{1}{16}$ to its natural value, we get a non-trivial partition function. The pure bosonic part of the Virasoro generators $H_i$ is in this case equivalent to the Virasoro operators that annihilate the Br\'ezin-Gross-Witten tau-function of the KdV hierachy (see \eqref{eq:bgw} and the discussion around there). This suggests that the partition function associated to this super quantum Airy structure may be a supersymmetric analog of the BGW tau-function, which would be worth investigating further.
\end{remark}

\begin{remark}
We also remark that for the second class of super quantum Airy structures in Proposition \ref{prop:SVsigma}, if the last arbitrary constant $D_{N+1}$ is set to zero, then we recover the same differential operators as the super quantum Airy structure in the first class, minus $F_1$. Then, by uniqueness of the partition function, we conclude that they share the same partition function, and that $F_1$ must also annihilate the partition function of the second class. Thus the super quantum Airy structures in the second class differ from for the first class only when $D_{N+1} \neq 0$.
\end{remark}

\subsubsection{Super Quantum Airy Structures from $\mu$-Twisted Representations of the Free Boson-Fermion VOSA}

We now consider the $\mu$-twisted representation of the free boson-fermion VOSA.

\begin{proposition}\label{prop:SVmu}
We represent the super Heisenberg algebra for the untwisted bosonic and twisted fermionic modes as:
\begin{align}
\forall m \in \mathbb{Z}_{\geq 1},& \qquad b_m^\mu = \sqrt{\hbar} \frac{\partial}{\partial x^{m}}, \qquad b_{-m}^\mu = \frac{m}{\sqrt{\hbar}} x^{m}, \qquad b_0^\mu = \sqrt{\hbar} \frac{\partial}{\partial x^0}, \nonumber\\
\forall m \in \mathbb{Z}_{\geq 1},& \qquad \psi_m^\mu = \sqrt{\hbar} \frac{\partial}{\partial \theta^{m}}, \qquad \psi_{-m}^\mu = \frac{1}{\sqrt{\hbar}} \theta^{m}, \qquad \psi_0^\mu = \frac{1}{\sqrt{2 \hbar}} \left( \theta^0 + \hbar \frac{\partial}{\partial \theta^0} \right).
\end{align}
Let $\{x^0, x^1, \ldots \}$ (even) and $\{\theta^1, \theta^2, \ldots \}$ (odd) be a basis for the super vector space $V$, with dual sets $\{y_0, y_1, \ldots \}$ and $\{\eta_1, \eta_2, \ldots \}$. Let $\widetilde{V} = V \oplus \mathbb{K}^{0|1}$, with $\{\theta^0 \}$ a basis for $\mathbb{K}^{0|1}$. Let us define the differential operators $H_i, F_j \in \widehat{\mathcal{W}}_\hbar(\widetilde{V})$, for $i \in \mathbb{Z}_{\geq 0}$ and $j \in \mathbb{Z}_{\geq 1}$:
\begin{subequations}
\begin{align}
H_i =&  \hbar \frac{\partial}{\partial x^i} +\frac{\hbar}{2} \sum_{j \in \mathbb{Z} } \altcolon b_j^\mu b_{i+N-1-j}^\mu \altcolon + \frac{\hbar}{2}\sum_{j \in\mathbb{Z}}\left(j+\frac{i+N-1}{2}\right)\altcolon\psi^\mu_{-j}\psi^\mu_{i+N-1+j}\altcolon,\\
F_j =& \hbar \frac{\partial}{\partial \theta^j} + \sum_{k \in \mathbb{Z}} b_{j+N-1-k}^\mu \psi_k^\mu,
\end{align}
\end{subequations}
which form a representation of the subalgebra of the super Virasoro algebra in the Ramond sector:
\begin{align}\label{eq:algmu}
[H_m, H_n] =& (m-n)H_{m+n+N-1}, \nonumber\\
[H_m, F_n] =& \left( \frac{m-N+1}{2} - n  \right) F_{m+n+N-1}, \nonumber\\
\{ F_m, F_n \} =& 2 H_{m+n+N-1}.
\end{align}

\begin{enumerate}
\item Let $N$ be any integer $N \geq 1$. The linear operator $H: V^* \to \widehat{\mathcal{W}}_\hbar(\widetilde{V})$ defined by, for all $i \in \mathbb{Z}_{\geq 0}$ and $j \in \mathbb{Z}_{\geq 1}$:
\begin{equation}\label{eq:saispus}
H(y_i) = H_i  + \hbar D_i \delta_{i\leq N-1}, \qquad H(\eta_j) = F_j,
\end{equation}
for arbitrary constants $D_i$, $i=0,\ldots,N-1$, forms a super quantum Airy structure as a representation of the algebra \eqref{eq:algmu}.

\item Let $N$ be any integer $N \geq 0$. The linear operator $H: V^* \to \widehat{\mathcal{W}}_\hbar(\widetilde{V})$ defined by, for all $i \in \mathbb{Z}_{\geq 1}$:
\begin{align}
H(y_i) =& H_i + \hbar D_i \delta_{i \leq N}, \qquad H(\eta_i) = F_i, \nonumber\\
H(y_0) =&  \hbar \frac{\partial}{\partial x^0} + \frac{\hbar^2}{2} C_0 \frac{\partial^2}{\partial (x^0)^2} + \hbar D_0,
\end{align}
for arbitrary constants $D_i$, $i=0,\ldots,N$ and $C_0$, forms a super quantum Airy structure as a representation of the algebra \eqref{eq:algmu} extended by:
\begin{equation}
[H(y_0), H(y_i)] = [H(y_0), H(\eta_i)] = 0.
\end{equation}

\end{enumerate}
We note that both of these super quantum Airy structures have an extra fermionic coordinate $\theta^0$. We remark that we could also consider a third case, where we set the bosonic zero mode $b_0^\mu = 0$, as in Proposition \ref{prop:untwisted}. But as it will be equivalent to the special case of case (2) with $C_0 = D_0 = 0$ we do not consider it separately.
\end{proposition}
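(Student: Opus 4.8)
The plan is to follow verbatim the three-step strategy already used in Propositions \ref{prop:untwisted}, \ref{prop:SVuntwisted} and \ref{prop:SVsigma}, now starting from the $\mu$-twisted representation of the free boson-fermion VOSA. Concretely, I would represent the (super) Heisenberg modes as the stated differential operators, insert them into the $\mu$-twisted super Virasoro modes \eqref{eq:SVmu} of Lemma \ref{lem:twistedreps} to obtain degree-$2$ operators $\hbar L_m^\mu$ and $\hbar G_m^\mu$ in the Ramond sector, restrict to a closed subalgebra whose generators can be brought into the form of Definition \ref{d:SQAS} by shifting a single bosonic mode, and finally discard the resulting constant and append the arbitrary $D$-terms (and, in case (2), the commuting auxiliary operator $\mathcal{H}_0$).

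The one genuinely new ingredient, and the heart of the argument, is the fermionic zero mode $\psi_0^\mu$, which exists precisely because $\mu$ twists the fermion to integer moding. First I would verify that the assignment in the statement represents the Clifford relation $\{\psi_r^\mu,\psi_s^\mu\}=\delta_{r,-s}$. The only non-routine case is $r=s=0$: using $(\theta^0)^2=0$, $(\partial_{\theta^0})^2=0$ and $\theta^0\partial_{\theta^0}+\partial_{\theta^0}\theta^0=1$, one computes
\begin{equation}
\left( \psi_0^\mu \right)^2 = \frac{1}{2\hbar}\left( \theta^0 + \hbar\, \partial_{\theta^0} \right)^2 = \frac{1}{2},
\end{equation}
so that $\{\psi_0^\mu,\psi_0^\mu\}=1$, as required. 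This is exactly where the extra fermionic variable $\theta^0$ becomes indispensable: the zero mode cannot be realized on $\mathbb{K}[[V,\hbar]]$, and because it enters the quadratic parts of $H_i$ and $F_j$ whenever a summation index passes through $0$, the operators are forced to act on $\widetilde{V}=V\oplus\mathbb{K}^{0|1}$. The variable $\theta^0$ plays precisely the role of the extra fermionic dimension permitted by the framework, and—consistently with the remark following Definition \ref{d:SQAS}—there is no generator in the $\theta^0$ direction, so the second defining property is respected.

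Next, for case (1) with $N\geq 1$ I would select the closed subalgebra generated by $\{L_m^\mu\}_{m\geq N-1}$ together with $\{G_n^\mu\}_{n\geq N}$, and for case (2) the subalgebra with $m,n\geq N$ supplemented by the index-$0$ auxiliary generator; closure in each case is the statement that every generator occurring on the right-hand sides of the Ramond-sector relations (of the form \eqref{eq:alll}) has index in the allowed range. The only central contributions, the $\tfrac{\hbar}{16}\delta_{m,0}$ inside $L_0^\mu$ and the Ramond constant in $\{G_0,G_0\}$, are eliminated by the standard redefinition $L_0^\mu\mapsto L_0^\mu-\tfrac{1}{16}$ exactly as in the proof of Proposition \ref{prop:SVsigma}, or equivalently absorbed into the arbitrary $D_0$. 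Re-indexing $H_i=\hbar L_{i+N-1}^\mu$, $F_j=\hbar G_{j+N-1}^\mu$ yields the relations \eqref{eq:algmu}. Shifting $b_{N-1}^\mu\mapsto b_{N-1}^\mu+\tfrac{1}{\sqrt{\hbar}}$ creates the linear terms $H_i^1=\hbar\,\partial_{x^i}$ and $F_j^1=\hbar\,\partial_{\theta^j}$ without altering the algebra, producing only the spurious constant $H_{N-1}^0=\tfrac12$, which is removable because the operator $H_{N-1}$ occurs on no nonzero right-hand side of \eqref{eq:algmu}. Adding $\hbar D_i$ to those bosonic $H_i$ that are likewise absent from all right-hand sides (the fermionic $F_j$ can carry no even constant), and appending $\mathcal{H}_0$ in case (2), brings the generators to the required form.

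The main obstacle is not any single computation but the bookkeeping around the zero mode: one must confirm that, with $\psi_0^\mu$ realized as above, the super Virasoro relations in the Ramond sector genuinely hold on $\mathbb{K}[[\widetilde{V},\hbar]]$, and that the $\theta^0$-dependent pieces entering $H_i$ and $F_j$ do not spoil closure of \eqref{eq:algmu} after the redefinitions and shifts. Once evenness, continuity, degree $2$, and \eqref{eq:algmu} are checked—the last reducing to Lemma \ref{lem:twistedreps} and the closure argument above—the operators satisfy Definition \ref{d:SQAS}, and the remaining verifications are routine and parallel to Proposition \ref{prop:SVsigma}.
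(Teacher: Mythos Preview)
Your proposal is correct and follows essentially the same approach as the paper: the paper likewise starts from the $\mu$-twisted modes of Lemma \ref{lem:twistedreps}, chooses the subalgebras $\{L^\mu_m\}_{m\ge N-1}\cup\{G^\mu_n\}_{n\ge N}$ for case (1) and $\{L^\mu_m,G^\mu_n\}_{m,n\ge N}$ for case (2), re-indexes via $H_i=\hbar L^\mu_{i+N-1}$ and $F_j=\hbar G^\mu_{j+N-1}$, shifts $b^\mu_{N-1}\mapsto b^\mu_{N-1}+\tfrac{1}{\sqrt{\hbar}}$, discards the constant $H^0_{N-1}=\tfrac12$, adds the allowed $D$-terms, and in case (2) appends the commuting $\mathcal{H}_0$. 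Your explicit verification of $(\psi_0^\mu)^2=\tfrac12$ and the accompanying discussion of why $\theta^0$ must be an extra fermionic variable are not spelled out in the paper but are entirely correct and helpful.
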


\begin{proof}
We start with the $\mu$-twisted representation, see Lemma \ref{lem:twistedreps}. We represent the untwisted bosonic modes and twisted fermionic modes as:
\begin{align}
\forall m \in \mathbb{Z}_{\geq 1},& \qquad b_m^\mu = \sqrt{\hbar} \frac{\partial}{\partial x^{m}}, \qquad b_{-m}^\mu = \frac{m}{\sqrt{\hbar}} x^{m}, \qquad b_0^\mu = \sqrt{\hbar} \frac{\partial}{\partial x^0}, \nonumber\\
\forall m \in \mathbb{Z}_{\geq 1},& \qquad \psi_m^\mu = \sqrt{\hbar} \frac{\partial}{\partial \theta^{m}}, \qquad \psi_{-m}^\mu = \frac{1}{\sqrt{\hbar}} \theta^{m}, \qquad \psi_0^\mu = \frac{1}{\sqrt{2 \hbar}} \left( \theta^0 + \hbar \frac{\partial}{\partial \theta^0} \right).
\end{align}
We have both bosonic and fermionic zero modes. In particular, we expect to get only super quantum Airy structures with an extra fermionic coordinate, namely $\theta^0$. 

From Lemma \ref{lem:twistedreps}, the super Virasoro generators (in the Ramond sector) take the form:
\begin{subequations}
\begin{align}
L_m^\mu =& \frac{1}{2} \sum_{i \in \mathbb{Z} } \altcolon b_i^\mu b_{m-i}^\mu \altcolon + \frac{1}{2}\sum_{i \in\mathbb{Z}}\left(i+\frac{m}{2}\right)\altcolon\psi^\mu_{-i}\psi^\mu_{m+i}\altcolon+ \frac{1}{16} \delta_{m,0} , \\
G_m^\mu =& \sum_{i \in \mathbb{Z}} b_{m-i}^\mu \psi_i^\mu.
\end{align}
\end{subequations}

For our first class of super quantum Airy structures, we consider the subalgebra $\{L_m, G_n \}$ with $m \geq N-1$, $n \geq N$, for any $N \geq 1$. We shift indices so that bosonic operators are indexed by $\{0,1,2,\ldots \}$ and fermionic operators by $\{1,2,3,\ldots\}$. Thus we define quadratic hamiltonians:
\begin{align}\label{eq:shifthere}
H_i^2 :=& \hbar L^\mu_{i+N-1} =\frac{\hbar}{2} \sum_{j \in \mathbb{Z} } \altcolon b_j^\mu b_{i+N-1-j}^\mu \altcolon + \frac{\hbar}{2}\sum_{j \in\mathbb{Z}}\left(j+\frac{i+N-1}{2}\right)\altcolon\psi^\mu_{-j}\psi^\mu_{i+N-1+j}\altcolon+ \frac{\hbar}{16} \delta_{i,-N+1} , \nonumber\\
F_i^2 :=& \hbar G^\mu_{i+N-1} = \sum_{j \in \mathbb{Z}} b_{i+N-1-j}^\mu \psi_j^\mu,
\end{align}
with commutation relations:
\begin{align}
[H_m^2, H_n^2] =& (m-n)H^2_{m+n+N-1}, \nonumber\\
[H_m^2, F_n^2] =& \left( \frac{m-N+1}{2} - n  \right) F^2_{m+n+N-1}, \nonumber\\
\{ F^2_m, F^2_n \} =& 2 H^2_{m+n+N-1}.
\end{align}
To create appropriate linear terms, we shift the bosonic modes $b^\mu_{N-1} \mapsto b^\mu_{N-1} + \frac{1}{\sqrt{\hbar}}$, which creates linear terms $H_i^1 = \sqrt{\hbar} b^\mu_i = \hbar \frac{\partial}{\partial x^i}$ and $F_i^1 = \sqrt{\hbar} \psi_i^\mu = \hbar \frac{\partial}{\partial \theta^i}$. It also creates a constant term $H_{N-1}^0 = \frac{1}{2}$ which we get rid of without changing the algebra. We also add $D$-terms $\hbar D_i$, $i=0,\ldots,N-1$ without changing the algebra. This creates a super quantum airy structure with the extra fermionic variable $\theta^0$. Note that in \eqref{eq:saispus}, we absorbed the term $\frac{\hbar}{16} \delta_{i, -N+1}$, which only appears for $N=1$, into the arbitrary constant $D_0$.

For our second class, we consider the larger subalgebra $\{L_m, G_n \}$ with $m,n \geq N$, for any $N \geq 0$. As in Proposition \ref{prop:SVsigma}, to get rid of the central term in the algebra we redefine $L_0^\mu \mapsto L_0^\mu - \frac{1}{16}$. We shift indices as in \eqref{eq:shifthere}, but now we consider $H_i$ and $F_i$ with $i=1,2,3,\ldots$.
We shift the bosonic modes $b^\mu_{N-1} \mapsto b^\mu_{N-1} + \frac{1}{\sqrt{\hbar}}$ as before, which creates the right linear terms. We can add $D$-terms $\hbar D_i$, $i=1,\ldots,N$ without changing the algebra. 

This is not however a super quantum Airy structure, since $x^0$ does not appear in the linear terms ($H_0$ is not include in the algebra). But, as in Proposition \ref{prop:untwisted}, we notice that the $H_m$ and $F_m$ only depend on $x^0$ through the bosonic zero mode, i.e. through $\frac{\partial}{\partial x^0}$. Thus we can introduce an auxiliary bosonic operator
\begin{equation}
\mathcal{H}_0 = \hbar \frac{\partial}{\partial x^0} + \frac{\hbar^2}{2} C_0 \frac{\partial^2}{\partial (x^0)^2} + \hbar D_0,
\end{equation}
which commutes with all $H_m$, $F_m$, $m \geq 1$. The result is a super quantum Airy structure with an extra fermionic variable $\theta^0$.

We could consider a third class, where we set the bosonic zero mode $b_0^\mu = 0$. But this will end up being equivalent to the second class with $C_0 = D_0 = 0$, and hence we do not consider it separately.

\end{proof}

\subsubsection{Super Quantum Airy Structures from $\rho$-Twisted Representations of the Free Boson-Fermion  VOSA}

We finally consider the $\rho$-twisted representation of the free boson-fermion VOA. We get:

\begin{proposition}\label{prop:SVrho}
Let $N$ be any integer $N \geq -1$. We represent the super Heisenberg algebra for the twisted bosonic and fermionic modes as:
\begin{align}
\forall r \in \mathbb{Z}_{\geq 0} + \frac{1}{2},& \qquad b_r^\rho = \sqrt{\hbar} \frac{\partial}{\partial x^{r+\frac{1}{2}}}, \qquad b_{-r}^\rho = \frac{r}{\sqrt{\hbar}} x^{r + \frac{1}{2}}, \nonumber\\
\forall m \in \mathbb{Z}_{\geq 1},& \qquad \psi_m^\rho = \sqrt{\hbar} \frac{\partial}{\partial \theta^{m}}, \qquad \psi_{-m}^\rho = \frac{1}{\sqrt{\hbar}} \theta^{m}, \qquad \psi_0^\rho = \frac{1}{\sqrt{2 \hbar}} \left( \theta^0 + \hbar \frac{\partial}{\partial \theta^0} \right).
\end{align}
Let $\{x^1, x^2, \ldots \}$ (even) and $\{ \theta^1, \theta^2, \ldots \}$ (odd) be a basis for the super vector space $V$, with dual sets $\{y_1, y_2, \ldots \}$ and $\{\eta_1, \eta_2, \ldots \}$. Let $\widetilde{V} = V \oplus \mathbb{K}^{0|1}$, with $\{\theta^0 \}$ a basis for $\mathbb{K}^{0|1}$. The linear operator $H: V^* \to \widehat{\mathcal{W}}_{\hbar}(\widetilde{V})$ defined by, for all $i \in \mathbb{Z}_{\geq 1}$ and $r \in \mathbb{Z}_{\geq 0} + \frac{1}{2}$:
\begin{subequations}
\begin{align}
H_i :=& H(y_i) = \hbar \frac{\partial}{\partial x^i} + \frac{\hbar}{2} \sum_{r \in \mathbb{Z} +\frac{1}{2} } \altcolon b_r^\rho b_{i+N-1-r}^\rho \altcolon + \frac{\hbar}{2}\sum_{j \in\mathbb{Z}}\left(j+\frac{i+N-1}{2}\right)\altcolon\psi^\rho_{-j}\psi^\rho_{i+N-1+j}\altcolon \nonumber\\
& \qquad \qquad + \frac{\hbar}{8} \delta_{i,1-N} + \hbar D_i \delta_{i \leq N+1}, \\
F_r :=& H(\eta_{r+ \frac{1}{2}} ) = \hbar \frac{\partial}{\partial \theta^{r + \frac{1}{2}}}+ \sum_{s \in \mathbb{Z} + \frac{1}{2}} b_{s}^\rho \psi_{r+N-s}^\rho,
\end{align}
\end{subequations}
with arbitrary constants $D_i$, $i=1,\ldots,N+1$, forms a super quantum Airy structure as a representation of the subalgebra of the super Virasoro algebra in the NS sector:
\begin{align}
[H_m, H_n] =& (m-n)H_{m+n+N-1}, \nonumber\\
[H_m, F_r] =& \left( \frac{m-N-1}{2} - r  \right) F_{m+r+N-1}, \nonumber\\
\{ F_r, F_s \} =& 2 H_{r+s+N+1}.
\end{align}
Note that this is a super quantum Airy structure with an extra fermionic variable $\theta^0$.
\end{proposition}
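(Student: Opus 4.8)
The plan is to follow verbatim the strategy already used in the proofs of Propositions~\ref{prop:untwisted}, \ref{prop:SVsigma} and \ref{prop:SVmu}, now applied to the $\rho$-twisted representation. First I would substitute the differential-operator representation of the twisted super Heisenberg algebra into the super Virasoro modes \eqref{eq:SVrho} of Lemma~\ref{lem:twistedreps}. The structural features to keep track of are that the $\rho$-twist simultaneously twists the boson (so the bosonic modes $b_r^\rho$ are half-integer and there is \emph{no} bosonic zero mode) and the fermion (so the fermionic modes $\psi_m^\rho$ are integer and \emph{do} include a zero mode $\psi_0^\rho$, realized on the variable $\theta^0$). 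This asymmetry is exactly what will force $\theta^0$ to be an extra fermionic variable in the sense of Definition~\ref{d:SQAS}, since no bosonic coordinate is missing and $\theta^0$ carries no corresponding generator.

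Next I would fix the subalgebra. Setting $H_i^2 := \hbar L^\rho_{i+N-1}$ for $i \in \mathbb{Z}_{\geq 1}$ and $F_r^2 := \hbar G^\rho_{r+N}$ for $r \in \mathbb{Z}_{\geq 0}+\tfrac12$ corresponds to the subalgebra $\{L_m^\rho : m \geq N\} \cup \{G_a^\rho : a \geq N+\tfrac12\}$ of the NS super Virasoro algebra \eqref{eq:superVir}, and the prefactor $\hbar$ makes these operators homogeneous of degree $2$. A direct computation from \eqref{eq:superVir} then yields $[H_m^2,H_n^2]=\hbar(m-n)H^2_{m+n+N-1}$, $[H_m^2,F_r^2]=\hbar(\tfrac{m-N-1}{2}-r)F^2_{m+r+N-1}$ and $\{F_r^2,F_s^2\}=2\hbar H^2_{r+s+N+1}$. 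One must check that this subalgebra is closed with vanishing central terms for every $N \geq -1$: at the boundary indices where the output would fall outside the chosen range, one of the structure constants $m-n$ or $\tfrac{m}{2}-a$, or a central coefficient $m(m^2-1)$ or $r^2-\tfrac14$, vanishes identically (this is precisely where the bound $N\geq -1$ enters, via $2N+1\geq N$). The single surviving $c$-number is the $\tfrac18$ inside $L_0^\rho$, which lands on $H_{1-N}$ as the degree-two term $\tfrac{\hbar}{8}\delta_{i,1-N}$.

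Then I would create the linear terms. The shift $b_{N-\frac12}^\rho \mapsto b_{N-\frac12}^\rho + \tfrac{1}{\sqrt{\hbar}}$ leaves all three families of relations untouched and produces simultaneously $H_i^1 = \sqrt{\hbar}\,b_{i-\frac12}^\rho = \hbar\,\partial_{x^i}$ and $F_r^1 = \sqrt{\hbar}\,\psi_{r+\frac12}^\rho = \hbar\,\partial_{\theta^{r+\frac12}}$, so condition~(2) of Definition~\ref{d:SQAS} is met for both the even and the odd generators. The same shift produces an unwanted constant $H_N^0=\tfrac12$, which can be discarded because $H_N$ never appears on a right-hand side: for $m,n\geq 1$ (with $m\neq n$) one has $m+n+N-1 \geq N+2$, and for $r,s\geq\tfrac12$ one has $r+s+N+1\geq N+2$. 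This same counting shows that every $H_i$ with $i \leq N+1$ is absent from all right-hand sides, so the $D$-terms $\hbar D_i\,\delta_{i\leq N+1}$ may be added freely without disturbing the algebra. Involutivity (condition~(1)) is then automatic, the structure being subalgebraic, and evenness together with continuity of $L$ are immediate from the sparseness of the super Virasoro structure constants.

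I expect the main obstacle to be bookkeeping rather than anything conceptual: one has to verify with care that the \emph{combined} boson-and-fermion twist still closes the NS subalgebra with no central terms all the way down to $N=-1$, where the vanishing of structure or central constants is what rescues an otherwise out-of-range generator, and one must confirm that the fermionic zero mode $\psi_0^\rho$ feeds $\theta^0$ into the operators without introducing any generator $H_0$ or $F_0$. Once that is established, $\theta^0$ is genuinely the extra fermionic coordinate of $\widetilde V = V \oplus \mathbb{K}^{0|1}$, and Theorem~\ref{thm:free_energy} guarantees the associated free energy and partition function, completing the proof.
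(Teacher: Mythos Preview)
Your proposal is correct and follows essentially the same approach as the paper: start from the $\rho$-twisted super Virasoro modes of Lemma~\ref{lem:twistedreps}, take the subalgebra $\{L_m^\rho:m\geq N\}\cup\{G_r^\rho:r\geq N+\tfrac12\}$, shift indices to $H_i^2=\hbar L^\rho_{i+N-1}$ and $F_r^2=\hbar G^\rho_{r+N}$, perform the shift $b^\rho_{N-\frac12}\mapsto b^\rho_{N-\frac12}+\tfrac{1}{\sqrt\hbar}$ to create the required linear terms, discard the spurious constant $H_N^0=\tfrac12$, and add $D$-terms for $i\leq N+1$. Your bookkeeping of the closure conditions and the role of $\theta^0$ as the extra fermionic variable is in fact more explicit than the paper's own proof.
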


\begin{proof}
We start with the $\rho$-twisted representation, see Lemma \ref{lem:twistedreps}. We represent the twisted bosonic modes and twisted fermionic modes as:
\begin{align}
\forall r \in \mathbb{Z}_{\geq 0} + \frac{1}{2},& \qquad b_r^\rho = \sqrt{\hbar} \frac{\partial}{\partial x^{r+\frac{1}{2}}}, \qquad b_{-r}^\rho = \frac{r}{\sqrt{\hbar}} x^{r + \frac{1}{2}}, \nonumber\\
\forall m \in \mathbb{Z}_{\geq 1},& \qquad \psi_m^\rho = \sqrt{\hbar} \frac{\partial}{\partial \theta^{m}}, \qquad \psi_{-m}^\rho = \frac{1}{\sqrt{\hbar}} \theta^{m}, \qquad \psi_0^\rho = \frac{1}{\sqrt{2 \hbar}} \left( \theta^0 + \hbar \frac{\partial}{\partial \theta^0} \right).
\end{align}
There is no bosonic zero mode here, but there is a fermionic one. Hence we expect to get only super quantum Airy structures with an extra fermionic coordinate, namely $\theta^0$.

From Lemma \ref{lem:twistedreps}, the super Virasoro generators (in the NS sector) take the form:
\begin{subequations}
\begin{align}
L_m^\rho =& \frac{1}{2} \sum_{r \in \mathbb{Z} +\frac{1}{2} } \altcolon b_r^\rho b_{m-r}^\rho \altcolon + \frac{1}{2}\sum_{i \in\mathbb{Z}}\left(i+\frac{m}{2}\right)\altcolon\psi^\rho_{-i}\psi^\rho_{m+i}\altcolon+ \frac{1}{8} \delta_{m,0} , \\
G_r^\rho =& \sum_{s \in \mathbb{Z} + \frac{1}{2}} b_{s}^\rho \psi_{r-s}^\rho,
\end{align}
\end{subequations}
We consider the closed subalgebras $\{L^\rho_m, G^\rho_r \}$ with $m \geq N$ and $r \geq N+\frac{1}{2}$, for any $N \geq -1$, which have no central term. We shift indices to index the bosonic generators with $1,2,\ldots$ and the fermionic ones with $\frac{1}{2}, \frac{3}{2}, \ldots$. We define the quadratic hamiltonians, for $i \in \mathbb{Z}_{\geq 1}$ and $r \in \mathbb{Z}_{\geq 0} + \frac{1}{2}$:
\begin{align}
H_i^2 :=& \hbar L^\rho_{i+N-1} =  \frac{\hbar}{2} \sum_{r \in \mathbb{Z} +\frac{1}{2} } \altcolon b_r^\rho b_{i+N-1-r}^\rho \altcolon + \frac{\hbar}{2}\sum_{j \in\mathbb{Z}}\left(j+\frac{i+N-1}{2}\right)\altcolon\psi^\rho_{-j}\psi^\rho_{i+N-1+j}\altcolon+ \frac{\hbar}{8} \delta_{i,1-N} , \nonumber\\
F_r^2 :=& \hbar G^\rho_{r+N} =\sum_{s \in \mathbb{Z} + \frac{1}{2}} b_{s}^\rho \psi_{r+N-s}^\rho,
\end{align}
which satisfy the commutation relations:
\begin{align}
[H_m^2, H_n^2] =& (m-n)H^2_{m+n+N-1}, \nonumber\\
[H_m^2, F_r^2] =& \left( \frac{m-N-1}{2} - r  \right) F^2_{m+r+N-1}, \nonumber\\
\{ F^2_r, F^2_s \} =& 2 H^2_{r+s+N+1}.
\end{align}
To create the appropriate linear terms, we shift the bosonic modes $b^\rho_{N-\frac{1}{2}} \mapsto b^\rho_{N-\frac{1}{2}} + \frac{1}{\sqrt{\hbar}}$, which creates terms $H_i^1 = \sqrt{\hbar} b^\rho_{i - \frac{1}{2}} = \hbar \frac{\partial}{\partial x^i}$ and $F_r^1 = \sqrt{\hbar} \psi^\rho_{r + \frac{1}{2}} = \hbar \frac{\partial}{\partial \theta^{r + \frac{1}{2}}}$. It also creates a constant term $H_N^0 = \frac{1}{2}$ which we get rid of as usual without changing the algebra. We can add $D$-terms $\hbar D_i$ to $H_i$ for $i=1,\ldots,N+1$, and we get our class of super quantum Airy structures. Note that for all of those, $\theta^0$ appears as an extra fermionic variable.
\end{proof}

\begin{remark}
The cases with $N=-1$ and $N=0$ of Proposition \ref{prop:SVrho} are interesting. In the case $N=-1$, the bosonic part of the Virasoro generators $L_i$ is almost the same as the $N=-1$ case of Proposition \ref{prop:twisted}, which reproduces the Virasoro constraints satisfied by the Kontsevich-Witten tau function of KdV. The only difference is in the term $\frac{\hbar}{8}$, which is twice that of the bosonic case.

For $N=0$ and $D_1=0$, the bosonic part of $L_i$ also recovers almost exactly the $N=0$ case of Proposition \ref{prop:twisted}, which gives the Virasoro constraints satisfied by the BGW tau function of KdV. Again, the only difference is in the term $\frac{\hbar}{8}$ which is twice its bosonic counterpart, but given that there is an arbitrary constant $D_1$ it could be adjusted accordingly.

Thus both cases may be thought of as supersymmetric generalizations of Kontsevich-Witten and BGW, which deserves further investigation.
\end{remark}


\section{Conclusion and Open Questions}   \label{sec-conclusion}

In this paper we have defined super quantum Airy structures, as a natural supersymmetric generalization of quantum Airy structures. We showed existence and uniqueness of the associated free energy, which is computed by a topological recursion that can be understood as a supersymmetric generalization of the Chekhov-Eynard-Orantin (CEO) topological recursion. We constructed a number of examples of finite-dimensional and infinite-dimensional quadratic super quantum Airy structures.

There are many open questions that deserve further investigation. Here we propose a few, in random order:

\begin{itemize}
\item
For quantum Airy structures that come from the CEO topological recursion, it is well known that the free energies are related to intersection numbers over the moduli space of curves \cite{DBOSS, Einter}. Is there an analogous statement for some classes of super quantum Airy structures? Do they provide invariants of moduli spaces of supersymmetric algebraic curves or super-Riemann surfaces?
\item Super-Virasoro constraints have appeared in enumerative geometry in the context of invariants that involve odd cohomology classes, such as Gromov-Witten invariants for higher genus target curves \cite{Okounkov:2003rf}. Perhaps the appearance of odd cohomology classes is key to uncovering the enumerative meaning of  super quantum Airy structures?
\item Stanford and Witten very recently proposed a supersymmetric generalization of Mirzakhani's recursion relation, in the context of JT gravity \cite{Stanford:2019vob}. Since Mirzakhani's recursion relation can be formulated as an example of a quantum Airy structure \cite{Eynard:2007fi}, are super quantum Airy structures related to the work of Stanford and Witten?
\item
The CEO topological recursion was originally discovered as a solution of the loop equations for matrix models \cite{CE,EO,EO2}. Is the supersymmetric generalization that we propose in this paper related to supereigenvalue models?
\item
In the case of the Airy and Bessel spectral curves, the partition functions calculated by the associated quantum Airy structures construct the Kontsevich-Witten and Br\'ezin-Gross-Witten tau functions for the KdV hierarchy \cite{DYZ}. Can an analogous statement be made for some of the examples of section \ref{sec:VOSA}, perhaps with respect to the super KdV hierarchy?
\item Super quantum Airy structures are $\mathbb{Z}_2$-graded quantum Airy structures. Can $G$-graded quantum Airy structures be defined for more general finite groups $G$?
\item While the CEO topological recursion can be understood as an example of a quantum Airy structure, it was originally formulated in terms of complex analysis of a spectral curve. Is there a similar class of super quantum Airy structures that can be understood as coming from ``super spectral curves''? If so, what is the formulation of the corresponding topological recursion in terms of the geometry of super spectral curves?
\item In this spirit, for a large class of spectral curves, the CEO topological recursion can be used to reconstruct the quantum curve and its associated wave-function \cite{Bouchard:2016obz,Gukov:2011qp}. Can our supersymmetric generalization of topological recursion be used to study super quantum curves \cite{Ciosmak:2017ofd,Ciosmak:2016wpx,Ciosmak:2017omd} ?
\item Can examples of infinite-dimensional super quantum Airy structures be constructed as representations of $N=2$ vertex operators super algebras? Or as representations of VOSAs more generally?
\item In this paper we focused on constructing examples of quadratic super quantum Airy structures. Are there interesting examples of higher order, finite-dimensional, super quantum Airy structures? In the infinite-dimensional case, it would certainly be interesting to generalize the higher order construction of \cite{BBCCN} in terms of representations of $W$-algebras to the supersymmetric realm.
\item In section \ref{s:classification} we proposed a classification scheme for quadratic, finite-dimensional super quantum Airy structures. Can this classification be carried out?
\end{itemize}

This list is certainly not exhaustive. But what is clear is that super quantum Airy structures are interesting conceptually, and that many questions remain. The time is ripe to understand their properties and applications
\footnote{During the review process of this manuscript, (parts of) the third, fourth, fifth, and seventh open questions in the list above were subsequently addressed in \cite{STR}. In particular, it was realized that (suitably polarized) super quantum Airy structures could be used to compute (parts of the) correlation functions for a variety of examples related to 2d supergravity. \cite{STR} also showed a relation between certain families of super quantum Airy structures and non-super quantum Airy structures, which provides an interpretation of the work of Stanford and Witten \cite{Stanford:2019vob} in terms of super quantum Airy structures. We are hoping to solve other open questions in
the near future.}.


\appendix
\section{Computational Proof of Existence}\label{proof of existence}

To prove existence of the free energy associated to a super quantum Airy structure, we need to show that the recursive formulae \eqref{F(i)} and \eqref{F(0)} produce $F_{g,n}[a_1,...,a_n]$ that are $\mathbb{Z}_2$-symmetric under permutations of indices.

Since $\mathbb{Z}_2$-symmetry for permutations among indices in $\Phi$ for $F_{g,n+1}[i,\Phi]$ is obvious from \eqref{F(i)}, the only nontrivial symmetry is for the permutation of $i$ and any other index in $\Phi$. We thus have to show that
\begin{equation}
F_{g,n+2}(i,j,\Phi)=(-1)^{|i||j|}F_{g,n+2}(j,i,\Phi).\label{existence1}
\end{equation}

Let us prove \eqref{existence1} by induction on $2g+n\geq1$. For $2g+n=1$, we have $F_{0,3}(i,j,a)=A_{ija}$, hence \eqref{existence1} holds thanks to \eqref{A}. For $2g+n=2$ there are two cases: $F_{0,4}(i,j,a_1,a_2)$ and $F_{1,2}(i,j)$. It follows from \eqref{F(i)} that $F_{0,4}(i,j,a_1,a_2)$ becomes
\begin{align}
&F_{0,4}(i,j,a_1,a_2) = \nonumber\\
&=B_{ij}^0F_{0,3}(0,a_1,a_2)+(-1)^{|i||j|}\left((-1)^{|i||j|}B_{ij}^kA_{ka_1a_2}+B_{ia_1}^bA_{jba_2}+(-1)^{|a_1||a_2|}B_{ia_2}^bA_{jba_1}\right)\nonumber\\
&=(-1)^{|i||j|}B_{ji}^0F_{0,3}(0,a_1,a_2)+\left((-1)^{|i||j|}B_{ji}^kA_{ka_1a_2}+B_{ja_1}^bA_{iba_2}+(-1)^{|a_1||a_2|}B_{ja_2}^bA_{iba_1}\right)\nonumber\\
&=(-1)^{|i||j|}F_{0,4}(j,i,a_1,a_2),
\end{align}
where we used \eqref{f} and \eqref{BA} for the third equality. Similarly, for $F_{1,2}(i,j)$ we have:
\begin{align}
F_{1,2}(i,j)=&(-1)^{|i||j|}\left(\frac{1}{2}C_i^{bc}A_{jcb}+(-1)^{|i||j|}B_{ij}^cD_c\right)\nonumber\\
=&(-1)^{|i||j|}(-1)^{|i||j|}\left(\frac{1}{2}C_{j}^{bc}A_{icb}+(-1)^{|i||j|}B_{ji}^cD_c\right)\nonumber\\
=&(-1)^{|i||j|}F_{1,2}(j,i),
\end{align}
where we used $F_{1,1}(0)=0$ and \eqref{f} for the second equality and \eqref{CA-BD} for the fourth equality. Therefore, the $F_{g,n+2}(i,j,\Phi)$ are $\mathbb{Z}_2$-symmetric for $2g+n=2$ as well.

Now let us assume $\mathbb{Z}_2$-symmetry for $F_{h,m+2}(i,j,\Phi)$ up to $1\leq2h+m<2g+n$. \eqref{F(i)} can be rewritten as:
\begin{align}
&\hspace{-0.3cm} F_{g,n+2}(i,j,\Phi) =\sum_{c\geq 0}B_{ij}^cF_{g,n+1}(c,\Phi)+\sum_{k=1}^n\sigma_{a_k\subset\{j,\Phi\}}\sum_{c\geq 0}B_{ia_k}^cF_{g,n+1}(c,j,\Phi\backslash a_k)\nonumber\\
&+\frac{1}{2}\sum_{b,c\geq 0}C_i^{bc}F_{g-1,n+3}(c,b,j,\Phi) +\sum_{b,c\geq 0}C_i^{bc}\sum_{g_1+g_2=g}\sum_{\Phi_1\cup \Phi_2=\Phi}\sigma_{\Phi_1\subset\Phi}F_{g_1,n_1+1}(b,j,\Phi_1)F_{g_2,n_2+1}(c,\Phi_2)\nonumber\\
&=B_{ij}^0F_{g,n+1}(0,\Phi)+\sum_{q=1}^dB_{ij}^qF_{g,n+1}(q,\Phi)
+\sum_{k=1}^n(-1)^{|i||j|}\sigma_{a_k\subset\Phi}\sum_{c\geq 0}B_{ia_k}^cF_{g,n+1}(j,c,\Phi\backslash a_k)\nonumber\\
&+\frac{1}{2}(-1)^{|i||j|}\sum_{b,c\geq 0}C_i^{bc}F_{g-1,n+3}(j,c,b,\Phi)\nonumber\\
&+\sum_{b,c\geq 0}C_i^{bc}(-1)^{|j||b|}\sum_{g_1+g_2=g}\sum_{\Phi_1\cup \Phi_2=\Phi}\sigma_{\Phi_1\subset\Phi}F_{g_1,n_1+1}(j,b,\Phi_1)F_{g_2,n_2+1}(c,\Phi_2).\label{existence2}
\end{align}
The first term in \eqref{existence2} is $\mathbb{Z}_2$-symmetric in $(i,j)$ thanks to \eqref{f}. For the second term, we apply \eqref{F(i)} to $F_{g,n+1}(q,\Phi)$. For the other terms, we substitute \eqref{F(i)} into $F_{h,m'+1}(j,\Phi')$ for any $h,\Phi'$ whenever $j$ is the first index. The computation becomes rather tedious; the final result after simplification is summarized below. The terms highlighted in red are $\mathbb{Z}_2$-symmetric in $(i,j)$ thanks to Lemma~\ref{lem:SAS}, while the other terms are manifestly $\mathbb{Z}_2$-symmetric. Therefore, $\mathbb{Z}_2$-symmetry of the $F_{g,n}[a_1,...,a_n]$ produced by  \eqref{F(i)} and \eqref{F(0)}  is proved by induction, and hence the unique free energy associated to a super quantum Airy structure exists.

\begin{align}
&\hspace{-0.3cm} F_{g,n+2}(i,j,\Phi) = \nonumber\\
&=\textcolor{red}{B_{ij}^0} F_{g,n+1}(0,\Phi)+(-1)^{|i||j|}\textcolor{red}{\left(B_{ia_1}^bA_{jba_2}+(-1)^{|a_1||a_2|}B_{ia_2}^bA_{jba_1}+(-1)^{|i||j|}B_{ij}^lA_{la_1a_2}\right)}\delta_{n,2}\delta_{g,0}\nonumber\\
&+(-1)^{|i||j|}\textcolor{red}{\left(\frac{1}{2}C_i^{bc}A_{jcb}+(-1)^{|i||j|}B_{ij}^lD_l\right)}\delta_{n,0}\delta_{g,1}\nonumber\\
&+(-1)^{|i||j|}\sum_{k=1}^n\sum_{c\geq 0}\sigma_{a_k\subset\Phi}F_{g,n}(c,\Phi\backslash a_k)\color{red}{\left(B_{ia_k}^bB_{jb}^c+(-1)^{|c||a_k|}C_i^{cb}A_{jba_k}+(-1)^{|i||j|}B_{ij}^pB_{pa_k}^c \right)} \nonumber\\
&+\frac{1}{2}(-1)^{|i||j|}\sum_{d,c\geq 0}F_{g-1,n+2}(d,c,\Phi)\color{red}{\Bigl(C_i^{cb}B_{jb}^d+(-1)^{|c||d|}C_i^{db}B_{jb}^c+(-1)^{|i||j|}B_{ij}^pC_p^{cd}\Bigr)}\nonumber\\
&+\frac{1}{2}(-1)^{|i||j|}\sum_{g_1+g_2=g}\sum_{\Phi_1\cup \Phi_2=\Phi}\sum_{c,d\geq 0}\sigma_{\Phi_1\subset\Phi}F_{g_1,n_1+1}(c,\Phi_1)F_{g_2,n_2+1}(d,\Phi_2)\nonumber\\
&\;\;\;\;\times\color{red}{\Bigl(C_i^{cb}B_{jb}^d+(-1)^{|c||d|}C_i^{db}B_{jb}^c+(-1)^{|i||j|}B_{ij}^pC_p^{cd}\Bigr)}\nonumber \\
&+\frac{1}{2}\sum_{k,l=1}^n\sum_{b,c\geq 0}\sigma_{\{a_k,a_l\}\subset\Phi}(-1)^{|b||a_l|}F_{g,n}(c,b,\Phi\backslash \{a_k,a_l\})\left((-1)^{|i||j|}B_{ia_k}^bB_{ja_l}^c+B_{ja_k}^bB_{ia_l}^c\right)\nonumber\\
&+\frac{1}{2}\sum_{k=1}^n\sum_{g_1+g_2=g}\sum_{\Phi_1\cup \Phi_2=\Phi\backslash a_k}\sum_{b,c,d\geq 0}\sigma_{\{a_k,\Phi_1\}\subset\Phi}F_{g_1,n_1+2}(c,b,\Phi_1)F_{g_2,n_2+1}(d,\Phi_2)\nonumber\\
&\;\;\;\;\times\left(B_{ja_k}^bC_i^{cd}+(-1)^{|j||i|}B_{ia_k}^bC_j^{cd}\right)\nonumber\\
&+\frac{1}{2}\sum_{k=1}^n\sum_{b,c,d\geq 0}\sigma_{a_k\subset\Phi}F_{g-1,n+2}(b,d,c,\Phi\backslash a_k)\left(C_i^{cd}B_{ja_k}^b(-1)^{|i||b|}+(-1)^{|j||i|}B_{ia_k}^bC_j^{cd}(-1)^{|j||b|}\right)\nonumber\\
&+\frac{1}{2}\sum_{g_1+g_2+g_3=g}\sum_{\Phi_1\cup \Phi_2\cup \Phi_3=\Phi}\sum_{b,c,d,e\geq 0}\sigma_{\{\Phi_1,\Phi_2\}\subset\Phi}F_{g_1,n_1+2}(b,d,\Phi_1)F_{g_2,n_2+1}(c,\Phi_2)F_{g_3,n_3+1}(e,\Phi_3)\nonumber\\
&\;\;\;\;\times \left(C_i^{de}C_j^{bc}(-1)^{|j||e|}+(-1)^{|i||j|}C_j^{de}C_i^{bc}(-1)^{|i||e|}\right)\nonumber\\
&+\sum_{b,c,d,e\geq 0}\Bigl((-1)^{|i||j|}C_i^{de}C_j^{bc}+C_j^{de}C_i^{bc}\Bigr)\times\Biggl(\frac{1}{8}F_{g-2,n+4}(c,b,e,d,\Phi)\nonumber\\
&\hspace{10mm}+\frac{1}{2}\sum_{g_1+g_2=g-1}\sum_{\Phi_1\cup \Phi_2=\Phi}\sigma_{\Phi_1\subset\Phi}F_{g_1,n_1+3}(b,e,d,\Phi_1)F_{g_2,n_2+1}(c,\Phi_2)\nonumber\\
&\hspace{10mm}+\frac{1}{4}\sum_{g_1+g_2=g-1}\sum_{\Phi_1\cup \Phi_2=\Phi}(-1)^{|e||b|}\sigma_{\Phi_1\subset\Phi}F_{g_1,n_1+2}(b,d,\Phi_1)F_{g_2,n_2+2}(c,e,\Phi_2)\Biggr).
\end{align}

\newpage

\end{document}